\DeclareMathAlphabet\EuFrak{U}{euf}{m}{n}	
\SetMathAlphabet\EuFrak{bold}{U}{euf}{b}{n}	
\newcommand{\lra}{\leftrightarrow}
\newcommand{\da}{\downarrow}
\newcommand{\hra}{\hookrightarrow}
\newcommand{\ovl}{\overline}
\newcommand{\wa}{\widehat}
\newcommand{\wt}{\widetilde}
\newcommand{\sC}{{\it C*}-}
\newcommand{\bC} {{\mathbb C}}
\newcommand{\bR} {{\mathbb R}}
\newcommand{\bT} {{\mathbb T}}
\newcommand{\bSU} {{\mathbb{SU}}}
\newcommand{\bZ} {{\mathbb Z}}
\newcommand{\bN} {{\mathbb N}}
\newcommand{\bQ} {{\mathbb Q}}
\newcommand{\bK} {{\mathbb K}}
\newcommand{\ud}{{{\mathbb U}(d)}}
\newcommand{\eps}{\epsilon}
\newcommand{\vareps}{\xi}
\newcommand{\mD}{\mathcal D}
\newcommand{\mG}{\mathcal G}
\newcommand{\mO}{\mathcal O}
\newcommand{\mS}{\mathcal S}
\newcommand{\mV}{\mathcal V}
\newcommand{\efA}{\EuFrak{A}}
\newcommand{\efB}{\EuFrak{B}}
\newcommand{\efC}{\EuFrak{C}}
\newcommand{\efF}{\EuFrak{F}}
\newcommand{\efG}{\EuFrak{G}}
\newcommand{\efH}{\EuFrak{H}}
\newcommand{\efP}{\EuFrak{P}}
\newcommand{\efR}{\EuFrak{R}}
\newcommand{\efS}{\EuFrak{S}}
\newcommand{\efT}{\EuFrak{T}}
\newcommand{\efZ}{\EuFrak{Z}}
\newcommand{\bo}{{\partial_0 b}}
\newcommand{\bl}{{\partial_1 b}}
\newcommand{\ad}{{\mathrm{ad}}}
\newcommand{\obj}{{\bf obj \ }}
\newtheorem{thm}{Theorem}[section]
\newtheorem{cor}[thm]{Corollary}
\newtheorem{lem}[thm]{Lemma}
\newtheorem{defn}[thm]{Definition}
\newtheorem{rem}[thm]{Remark}
\theoremstyle{definition}
\newtheorem{ex}{Example}[section]
\theoremstyle{remark}
\numberwithin{equation}{section}
\begin{document}

\author{{\sf Ezio Vasselli}
                         \\{\sf ezio.vasselli@gmail.com}}

\title{Presheaves of superselection structures in curved spacetimes}
\maketitle

\begin{abstract}
We show that superselection structures on curved spacetimes, 
that are expected to describe quantum charges affected by the underlying geometry, 
are categories of sections of presheaves of symmetric tensor categories.
When an embedding functor is given,
the superselection structure is a Tannaka-type dual of a locally constant group bundle, 
which hence becomes a natural candidate for the role of gauge group.
Indeed, we show that any locally constant group bundle (with suitable structure group)
acts on a net of \sC algebras fulfilling normal commutation relations
on an arbitrary spacetime.
We also give examples of gerbes of \sC algebras, defined by Wightman fields and
constructed using projective representations of the fundamental group of the spacetime,
that we propose as solutions for the problem that existence and uniqueness of the 
embedding functor are not guaranteed.
\end{abstract}

\tableofcontents
\markboth{Contents}{Contents}

\section{Introduction.}
\label{intro}

In the ordinary approach to the theory of particle physics quantum fields are fundamental objects, 
but do not necessarily yield quantum observables because the property of gauge invariance 
with respect to a symmetry group must be imposed.

In algebraic quantum field theory this point of view is subverted:
at a first stage only quantum observables are considered, 
in the spirit that these contain all the physical informations of the system under consideration.
Thus one could ask whether the observable quantities determine the quantum fields and the gauge group,
and the answer to this question is a pivotal result:
quantum fields and gauge group can be reconstructed, 
starting from the observable algebra and the set, called the superselection structure, of its physically relevant Hilbert space representations
({\em sectors}), that are interpreted as quantum charges (\cite{DR90}).
A celebrated mathematical byproduct is the Doplicher-Roberts duality for compact groups vs. 
{\em abstract} symmetric tensor categories, that generalizes Tannaka-Krein duality (\cite{DR89}).

\smallskip

The mathematical scenario of the above construction is the one defined by Haag, Kastler and then Araki 
in the middle sixties (\cite[Chp.4]{Ara},\cite[Chp.III]{Haa}): 
we have a base $\Delta_{dc}$ of the Minkowski spacetime, the one of double cones, and, for any $o \in \Delta_{dc}$, a \sC algebra $A_o$ that is interpreted as the one of quantum observables localized within $o$. 
Starting from this interpretation, it is natural to require that there are inclusion *-morphisms
\begin{equation}
\label{eq.00}
\jmath_{o'o} : A_o \to A_{o'}
\ \ : \ \
\jmath_{o'' o'} \circ \jmath_{o'o} = \jmath_{o'' o}
\ \ , \ \
\forall o \subseteq o' \subseteq o'' \in \Delta_{dc}
\ .
\end{equation}
The pair $\efA := (A,\jmath)$ is known as the {\em observable net} and is usually -- but not always -- realized by means
of quantum fields in the sense of Wightman (\cite[\S 4.8]{Ara}). 
An important point is that $\Delta_{dc}$ is a directed {\em poset} (partially ordered set) when ordered under inclusion, 
and this implies that we can construct the inductive limit \sC algebra $\vec{A} := \lim_j (A,\jmath)$
that can be represented as a *-algebra of operators on a fixed Hilbert space. 
The superselection structure is then realized as a symmetric tensor \sC category $T$ of *-endomorphisms of $\vec{A}$, 
and characterized as the dual of the compact gauge group $G$.
This means, in particular, that any $\rho \in \obj T$ defines a finite-dimensional, unitary representation 
\begin{equation}
\label{eq.rep}
u_\rho : G \to UH_\rho
\end{equation}
such that $H_\rho$ appears in the field algebra as the $G$-vector space spanned by a multiplet of field operators 
$\psi_1 , \ldots , \psi_{{\mathrm{dim}}(u_\rho)}$,
see (\cite{DR90}).

\smallskip

Now, for generic spacetimes $M$ in algebraic quantum field theory it is customary to select a convenient base $\Delta$, 
according to the causal properties and symmetries of $M$.
In in this way one obtains, for example, 
the base of intervals with proper closure for the circle in conformal theory,
and the base of \emph{diamonds} generating the topology of a globally hyperbolic spacetime, 
that is the setting of general relativity
{\footnote{
In principle we may find more that a suitable base for the purposes of algebraic quantum field theory;
but the superselection structure turns out to be independent of the base up to isomorphism, see \cite[Theorem 2.23]{Ruz05}.
}}.
We note that when elements of $\Delta$ are arcwise and simply connected (as intervals, double cones and diamonds),
we are able to recover base-independent invariants of $M$:
the fundamental group $\pi_1(M)$ (\cite{Ruz05}), 
the first singular and de Rham cohomology (\cite{RRV08})
and the category of flat bundles (\cite{RRV07,RRV08}).
In several cases of physical interest, in particular when $\pi_1(M)$ is not trivial
(anti-de Sitter spacetimes, the circle, Aharonov-Bohm solenoids),
$\Delta$ cannot be directed, see (\ref{eq.back1}), and $\efA$ is more generally a \sC precosheaf.

\smallskip

At the mathematical level, at the present time the most general superselection structure 
available on spacetimes with at least two spatial dimensions is the one defined by Brunetti and Ruzzi (\cite{BR08}):
it is expressed in terms of cocycles $z$ with values in the unitary precosheaf of $\efA$,
that play the role of \emph{charge transporters} in the terminology of \cite{Rob0,Rob}.
This generalizes previous works by Guido, Longo, Roberts, Verch and Ruzzi (\cite{Rob0,Rob,GLRV01,Ruz05}),
with the motivation of clarifying the relations between spacetime topology and structural properties of quantum field theory.
The difference between the approach of (\cite{BR08}) and the one of \cite{GLRV01,Ruz05} is that in the first case we have
that observable unitary phases (that we call \emph{holonomies}) appear when one performs a charge transport around a closed loop
(see \cite[\S 7]{BR08}).
This feature is typical in situations like the Aharonov-Bohm effect (\cite[\S 15.5]{Fey},\cite{ES49,AB59}), 
where the electromagnetic field gives rise to observable effects induced by the holonomy of the vector potential.
The sectors studied in \cite{GLRV01,Ruz05}, instead, have trivial holonomy in the above sense
and for this reason, following \cite{BR08}, we call them \emph{topologically trivial sectors}.

\smallskip

The Brunetti-Ruzzi superselection structure yields a symmetric tensor \sC category $Z^{1,\bullet}(\efA)$ describing 
charge composition, statistics and the particle-antiparticle correspondence (\emph{conjugate sectors}), 
contains the set $Z^{1,\bullet}_t(\efA)$ of topologically trivial sectors as a full subcategory, and coincides with it when $\pi_1(M)$ is trivial.
Nevertheless at the present time it is not clear how to perform a field reconstruction as in the cases of 
Minkowski spacetime (\cite{DR90}) and topologically trivial sectors in spacetimes where $\Delta$ is directed (\cite{GLRV01}), 
so we do not know whether it is possible to interpret the dual group of $Z^{1,\bullet}(\efA)$ as the gauge group.
We stress that this problem is open also for topologically trivial sectors in spacetimes with nontrivial fundamental group.

\

In the present paper we focus on globally hyperbolic spacetimes $M$ with at least two spatial dimensions.
This last condition is motivated by the fact that we are interested in sectors having ordinary Fermi-Bose (para)statistics
that, as well-known, in low-dimensional spacetimes are not ensured (braided statistics may appear, see \cite[\S 3]{GLRV01}).
Following the standard approach (\cite{GLRV01,Ruz05,BR08}) we take $\Delta$ as the base of diamonds
whose basic properties, related to the features of Cauchy surfaces in $M$, are discussed in \cite[\S 2.1]{GLRV01} and \cite[\S 3.2]{Ruz05}.

\smallskip

In one of our main results we prove that $Z^{1,\bullet}(\efA)$ is the category of sections of 
a presheaf $\efS^\bullet$ of symmetric tensor \sC categories with simple units (Theorem \ref{thm.br}).
Our intention is to use $\efS^\bullet$ as a tool to perform the field reconstruction 
(see the following paragraphs and \S \ref{sec.concl}), 
anyway our result also gives new light on the structure of $Z^{1,\bullet}(\efA)$: 
first, it allows us to assign to sectors new invariants of geometric nature, 
and, secondly, it suggests that the dual object describing the superselection structure may not to be merely a group.

In fact, as a consequence of abstract results for generic presheaves (\cite{Vas12}), 
we have that $Z^{1,\bullet}(\efA)$ can be characterized as a Tannaka-type dual $\wa \mG$ of 
a locally constant -- or, equivalently, flat -- group bundle 
\[
\mG \to M \ .
\]
%
Thus we have $\mG$-equivariant flat vector bundles in place of the unitary representations (\ref{eq.rep}), and, 
keeping in mind the above-mentioned interpretation of $G$-Hilbert spaces in terms of multiplets of fields, 
this suggests that the charged field operators that are expected to give rise to sectors in $Z^{1,\bullet}(\efA)$
should be visualized in terms of local sections of flat bundles. 
This idea is not new and has been formulated in the setting of quantum fields (\cite{Ish,AI79}), 
and is currently object of research (\cite{FL}).

An essential ingredient of the duality result is given by an embedding morphism 
$I : \efS^\bullet \to \efC$,
where $\efC$ is a presheaf of full subcategories of the one of Hilbert spaces.
This is a crucial point: $I$ may not exist or not to be unique,
so the same holds for $\mG$ which, when existing, will depend on the choice of $I$.
This does not exclude that there is a criterion to select a unique "physical" gauge group bundle $\mG_{phys}$, 
nevertheless our results show that the isomorphism
\begin{equation}
\label{eq.01}
Z^{1,\bullet}(\efA) \ \simeq \ \wa \mG_{phys} 
\end{equation}
does not suffice to characterize $\mG_{phys}$, differently from what happens in the Minkowski spacetime.

From the point of view of invariants, we prove that any sector $z$ defines a compact Lie group 
$G_z \subseteq \ud$, $d \in \bN$, 
and a projective representation 
\[
\chi_z : \pi_1(M) \to NG_z / G_z \ ,
\]
where $NG_z$ is the normalizer of $G_z$ in $\ud$. 
The importance of $\chi_z$ relies in the fact that an embedding of $\efS^\bullet$ defines, for any sector $z$, a lift
\[
\wa \chi_z : \pi_1(M) \to NG_z
\ \ , \ \
\chi_z = \wa \chi_z \, {\mathrm{mod}} G_z \ .
\]
Lifts of $\chi_z$ are classified in cohomological terms by a $NG_z$-valued 1--cochain $u_z$ 
defined on the simplicial set of $\Delta$ (see Theorem \ref{thm.gerbe}).
We interpret $\chi_z$ as a flat principal $NG_z / G_z$-bundle
and assign to it Cheeger-Chern-Simons characteristic classes living in the odd cohomology of $M$ (see (\ref{eq.ccs}))
that vanish when $z \in Z^{1,\bullet}_t(\efA)$.

\smallskip

To illustrate how group bundles yield gauge symmetries we construct field \sC precosheaves 
obtained by twisting a given "topologically trivial" field net with gauge group $G$ (Lemma \ref{lem.gauge1}).
The resulting \sC precosheaf is acted upon by a locally constant $G$-bundle and has representations with nontrivial holonomy in the sense of \cite{BR08},
with the feature that non-triviality of the holonomy is explicitly related with non-triviality of the gauge group bundle.
In accordance with the above-mentioned duality breaking, we have examples of inequivalent field precosheaves 
-- acted upon by inequivalent group bundles --
having the same observable net, this last represented on a fixed Hilbert space (Cor.\ref{cor.gauge1}).
In \S \ref{rem.qed} we discuss twists given by background classical potentials interacting with Dirac fields,
obtaining relative phases analogous to those appearing in the Aharonov-Bohm effect.
In \S \ref{sec.free},
by using generalized free fields we show that for every locally constant $G$-bundle (with suitable structure group) 
there is a field \sC precosheaf having it as a "group" of global gauge symmetries.
The choice of considering free fields is dictated by the current lack (at least at the author knowledge) of 
nonperturbative, local quantum fields in spacetimes with nontrivial $\pi_1(M)$ and spacetime dimension $\geq 3$;
but since our twisting procedure works independently of the spacetime dimension, 
in \S \ref{sec.BMT} we apply our result to conformal current algebras, where the cases $G = \bZ_{2n}$, $n \in \bN$, occur (\cite{BMT1,BMT2}).

\smallskip

The loss of uniqueness and -- in theory -- of existence of the dual group bundle can be avoided already at the mathematical level, 
moving ourselves  in the more convenient setting of gerbes over $\Delta$. 
This notion is illustrated in \S \ref{sec.concl}.
We shall show in a forthcoming paper (\cite{VasX}) that $\efS^\bullet$ defines a \emph{unique} dual group gerbe $\check \efG$, 
which {\em may} collapse to several, possibly inequivalent, group bundles having the same dual.
In this spirit, we prove in (\ref{eq.CON1a}) that the above-mentioned cochains $u_z$ define Lie group gerbes, 
that will turn out to be finite-dimensional representations of $\check \efG$.

On these grounds we expect that, at least \emph{a priori}, existence and uniqueness of
a field \sC precosheaf having $\efA$ as fixed-point precosheaf are not ensured.
Thus we give the notion of {\em $\check \efG$-\sC gerbe} over a poset, defined as a pair $\efF = (F,i)$ where 
$F := \{ F_o \}_{o \in \Delta}$ 
is a family of \sC algebras with *-monomorphisms 
$i_{o'o} : F_o \to F_{o'}$, $o \subseteq o'$, 
whose possible precosheaf structures are encoded by $\check \efG$ 
(see (\ref{eq.CON2}): an interesting point is that fixed-point algebras of a $\check \efG$-\sC gerbe always have a unique precosheaf structure).
The fact that a unique "field gerbe" $\efF$ may be reconstructed starting from $\efA$ and $Z^{1,\bullet}(\efA)$ is object of a work in progress, 
in which the presheaf $\efS^\bullet$ plays a crucial role (see \S \ref{sec.concl}).
In this setting, we expect that some invariant of $Z^{1,\bullet}(\efA)$, beside (\ref{eq.01}),
should allow us to pick a unique "physical" field \sC precosheaf in which $\efF$ would collapse, carrying a $\mG_{phys}$-action.
Constructions of $\check \efG$-\sC gerbes, 
defined by Wightman fields and labelled by projective representations of the fundamental group,
are made in Theorem \ref{thm.gauge3}.

\section{Mathematical background.}
\label{sec.back}

In the present section we recall some mathematical tools that shall be used in the sequel.
In the following lines we start by defining some notations.

Given a Hilbert space $H$, we write $UH$ for the unitary group and $BH$ for the \sC algebra of bounded operators on $H$.
In the sequel, families of Hilbert spaces $H = \{ H_o \}_{o \in I}$ will be considered
and we shall write $BH = \{ BH_o \}_{o \in I}$ for the corresponding families of \sC algebras;
no confusion should arise with the case of a single Hilbert space, since it should be clear from the context when
a family is being considered.

Given a $C^*$ or von Neumann algebra $R$, we denote the unitary group of $R$ by $UR$.

Given a Hilbert space $H$ and a family $R = \{ R_o \}_{o \in I}$ of \sC subalgebras of $BH$, we denote the smallest \sC subalgebra of $BH$
containing $\cup_o R_o$ by 
$C^* \{ R_o \}_{o \in I}$.

\subsection{Geometry of posets.}
\label{sec.back.posets}

Let $M$ be a connected manifold. 
We say that a base $\Delta$ for the topology of $M$ is \emph{good} whenever 
any $a \in \Delta$ has compact closure and is arcwise and simply connected.
For example, we have the Minkowski spacetime, that has a good base given by the set of double cones,
or, more generally, a globally hyperbolic spacetime, that has a good base base given by the set of diamonds (\cite{GLRV01,BR08}).
In the present paper we shall always consider good bases.

Now, $\Delta$ becomes a \emph{poset} (partially ordered set) when endowed with the 
order relation given by the inclusion, and encodes some geometric invariants of $M$.
As a first step we consider the simplicial set defined by $\Delta$, 
whose sets at lower degrees are defined as follows (\cite{RR06,RRV07}):
\begin{itemize}
\item $\Sigma_0(\Delta) := \Delta$, the set of $0$-simplices, that are the analogues of "points";
\item $\Sigma_1(\Delta)$, the set of $1$-simplices, is the set of "oriented lines" of the type
\[
b := ( \partial_1b , \partial_0b ; |b| ) 
\ \ : \ \ 
\partial_1b , \partial_0b , |b| \in \Delta \ \ {\mathrm{and}} \ \ 
\partial_1b , \partial_0b \subseteq |b| \ .
\]
\item Finally we define the set $\Sigma_2(\Delta)$ of $2$-simplices ("oriented triangles") 
with elements quadruples
$c = ( \partial_0c , \partial_1c , \partial_2c \in \Sigma_1(\Delta) ; |c| \in \Delta)$
such that
\begin{equation}
\label{eq.dd}
\partial_{hk}c := \partial_h \partial_k c = \partial_k \partial_{h+1}c
\ \ , \ \
|\partial_hc| \subseteq |c|
\ \ , \ \
\forall h \geq k
\ .
\end{equation}
\end{itemize}
Using $1$-simplices we define \emph{paths}, in terms of sequences
\[
p := b_n * \ldots * b_1
\ \ : \ \ 
\partial_0b_k = \partial_1b_{k+1}
\ , \
\forall k = 1  , \ldots , n \ .
\]
To emphasize the initial and ending "points" of $p$ we use the notation
\[
p : a \to o \ \ , \ \ a = \partial_1b_1 \ , \ o = \partial_0b_n \ .
\]
The \emph{opposite path} of $p$ is given by 
$\ovl p := \ovl b_1 * \ldots * \ovl b_n : o \to a$, 
where any $\ovl b_k$, $k = 1 , \ldots , n$, is the 1-simplex defined by 
\[
\partial_1 \ovl b_k := \partial_0 b_k \ \ , \ \
\partial_0 \ovl b_k := \partial_1 b_k \ \ , \ \
| \ovl b_k | := | b_k | \ .
\]
A \emph{path frame} with pole $\omega \in \Delta$ is a set 
\begin{equation}
\label{def.pf}
P_\omega = \{ p_{a \omega} : \omega \to a \ , \ \forall a \in \Delta \}
\end{equation}
such that $p_{\omega \omega} = b_\omega := (\omega,\omega;\omega) \in \Sigma_1(\Delta)$. 
To be concise we write
\[
p_{\omega a} := \ovl p_{a \omega} \ .
\]
Paths of the type $p : a \to o$, $p' : o \to e$ can be composed by defining
$p'*p := b'_m * \ldots * b'_1 * b_n * \ldots b_1$,
so in particular the set of paths of the type $p : a \to a$, called \emph{loops}, is a semigroup.
There is an equivalence relation on paths, called \emph{homotopy}, and the quotient
$\pi_1(\Delta)$
of the set of loops starting and ending at $a \in \Delta$ is a group, 
whose isomorphism class does not depend on $a$.
There is an isomorphism
\begin{equation}
\label{eq.back1}
\pi_1(\Delta) \ \simeq \ \pi_1(M) \ ,
\end{equation}
where $\pi_1(M)$ is the fundamental group; this implies $\pi_1(M) = 0$
when $\Delta$ is directed (see \cite{Ruz05}).

\

Finally, in the spirit of \cite{Qui}, we mention that the above constructions apply to a category $C$ in place of a poset
{\footnote{This approach has been object of (unpublished) work by J. Roberts, G. Ruzzi and the author.}},
as for example the category of globally hyperbolic spacetimes at the root of locally covariant quantum field theory \cite{BFV03}.
One defines $\Sigma_0(C) = \obj C$, $\Sigma_1(C)$ as the set of pairs $b=(b_1,b_0)$ of arrows having the same target $|b| \in \obj C$
and so on, obtaining the notion of fundamental group of $C$.

\paragraph{Characteristic classes for representations of $\pi_1(\Delta)$.}
Let $G$ be a topological group and
\[
\chi : \pi_1(\Delta) \to G
\]
a morphism. Using (\ref{eq.back1}) we identify $\pi_1(\Delta)$ with the fundamental group and 
recall that the universal cover $\tilde M$ defines a principal $\pi_1(\Delta)$-bundle
$q : \tilde M \to M$,
so any fibre $q^{-1}(x)$, $x \in M$, is a right $\pi_1(\Delta)$-space.
Given a left $G$-space $F$ with action $g,v \mapsto gv$, $g \in G$, $v \in F$, this yields the induced $F$-bundle
\begin{equation}
\label{eq.back2}
q_\chi : P_\chi \to M \ ,
\end{equation}
where $P_\chi$ is the quotient of $\tilde M \times F$ by the equivalence relation
\[
(y,v) \sim (yp,\chi(p)v) \ \ , \ \ p \in \pi_1(\Delta) 
\ ,
\] 
and $q_\chi (y,v)_\sim := q(y)$, for any $(y,v)_\sim \in P_\chi$.
A $F$-bundle of the above kind is called \emph{locally constant bundle}, see \cite[\S I.2]{Kob}.

\

Let now $F$ be a manifold and $G$ a Lie group. 
We consider a $F$-bundle $P \to M$, not necessarily locally constant, and
a characteristic class $\zeta$ defining the $2k$-form
$\zeta(P) \in Z^{2k}_{deRham}(M)$, $k \in \bN$.
We say that $\zeta$ has periods in the ring $\bK \subset \bR$ whenever
\[
\int_c \zeta(P) \ \in \bK
\]
for any $2k$-cycle $c \in Z_{2k}(M)$.
In \cite{CS85} it is proved that there is a unique morphism
\[
\zeta^\uparrow(P) : C_{2k-1}(M) \to \bR / \bK \ ,
\]
where $C_{2k-1}(M)$ is the group of singular $(2k-1)$-chains, such that 
\[
\{ \zeta^\uparrow(P) \} (\partial \ell) \ = \ \int_\ell \zeta(P) \, {\mathrm{mod}} \bK 
\ \ , \ \
\forall \ell \in C_{2k}(M)
\ ,
\]
where $\partial : C_{2k}(M) \to Z_{2k-1}(M)$ is the boundary map.

Since it is well-known that $\zeta(P_\chi) = 0$ for any locally constant bundle
$P_\chi$
(see \cite[\S II.3]{Kob}), we have that $\zeta^\uparrow(P_\chi)$ vanishes on $\partial C_{2k}(M)$,
so it yields a cocycle defining a class in singular cohomology,
\begin{equation}
\label{eq.ccs0}
\zeta^\uparrow(\chi) := [\zeta^\uparrow(P_\chi)] \in H^{2k-1}(M,\bR / \bK) \ ,
\end{equation}
which by construction vanishes when $\chi$ is the trivial morphism.
For example, it is well-known that the Chern classes 
$c_k(P) \in Z^{2k}_{deRham}(M)$, $k \in \bN$, 
have periods in $\bZ$, so for any representation 
$\chi : \pi_1(\Delta) \to \ud$
there are classes
\begin{equation}
\label{eq.ccs}
c_k^\uparrow(\chi) \in H^{2k-1}(M,\bR / \bZ)
\ \ , \ \
k = 1 , \ldots , d \ .
\end{equation}

\subsection{Nets of \sC algebras.}
\label{sec.back.nets}

A \emph{net} 
{\footnote{Here we should use the term \emph{precosheaf}, nevertheless we prefer to maintain 
           the traditional terminology.}}
of \sC algebras over $\Delta$ is given by a family $A = \{ A_a \}_{a \in \Delta}$ of unital
\sC algebras, and unital *-monomorphisms (called \emph{the net structure})
\[
\jmath_{a'a} : A_a \to A_{a'}
\ \ : \ \
\jmath_{a''a'} \circ \jmath_{a'a} = \jmath_{a''a}
\ \ , \ \
\forall a \subseteq a' \subseteq a''
\ .
\]
We shall use the notation $\efA = (A,\jmath)_\Delta$. Morphisms 
$\nu : \efA \to \efA'$
of nets of \sC algebras are defined by families of *-morphisms
\[
\nu_a : A_a \to A'_a
\ \ : \ \
\nu_{a'} \circ \jmath_{a'a} = \jmath'_{a'a} \circ \nu_a
\ \ , \ \
\forall a \subseteq a'
\ .
\]
In algebraic quantum field theory it is customary to consider nets such that there is 
a Hilbert space $H$ with $A_a \subseteq BH$ for any $a \in \Delta$, and any
$\jmath_{a'a}$, $a \subseteq a'$,
is the inclusion map. In this case we write
\[
\efA = (A,\jmath)_\Delta \subset BH \ .
\]
\paragraph{Net bundles (\cite{RRV07,RRV08,RV11}).}
When every $\jmath_{a'a}$ is an isomorphism we say that $\efA$ is a \sC \emph{net bundle};
picking a family $\nu$ of *-isomorphisms
$\nu_a : A_a \to A_*$, $\forall a \in \Delta$,
and defining
\[
\jmath'_{a'a} \in {\bf aut}A_*
\ \ , \ \
\jmath'_{a'a} := \nu_{a'} \circ \jmath_{a'a} \circ \nu_a^{-1}
\ \ , \ \
\forall a \subseteq a'
\ ,
\]
we obtain a \sC net bundle $\efA' = (A',\jmath')_\Delta$ with constant fibre $A'_a \equiv A_*$ and
isomorphic to $\efA$. We call $\nu$ a \emph{choice of the standard fibre} of $\efA$,
and denote the set of \sC net bundles with fibre isomorphic to $A_*$ by
\[
{\bf bun}(\Delta,A_*) \ .
\]
To be concise we write $\jmath_{aa'} := \jmath_{a'a}^{-1}$, $\forall a \subseteq a'$.
For any $b \in \Sigma_1(\Delta)$ we define
$\jmath_b := \jmath_{\partial_0b \, |b|} \circ \jmath_{|b| \, \partial_1b}$, 
and, assuming that a choice of the standard fibre $A_*$ has been made,
\begin{equation}
\label{eq.back3a}
\chi_p :=  \jmath_{b_n} \circ \ldots \circ \jmath_{b_1} \ \in {\bf aut}A_*
\ \ , \ \
\forall p = b_n * \ldots * b_1 : a \to a
\ .
\end{equation}
The map $\chi_p$, $p : a \to a$, factorizes through homotopy, so it yields the \sC dynamical system 
$\chi : \pi_1(\Delta) \to {\bf aut}A_*$.
Indeed, assigning the \emph{holonomy}
\begin{equation}
\label{eq.back3}
\efA \mapsto \chi
\end{equation}
defines an equivalence from the category of \sC net bundles to the one of $\pi_1(\Delta)$-dynamical \sC systems.

Analogous definitions, notations and properties hold in the cases of group net bundles and Hilbert net bundles.
In the first case we have group isomorphisms 
$\imath_{a'a}: G_a \to G_{a'}$, $\forall a \subseteq a'$,
defining the \emph{group net bundle} $\efG = (G,\imath)_\Delta$,
whilst in the second case we have unitaries
$U_{a'a}: H_a \to H_{a'}$, $\forall a \subseteq a'$,
defining the \emph{Hilbert net bundle} $\efH = (H,U)_\Delta$.
We have the composition of equivalences
\begin{equation}
\label{eq.hol}
\efP \ \stackrel{ (\ref{eq.back3}) }{\longrightarrow} \ 
\chi \ \stackrel{ (\ref{eq.back2}) }{\longrightarrow} \ 
P_\chi \ ,
\end{equation}
assigning to any net bundle $\efP$ the locally constant bundle $P_\chi$,
which holds when the fibres are \sC algebras, Hilbert spaces and topological groups.
The geometric content of (\ref{eq.hol}) is that $\chi$ is the holonomy of $\efP$
or, equivalently, the monodromy of $P_\chi$,
whilst $\efP$ can be understood as a precosheaf of local sections of $P_\chi$.

\paragraph{Representations (\cite{BR08,RV11}).} 
Let $\efA = (A,\jmath)_\Delta$ be a net of \sC algebras.
A \emph{representation} of $\efA$ is a pair $(\pi,U)$, where 
$H = \{ H_a \}_{a \in \Delta}$
is a family of Hilbert spaces,
$U = \{ U_{a'a} : H_a \to H_{a'} \}_{a \subseteq a'}$
is a family of unitary operators and
$\pi_a : A_a \to BH_a$, $a \in \Delta$,
is a family of Hilbert space representations such that
\[
U_{a''a} = U_{a''a'} U_{a'a} 
\ \ , \ \
\ad U_{a'a} \circ \pi_a = \pi_{a'} \circ \jmath_{a'a}
\ \ , \ \
a \subseteq a' \subseteq a'' 
\ .
\]
Defining the Hilbert net bundle $\efH = (H,U)_\Delta$ we obtain the \sC net bundle
$\efB \efH = (BH,\ad U)_\Delta$ and $(\pi,U)$ can be regarded as a morphism
$\pi : \efA \to \efB \efH$.
When $\efH$ is trivial 
{\footnote{That is, $\efH$ is isomorphic to the constant Hilbert net bundle 
$\efH^0 = (H^0,U^0)_\Delta$,
$H^0_a \equiv H_*$, $U^0_{a'a} \equiv 1_{H_*}$.}}
we have Hilbert space representations, that are those commonly used in algebraic quantum field theory.

\paragraph{Commutators on Hilbert net bundles.}
We give a notion of commutator for operators on Hilbert net bundles, 
designed to be coherent with the underlying net structure and not depending, like the one used in \cite[Eq.3.6]{BR08},
on the choice of the standard fibre.
We shall use this notion to propose a refined notion of causality for nets in generic spacetimes (see \S \ref{sec.gauge}).

Let $\efH = (H,U)_\Delta$ be a Hilbert net bundle.
We say that $\efH$ is trivial on $\Omega \subseteq \Delta$ whenever the restriction 
$\efH_\Omega := (H,U)_\Omega$
is a trivial Hilbert net bundle.
Given $a,o \in \Delta$, we say that a path 
$p : o \to a$, $p=b_n * \ldots * b_1$,
is \emph{trivialized} by $\efH$ whenever there is $\Omega \subseteq \Delta$ such that:
(1) $\efH$ is trivial on $\Omega$;
(2) $b_i \in \Sigma_1(\Omega)$, $\forall i=1,\ldots,n$.
We denote the set of paths trivialized by $\efH$ by $\pi_0(\Delta,\efH)$.

Given $a,o \in \Delta$ and $T \in BH_a$, $T' \in BH_o$, we write 
$[T,T']^{net}_\pm = 0$ 
whenever
\begin{equation}
\label{eq.nets1}
[T,T']^p_\pm \ := \ T \cdot U_p T'U_p^* \pm U_p T'U_p^* \cdot T \ = \ 0 
\ \ , \ \
\forall p : o \to a \ , \ p \in \pi_0(\Delta,\efH)
\ ,
\end{equation}
where $U_p := U_{b_n} \cdots U_{b_1}$, $p = b_n * \ldots * b_1$
(recall that we set $U_b := U_{\bo |b|} U_{|b| \bl}$, $\forall b \in \Sigma_1(\Delta)$).
Note that the previous expression makes sense since $U_p T'U_p^* \in BH_a$. 
Moreover (\ref{eq.nets1}) is symmetric in the sense that $[T,T']^p_\pm = 0$ if, and only if,
\[
{}^p[T,T']_\pm \ := \ U_{\ovl p}TU_{\ovl p}^* \cdot T' \pm T' \cdot U_{\ovl p}TU_{\ovl p}^* \ = \ 0 \ ,
\]
where $\ovl p \in \pi_0(\Delta,\efH)$, $\ovl p : a \to o$, is the opposite path.
Note that: 
\begin{itemize}
\item the fact that $p$ is trivialized by $\efH$ means that there is a Hilbert space $H'$ 
      with unitaries $V_e : H_e \to H'$, $e \in \Omega$, such that 
      $U_{e'e} = V_{e'}^*V_e$, $\forall e \subseteq e' \in \Omega$.
      So we have
      \[
      U_p \ = \ V_a^* V_{|b_n|} \cdots V_{|b_1|}^* V_o \ = \ V_a^* V_o \ ,
      \]
      and $U_p$ is independent of the choice of the path $p$ with 1-simplices in $\Sigma_1(\Omega)$.
      When $a=o$ we have $U_p=1$ for any $p \in \pi_0(\Delta,\efH)$, $p : a \to a$;
      thus $[T,T']^{net}_\pm$ is the usual (anti)commutator of $BH_a$.
\item if $\efH$ is trivial then any path is trivialized by $\efH$. In particular, when
      $\efH$ is the constant Hilbert net bundle, $U_p \equiv 1$ and we have the usual (anti)commutator;
\item if $\Omega \subseteq \Delta$ is such that $\pi_1(\Omega)= {\bf 0}$ and $p$ is made of 1-simplices in $\Sigma_1(\Omega)$,
      then $p$ is automatically trivialized by $\efH$; for example, we may take $\Omega$ directed.
      Note that restrictions to simply connected
      subsets of $\Delta$ appear in \cite[\S 3.2]{BR08} to define sharp excitations of the vacuum state.
\end{itemize}

\paragraph{Gauge actions (\cite{Vas12}).}
Let $\efA = (A,\jmath)_\Delta$ be a net of \sC algebras and $\efG = (G,\imath)_\Delta$
a group net bundle (we assume that the fibres of $\efG$ are locally compact and Hausdorff).
A \emph{gauge action} of $\efG$ on $\efA$ is given by a family of strongly continuous actions
$\alpha_o : G_o \to {\bf aut}A_o$, $\forall o \in \Delta$,
such that
\begin{equation}
\label{eq.GA00}
\{ \alpha_{o'} (\imath_{o'o} (g)) \} \circ \jmath_{o'o}
=
\jmath_{o'o} \circ \alpha_o (g)
\ \ , \ \
\forall g \in G_o
\ , \
o \subseteq o'
\ .
\end{equation}
When $\efG$ is the constant net bundle (that is, any $\imath_{o'o}$ is the identity of a
fixed group $G_*$), we have a group action of $G_*$ on $\efA$ by net automorphisms.
A morphism $\nu : \efA \to \efA'$ is said to be $\efG$-\emph{equivariant} whenever
\[
\nu_a \circ \alpha_a(g) \ = \ \alpha'_a(g) \circ \nu_a
\ \ , \ \
\forall a \in \Delta \ , \ g \in G_a \ .
\]
The notion of gauge action can be defined for a Hilbert net bundle
$\efH = (H,U)_\Delta$
as above, requiring that each 
$\alpha_o(g)$, $o \in \Delta$, $g \in G_o$, 
is a unitary operator on $H_o$.
The category of $\efG$-equivariant Hilbert net bundles over $\Delta$, that we denote by
\[
{\bf Hilb}_\efG(\Delta) \ ,
\]
is a symmetric tensor \sC category (see \S \ref{sec.back.presheaves} for an explanation of these terms);
it has a simple unit in the cases of interest, that is, when $\Delta$ is path-connected or, equivalently, 
when $M$ is arcwise connected (see \cite[\S 5]{RRV08}).

\subsection{Nets and presheaves of tensor categories.}
\label{sec.back.presheaves}

In the present section we recall some properties of nets and presheaves of tensor categories,
that constitute one of the main objects of study in the present paper. 
More detailed references on this topic are \cite{Rob,Vas12}.

\paragraph{DR-categories.}
A \sC \emph{category} is a category $C$ whose spaces of arrows are Banach spaces
endowed with an antilinear involution
$* : (\rho,\sigma) \to (\sigma,\rho)$, $\forall \rho,\sigma \in \obj C$,
such that
$(t \circ t')^* = t'^* \circ t^*$, $\| t^* \circ t \| = \| t \|^2$
for any pair of composable arrows $t,t'$. 
Since the composition of arrows is assumed to be linear, 
this implies that any $(\rho,\rho)$, $\rho \in \obj C$, is a \sC algebra.

We assume that functors between \sC categories preserve the involution structure.
Following the terminology of \cite[\S I.3]{ML}, we say that a functor: 
is an \emph{embedding}, whenever it is faithful on the spaces of arrows;
is \emph{full}, whenever it is surjective on the spaces of arrows;
is an \emph{isomorphism}, whenever it has an inverse functor and, in particular, 
an \emph{automorphism}, whenever the initial and target categories coincide.

A \emph{tensor product} on $C$ is given by a bilinear bifunctor
$\otimes : C \times C \to C$,
and the \emph{identity object} $\iota \in \obj C$ is characterized by the property
that $\iota \otimes \rho = \rho \otimes \iota = \rho$ for any $\rho \in \obj C$.
To be concise, in the sequel we shall write $\rho \sigma := \rho \otimes \sigma$,
$\forall \rho,\sigma \in \obj C$.
We say that $C$ has a \emph{simple unit} whenever $(\iota,\iota) = \bC 1_\iota$, 
where $1_\iota$ is the identity arrow. 
We say that the tensor product is \emph{symmetric} whenever there is a family
$\eps_{\rho,\sigma} \in (\rho \sigma , \sigma \rho)$
of unitary arrows, inducing the flip 
$(t \otimes t') \circ \eps_{\rho,\rho'} = \eps_{\sigma,\sigma'} \circ(t' \otimes t)$, 
$t \in (\rho,\sigma)$, $t' \in (\rho',\sigma')$,
and fulfilling the properties in \cite[\S 1]{DR89}.
In this case we write $C_{\otimes,\eps}$.

The category of finite dimensional, unitary representations (that is, the \emph{dual}) of a compact group
is the basic class of examples for a symmetric tensor \sC category with simple unit.
A symmetric tensor \sC category with simple unit, conjugates, direct sums and subobjects
in the sense of \cite[\S 1]{DR89} is said to be a \emph{DR-category}, 
and is automatically isomorphic to the dual of a compact group, unique up to isomorphism (\cite[Theorem 6.9]{DR89}).

\paragraph{Nets of tensor categories.}
A \emph{net of tensor \sC categories} over $\Delta$ is given by a family $T = \{ T_a \}_{a \in \Delta}$
of tensor \sC categories and a family of tensor embeddings
\[
\jmath_{a'a} : T_a \to T_{a'}
\ \ : \ \
\jmath_{a''a'} \circ \jmath_{a'a} = \jmath_{a''a}
\ \ , \ \
\forall a \subseteq a' \subseteq a''
\ .
\]
We shall use the notation $\efT_\otimes = (T,\jmath)_\Delta$. 
We say that $\efT_\otimes$ is \emph{simple} whenever every $T_a$, $a \in \Delta$, has a simple unit.
When every $\jmath_{a'a}$ is an isomorphism we say that $\efT_\otimes$ is a \emph{net bundle} of tensor \sC categories.
When every $T_a$, $a \in \Delta$, has symmetry $\vareps_a$ and 
$\jmath_{a'a}(\vareps_{a;\rho,\sigma}) = \vareps_{a'; \jmath_{a'a}(\rho) , \jmath_{a'a}(\sigma)}$
for any $a \subseteq a'$, $\rho,\sigma \in \obj T_a$, we say that $\efT_\otimes$ is
\emph{symmetric} and write $\efT_{\otimes,\vareps}$.

The next class of examples is motivated by the analysis of superselection structures
in algebraic quantum field theory.
\begin{ex}{(See \cite[\S 27]{Rob} and \cite[\S 3.3]{GLRV01}).}
\label{ex.Rob}
{\it
Let $A$ be a unital \sC algebra. Then the semigroup ${\bf end}A$ of unital *-endomorphisms of $A$
defines a tensor \sC category with sets of arrows
\begin{equation}
\label{eq.intro0}
(\rho,\sigma) := \{ t \in A \ : \ t \rho(v) = \sigma(v)t \ , \ \forall v \in A \}
\ \ , \ \ 
\forall \rho,\sigma \in {\bf end}A
\ ,
\end{equation}
whose composition is given by the product of $A$; the tensor product is given by
\begin{equation}
\label{eq.intro1}
\rho \otimes \sigma := \rho \sigma := \rho \circ \sigma
\ \ , \ \
t \otimes t' := t \rho(t')
\ \ , \ \
\forall \rho,\sigma,\rho',\sigma' \in {\bf end}A
\ , \ 
t \in (\rho,\sigma) \ , \  t' \in (\rho',\sigma')
\ .
\end{equation}
Let now $H$ be a Hilbert space and $\efR = (R,\jmath)_\Delta \subset BH$ a net of \sC algebras on $H$. 
For any $a \in \Delta$ we consider the \sC algebra
\[
R^a \, := \, C^* \{ R_o \}_{o \supseteq a } \subseteq BH
\]
(note that $R_a \subseteq R^a$)
and, given $\rho,\sigma \in {\bf end}R^a$, the space $(\rho,\sigma) \subseteq R^a$ of (\ref{eq.intro0}). 
We define the categories
\[
T_a := 
\left\{
\begin{array}{ll}
\obj T_a \ := \ \{ \rho \in {\bf end}R^a : \rho(R_o) \subseteq R_o , \forall o \supset a \} \ ,
\\
{\bf arr}T_a \ := \ \{ (\rho,\sigma)_a := R_a \cap (\rho,\sigma) \ , \ \forall \rho,\sigma \in \obj T_a  \} \ ,
\end{array}
\right.
\]
having tensor product (\ref{eq.intro1}).
We have 
\[
R_a \subseteq R_{a'} \ \ , \ \ R^{a'} \subseteq R^a 
\ \ , \ \
\forall a \subseteq a' \ ,
\]
and by definition any $\rho \in \obj T_a$ restricts to an endomorphism 
$\jmath_{a'a}\rho \in \obj T_{a'}$. 
Moreover we have inclusions
\[
\jmath_{a'a} : (\rho,\sigma)_a \to ( \jmath_{a'a}\rho , \jmath_{a'a}\sigma )_{a'}
\ \ , \ \
\rho,\sigma \in \obj T_a \ ,
\]
and this yields the net $\efT = (T,\jmath)_\Delta$.
Finally, it is trivial to verify that $\jmath$ preserves the tensor structure, so $\efT_\otimes$ is
a tensor net. Note that the identity of $T_a$ is the identity automorphism $\iota_a \in {\bf end}R^a$, so
$(\iota_a,\iota_a)_a = R_a \cap (R^a)'$.
}
\end{ex}

\paragraph{Presheaves.} There is a natural notion dual to the one of net:
a \emph{presheaf} of \sC categories is given by a family of embeddings
\[
r_{aa'} : T_{a'} \to T_a
\ \ : \ \
r_{aa'} \circ r_{a'a''} = r_{aa''} 
\ \ , \ \
\forall a \subseteq a' \subseteq a''
\ .
\]
We shall write $\efT = (T,r)^\Delta$. 
A \emph{presheaf morphism} $\eta : \efT \to \efT'$ is a family of functors
\[
\eta_a : T_a \to T'_a
\ \ : \ \
\eta_a \circ r_{aa'} = r'_{aa'} \circ \eta_{a'}
\ \ , \ \
\forall a \subseteq a' \ .
\]
Notation and terminology for presheaves are analogous to the ones of nets, 
so we say that $\efT_\otimes$ is \emph{simple} whenever every $T_a$, $a \in \Delta$, is simple, and
write $\efT_{\otimes,\vareps}$ to indicate presheaves endowed with a symmetric tensor structure. 
In particular, we say that $\eta$ is a \emph{symmetric tensor} presheaf morphism whenever
the involved structures are preserved by $\eta$.

We say that $\efT$ is \emph{full} whenever every $r_{aa'}$, $a \subseteq a'$, is a full functor.
In particular, a \emph{presheaf bundle} is a presheaf $\efT = (T,r)^\Delta$ where any
$r_{aa'}$ is an isomorphism. The map
\[
\efT = (T,\jmath)_\Delta \ \mapsto \ \efT' = (T,r)^\Delta \ , \ 
r_{aa'} := \jmath_{a'a}^{-1} \ , \ 
\forall a \subseteq a' \ ,
\]
yields an isomorphism from the category of net bundles to the one of presheaf bundles.
Given a symmetric tensor \sC category $T_{\otimes,\eps}$, we denote
the set of symmetric tensor presheaf bundles with fibre $T_{\otimes,\eps}$ by
${\bf pbun}^\uparrow(\Delta,T_{\otimes,\eps})$.
Reasoning as in (\ref{eq.back3a}), we obtain the one-to-one correspondence
\[
{\bf pbun}^\uparrow(\Delta,T_{\otimes,\eps}) \ \lra \ \hom( \pi_1(\Delta) , {\bf aut}T_{\otimes,\eps} )
\ \ , \ \
\efT_{\otimes,\vareps} \mapsto \chi \ ,
\]
where ${\bf aut}T_{\otimes,\eps}$ is the group of symmetric tensor automorphisms of $T_{\otimes,\eps}$.
We call $\chi$ the \emph{holonomy of} $\efT_{\otimes,\vareps}$.

\begin{ex}
\label{ex.back1}
{\it
Let $d \in \bN$ and $\pi_G : G \to \ud$ denote the defining representation of a compact Lie group $G \subseteq \ud$.
We denote the category of tensor powers of $\pi_G$ by $\wa \pi_G$, which is a tensor \sC category with symmetry $\theta$. 
The group $G$ defines a \sC algebra $\mO_G$, the fixed-point algebra of the natural $G$-action on the Cuntz algebra $\mO_d$ (\cite{DR87}).
We define
\[
QG \ := \ NG / G \ ,
\]
where $NG \subseteq \ud$ is the normalizer of $G$. 
Since there are maps
\[
QG \to {\bf aut}\wa \pi_{G;\otimes,\theta}
\ \ , \ \
QG \to {\bf aut}\mO_G
\]
(see \cite[\S 3]{Vas09}), by the previous considerations we obtain maps
\begin{equation}
\label{eq.FL08}
{\hom}( \pi_1(\Delta) , QG ) \ \to \ {\bf pbun}^\uparrow(\Delta,\wa \pi_{G;\otimes,\theta}) 
\ \ , \ \
{\hom}( \pi_1(\Delta) , QG )	\ \to \ {\bf bun}(\Delta,\mO_G) 
\ .
\end{equation}
}
\end{ex}

\paragraph{Sections.} By definition, the \emph{category of sections} of the symmetric tensor presheaf 
$\efT_{\otimes,\vareps} = (T,r)^\Delta$
has objects and, respectively, arrows
\[
\left\{
\begin{array}{lllll}
\varrho = \{ \varrho_a \in \obj T_a \}_{a \in \Delta} 
& : &
r_{aa'}(\varrho_{a'}) = \varrho_a 
& , & 
\forall a \subseteq a' \ ,
\\
t = \{ t_a \in (\varrho_a,\varsigma_a) \}_{a \in \Delta}
& : &
r_{aa'}(t_{a'}) = t_a
& , &
\forall a \subseteq a' \ .
\end{array}
\right.
\]
We denote it by $\wt \efT_{\otimes,\vareps}$. It is easily verified that $\wt \efT_{\otimes,\vareps}$
is a symmetric tensor \sC category, and we say that $\efT_{\otimes,\vareps}$ is a \emph{DR-presheaf}
whenever $\wt \efT_{\otimes,\vareps}$ is a DR-category.

\begin{rem}
\label{rem.back1}
Given the symmetric tensor presheaf $\efT_{\otimes,\vareps} = (T,r)^\Delta$, 
for any $a \in \Delta$ we denote the full subcategory of $T_a$ with objects
$\varrho_a$, $\varrho \in \obj \wt \efT_{\otimes,\vareps}$,
by $T^\downarrow_a$. Any $r_{aa'}$, $a \subseteq a'$, restricts to an embedding 
$r^\downarrow : T^\downarrow_{a'} \to T^\downarrow_a$, $\forall a \subseteq a'$,
and this yields the symmetric tensor presheaf 
$\efT^\downarrow_{\otimes,\vareps} = (T^\downarrow,r^\downarrow)^\Delta$. 
By construction the section categories of 
$\efT_{\otimes,\vareps}$, $\efT^\downarrow_{\otimes,\vareps}$ 
coincide, that is
$\wt \efT_{\otimes,\vareps}  =  \wt \efT^\downarrow_{\otimes,\vareps}$.
\end{rem}

To illustrate a Tannaka duality for categories of sections we give the notion
of \emph{embedding}, that is, a symmetric tensor presheaf morphism
\[
I \ : \ \efT_{\otimes,\vareps} = (T,r)^\Delta \  \to \ \efC_{\otimes,\vartheta} = (C,r')^\Delta \ ,
\]
such that: 
(1) $\efC_{\otimes,\vartheta}$ is a full presheaf;
(2) any $C_a$, $a \in \Delta$, is a full symmetric tensor subcategory of the one of finite-dimensional Hilbert spaces;
(3) any $I_a$, $a \in \Delta$, is an embedding.
\begin{thm}[Tannaka duality, \cite{Vas12}]
\label{thm_DUAL}
Let $\efT_{\otimes,\vareps} = (T,r)^\Delta$ be a simple, symmetric tensor presheaf having an embedding. 
Then there is a group net bundle $\efG = (G,i)_\Delta$ with a full symmetric tensor embedding 
$\wt \efT_{\otimes,\vareps} \to {\bf Hilb}_\efG(\Delta)$.
\end{thm}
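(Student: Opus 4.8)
The plan is to reconstruct the group net bundle $\efG$ fibrewise from the fibre categories $T_a$ via the classical Doplicher-Roberts reconstruction, and then to assemble these fibres into a net bundle using the transition embeddings $r_{aa'}$ together with the given embedding $I$. First I would fix a pole $\omega \in \Delta$ and a path frame $P_\omega$ as in (\ref{def.pf}). For each $a \in \Delta$, the restricted category $T^\downarrow_a$ from Remark \ref{rem.back1}, equipped with the embedding functor $I_a : T^\downarrow_a \to C_a$ into finite-dimensional Hilbert spaces, is a concrete symmetric tensor \sC category; the simplicity hypothesis gives it a simple unit. One then wants to extract from $(T^\downarrow_a, I_a)$ a compact group $G_a$ — morally the group of unitary monoidal natural transformations of $I_a$ — so that $I_a$ factors through an equivalence $T^\downarrow_a \simeq \wa G_a \hookrightarrow {\bf Hilb}$. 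This is exactly the content of the concrete Tannaka-Krein / Doplicher-Roberts theorem; the only subtlety is that $T^\downarrow_a$ need not itself have conjugates, direct sums and subobjects, so one first passes to its completion $\bar T_a$ (adding these in the standard way), observes that $I_a$ extends to the completion since $C_a$ is a full subcategory of ${\bf Hilb}$ which already has them, and applies \cite[Theorem 6.9]{DR89} to $\bar T_a$ to get $G_a$ with $\bar T_a \simeq \wa G_a$.

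Next I would build the net structure $i_{a'a} : G_a \to G_{a'}$ for $a \subseteq a'$. The embedding $r_{aa'} : T_{a'} \to T_a$ restricts, by Remark \ref{rem.back1}, to $r^\downarrow_{aa'} : T^\downarrow_{a'} \to T^\downarrow_a$, and the compatibility $I_a \circ r^\downarrow_{aa'} = r'_{aa'} \circ I_{a'}$ holds by the definition of a presheaf embedding (here $r'_{aa'}$ is the transition of $\efC_{\otimes,\vartheta}$, which is a full functor between full subcategories of ${\bf Hilb}$, hence an equivalence onto its image). Functoriality of the Doplicher-Roberts reconstruction — i.e., a fibre functor-preserving symmetric tensor embedding between DR-categories induces a continuous homomorphism of the dual groups in the opposite direction — then yields a group homomorphism $i_{a'a} : G_a \to G_{a'}$. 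I would need to check it is an isomorphism: this follows because $r'_{aa'}$ is an equivalence (fullness of $\efC_{\otimes,\vartheta}$), so the induced map on duals is an iso; functoriality gives $i_{a''a'} \circ i_{a'a} = i_{a''a}$ and $i_{aa} = {\rm id}$. Hence $\efG = (G,i)_\Delta$ is a group net bundle in the sense of \S\ref{sec.back.nets}.

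Having $\efG$, I would produce the full symmetric tensor embedding $\wt\efT_{\otimes,\vareps} \to {\bf Hilb}_\efG(\Delta)$. Given a section $\varrho = \{\varrho_a\}$, each $\varrho_a \in \obj T^\downarrow_a$ corresponds under $\bar T_a \simeq \wa G_a$ to a finite-dimensional unitary $G_a$-space, concretely realized as the Hilbert space $I_a(\varrho_a)$ with its $G_a$-action; set $H_a := I_a(\varrho_a)$. The section condition $r^\downarrow_{aa'}(\varrho_{a'}) = \varrho_a$ together with $I_a \circ r^\downarrow_{aa'} = r'_{aa'} \circ I_{a'}$ gives canonical unitaries $U_{a'a} : H_a \to H_{a'}$ (namely the component of $r'_{aa'}$, which as a functor between subcategories of ${\bf Hilb}$ acts on objects as unitary identifications), and these are $\efG$-equivariant relative to $i_{a'a}$ precisely because $i_{a'a}$ was defined by transporting natural transformations along $r'_{aa'}$. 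Cocycle identity $U_{a''a} = U_{a''a'}U_{a'a}$ follows from $r'_{aa'}\circ r'_{a'a''} = r'_{aa''}$. This defines an object $(H,U)_\Delta$ of ${\bf Hilb}_\efG(\Delta)$; on arrows one proceeds analogously, and symmetry and tensor structures are preserved since $I$ is a symmetric tensor presheaf morphism and Doplicher-Roberts duality is monoidal. Fullness and faithfulness of $\wt\efT_{\otimes,\vareps} \to {\bf Hilb}_\efG(\Delta)$ reduce, fibrewise, to fullness and faithfulness of $\bar T_a \simeq \wa G_a \hookrightarrow {\bf Hilb}$, which is the classical statement, together with the fact that an arrow of sections is a compatible family of fibrewise arrows.

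\textbf{Main obstacle.} The delicate point is the functoriality and fibrewise-consistency of the Doplicher-Roberts reconstruction: one must ensure that the groups $G_a$ can be chosen so that the embeddings $r^\downarrow_{aa'}$ induce honest group isomorphisms $i_{a'a}$ satisfying the net-bundle cocycle law \emph{on the nose}, not merely up to conjugation. This is where the hypothesis that $\efC_{\otimes,\vartheta}$ is a \emph{full} presheaf of \emph{full} subcategories of ${\bf Hilb}$ is essential — it rigidifies the fibre functors $I_a$ so that the reconstructed groups are canonically identified along inclusions, rather than only abstractly isomorphic. I expect the bulk of the real work (handled in \cite{Vas12}) is exactly the careful bookkeeping that turns the pointwise Tannaka equivalences into a coherent bundle; everything else is assembling pieces that are individually classical.
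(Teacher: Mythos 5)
First, a caveat: the paper does not prove Theorem \ref{thm_DUAL} at all --- it is quoted from \cite{Vas12} with no argument given --- so there is no in-paper proof to compare yours against; I can only judge your sketch against the machinery the paper sets up around the statement.

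Your overall strategy (fibrewise Doplicher--Roberts reconstruction, then coherence of the dual groups along the restriction functors) is the natural one and is consistent in spirit with what the paper records from \cite{Vas12} (Theorems \ref{lem_CCS00} and \ref{thm.gerbe}). But there is a concrete gap in the step where you assemble the fibrewise groups into a \emph{net bundle}. If $G_a$ is the group of unitary monoidal natural transformations of $I_a$ on $T^\downarrow_a$ (or on its completion), then the map $i_{a'a}:G_a\to G_{a'}$ obtained by restricting natural transformations along $r^\downarrow_{aa'}$ is injective, because $r^\downarrow_{aa'}$ is surjective on objects; but it has no reason to be surjective. A transformation natural for $I_{a'}$ on $T^\downarrow_{a'}$, transported to the objects of $T^\downarrow_a$, must in addition commute with \emph{all} arrows of $T^\downarrow_a$, and these hom-spaces are in general strictly larger than the images $r_{aa'}\bigl((\varrho_{a'},\varsigma_{a'})\bigr)$: the presheaf $\efT^\downarrow$ is not assumed full. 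Your appeal to fullness of $\efC_{\otimes,\vartheta}$ does not help here --- that rigidifies the fibre functors, not the hom-spaces of $\efT$. So your construction yields only a coherent family of injections $G_a\hookrightarrow G_{a'}$, not a group net bundle in the paper's sense (where every $i_{a'a}$ is an isomorphism). The device that repairs this, and which the paper itself displays in (\ref{eq.beta}), is to replace the fibrewise hom-spaces by the ``germ'' spaces $(\varrho,\varsigma)_{\geq a}$ of arrows extending coherently to all $o\supseteq a$; for these the restriction maps are isomorphisms by construction, the dual groups genuinely form a bundle, and the resulting holonomy is exactly the content of Theorem \ref{lem_CCS00}. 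A secondary soft spot: after passing to the completion $\bar T_a$ in order to apply \cite[Theorem 6.9]{DR89}, fullness of the embedding of $\wt\efT_{\otimes,\vareps}$ itself (not of the completions) into ${\bf Hilb}_\efG(\Delta)$ still requires an argument that the section hom-spaces exhaust the $\efG$-equivariant intertwiners; this does not follow formally from fibrewise fullness of the completed categories.
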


Now, any section $\varrho \in \obj \wt \efT_{\otimes,\vareps}$ defines a presheaf bundle
${}_\beta \wa \varrho = ( \wa \varrho , r^\varrho )_\Delta$,
constructed as follows.
Given $a \in \Delta$, the fibre $\wa \varrho_a$ has objects the tensor powers 
$\varrho_a^n = \varrho \otimes \ldots \otimes \varrho$, $n \in \bN$,
and arrows
\begin{equation}
\label{eq.beta}
{}_\beta (\varrho_a^n , \varrho_a^m ) 
\ := \
\{ t_a \ : \ t \in (\varrho^n , \varrho^m )_{\geq a}  \} \ ,
\end{equation}
where, for any $\varrho , \varsigma \in \obj \wt \efT_{\otimes,\vareps}$, we use the notation
\[
(\varrho , \varsigma )_{\geq a} \ := \ 
\{  t = \{ t_o \in (\varrho_o,\varsigma_o) \}_{o \supseteq a} \ : \ 
    r_{oo'}(t_{o'}) = t_o \ , \ \forall o \subseteq o'  \}
\]
(see \cite[Prop.3.4]{Vas12} for details). The restriction functors 
$r^\varrho_{aa'} : \wa \varrho_{a'} \to \wa \varrho_a$, $a \subseteq a'$,
are defined by the identity on the objects and by restricting $r_{aa'}$ to $\wa \varrho_{a'}$ on the arrows.
The following result describes ${}_\beta \wa \varrho$ in terms of the presheaf bundles of Ex.\ref{ex.back1}:
\begin{thm}[Holonomy, \cite{Vas12}]
\label{lem_CCS00}
Let $\efT_{\otimes,\vareps} = (T,r)^\Delta$ be a DR-presheaf. 
Then for any section $\varrho \in \obj \wt \efT_{\otimes,\vareps}$ there are a compact Lie group
$G_\varrho \subseteq \ud$
such that, up to isomorphism of the fibre,
\[
{}_\beta \wa \varrho \ \in \ {\bf pbun}^\uparrow(\Delta,\wa \pi_{G_\varrho;\otimes,\theta})
\]
and a morphism
\[
\chi_\varrho : \pi_1(\Delta) \to QG_\varrho := NG_\varrho / G_\varrho \ ,
\]
which is the holonomy of ${}_\beta \wa \varrho$. If there is an embedding of $\efT_{\otimes,\vareps}$ then
for any section $\varrho$ there must be a lift 
$\wa \chi_\varrho : \pi_1(\Delta) \to NG_\varrho$, 
$\chi_\varrho = \wa \chi_\varrho \mod G_\varrho$.
\end{thm}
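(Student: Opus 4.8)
The plan is to reduce the statement to the known description of presheaf bundles in terms of holonomies, i.e.\ to the one-to-one correspondence ${\bf pbun}^\uparrow(\Delta,T_{\otimes,\eps}) \lra \hom(\pi_1(\Delta),{\bf aut}T_{\otimes,\eps})$ established earlier, together with the Doplicher-Roberts theorem applied fibrewise. The first step is to understand the fibre $\wa \varrho_a$ of the presheaf bundle ${}_\beta \wa \varrho$. Since $\efT_{\otimes,\vareps}$ is a DR-presheaf, the section category $\wt \efT_{\otimes,\vareps}$ is a DR-category, hence isomorphic to the dual $\wa G$ of a compact group $G$; I would show that the full subcategory of $\wt \efT_{\otimes,\vareps}$ tensor-generated by a single section $\varrho$ is still a DR-category with simple unit, therefore isomorphic to the dual of a quotient of $G$, and by the concrete Doplicher-Roberts embedding into $\mO_d$ (with $d = $ the dimension of $\varrho$) it is isomorphic to $\wa \pi_{G_\varrho}$ for a uniquely determined closed subgroup $G_\varrho \subseteq \ud$. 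This identifies the standard fibre; the definition (\ref{eq.beta}) of the arrows ${}_\beta(\varrho_a^n,\varrho_a^m)$ as restrictions of global arrows $(\varrho^n,\varrho^m)_{\geq a}$ is exactly what makes the identification compatible across $a \in \Delta$, because $r^\varrho_{aa'}$ is by construction an isomorphism onto its image — so ${}_\beta \wa \varrho$ is genuinely a presheaf bundle, not merely a presheaf.

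Next I would extract the holonomy. Apply the correspondence ${\bf pbun}^\uparrow(\Delta,\wa \pi_{G_\varrho;\otimes,\theta}) \lra \hom(\pi_1(\Delta),{\bf aut}\wa \pi_{G_\varrho;\otimes,\theta})$ to ${}_\beta \wa \varrho$, obtaining a morphism $\pi_1(\Delta) \to {\bf aut}\wa \pi_{G_\varrho;\otimes,\theta}$. The point is that the group of symmetric tensor automorphisms of $\wa \pi_{G_\varrho;\otimes,\theta}$ that fix the generating object $\varrho$ (up to the tautological identifications) is precisely $QG_\varrho = NG_\varrho/G_\varrho$: an automorphism of the dual corresponds to an automorphism of $G_\varrho$ by Doplicher-Roberts, one fixing the defining representation as an abstract object is realized by conjugation by a unitary normalizing $G_\varrho$ in $\ud$, and two such unitaries give the same automorphism iff they differ by an element of $G_\varrho$. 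This is the content cited as the maps $QG \to {\bf aut}\wa\pi_{G;\otimes,\theta}$ in Example~\ref{ex.back1} (via \cite[\S 3]{Vas09}); I would invoke that here and identify its image with the full automorphism group fixing the generator. Composing yields $\chi_\varrho : \pi_1(\Delta) \to QG_\varrho$, which is by construction the holonomy of ${}_\beta \wa \varrho$.

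For the lifting statement, suppose an embedding $I : \efT_{\otimes,\vareps} \to \efC_{\otimes,\vartheta}$ exists. Evaluating $I$ on the objects $\varrho_a^n$ gives, for each $a$, a genuine Hilbert space realization of the fibre $\wa \varrho_a$ as a full subcategory of finite-dimensional Hilbert spaces — in particular $I_a(\varrho_a)$ is a Hilbert space of dimension $d$, canonically $\bC^d$ up to unitary, and the $I_a$'s patch via the full-presheaf structure of $\efC$. Transporting the holonomy along $I$: the automorphism of the fibre implementing $\chi_\varrho(g)$ for $g \in \pi_1(\Delta)$ now acts on an honest Hilbert-space category, and since it fixes the generating object $\bC^d$ it must be implemented by an actual unitary in $\ud$ normalizing $G_\varrho$ — not merely by a class in $NG_\varrho/G_\varrho$. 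Running $g \mapsto$ (this unitary) over a choice of generators, and checking via a path-frame argument (as in (\ref{eq.back3a})) that the construction is consistent on relations because $I$ is a functor, produces $\wa \chi_\varrho : \pi_1(\Delta) \to NG_\varrho$ with $\chi_\varrho = \wa \chi_\varrho \bmod G_\varrho$.

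The main obstacle I anticipate is the second step: pinning down that the relevant automorphism group is \emph{exactly} $QG_\varrho$ and not something larger or smaller — i.e.\ that every symmetric tensor automorphism of $\wa \pi_{G_\varrho}$ fixing the generator is inner by a normalizer element, and conversely. This requires the Doplicher-Roberts machinery in the concrete $\mO_d$-form together with a careful bookkeeping of how "fixing the object $\varrho$" interacts with the freedom in the Doplicher-Roberts isomorphism; I would lean on \cite{DR89,DR87,Vas09} for this rather than reprove it. A secondary subtlety is verifying that ${}_\beta \wa \varrho$ is well-defined as a presheaf bundle over all of $\Delta$ — that (\ref{eq.beta}) really gives isomorphic fibres with the restriction maps isomorphisms — which hinges on the fact that global arrows $(\varrho^n,\varrho^m)_{\geq a}$ restrict injectively and surjectively onto $(\varrho^n,\varrho^m)_{\geq a'}$ for $a \subseteq a'$; this should follow from \cite[Prop.3.4]{Vas12} but deserves an explicit check.
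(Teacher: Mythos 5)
The paper does not actually prove this theorem: it is imported verbatim from \cite{Vas12}, and the only indication of the argument is the gloss in the ``Holonomy of sectors'' paragraph after Theorem~\ref{thm.br} (a symmetric tensor isomorphism $F : {}_\beta \wa z_a \to \wa \pi_{G_z}$ from fibrewise Doplicher--Roberts duality, followed by the identification of the holonomy with a $QG_z$-valued morphism via the maps of Example~\ref{ex.back1} and \cite[\S 3]{Vas09}, and the lift coming from an embedding). Your route coincides with that sketch, and you have correctly flagged the two genuinely delicate points (that ${\bf aut}\wa\pi_{G_\varrho;\otimes,\theta}$ is exactly $QG_\varrho$, and that the restriction functors of ${}_\beta\wa\varrho$ are isomorphisms), deferring them to \cite{DR87,DR89,Vas09} and \cite[Prop.~3.4]{Vas12} as the paper itself does. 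One imprecision to watch: you identify the fibre $\wa\varrho_a$ with the tensor subcategory of the section category $\wt\efT_{\otimes,\vareps}$ generated by $\varrho$, but by (\ref{eq.beta}) the arrow spaces of the fibre are the ${}_\beta(\varrho_a^n,\varrho_a^m)$ built from families defined only on $\{o \supseteq a\}$, which need not all extend to global sections; the DR machinery (simple unit, conjugates, subobjects) must therefore be verified for the fibre directly rather than inherited from the section category. This does not change the architecture of the argument, but it is where the actual work in \cite{Vas12} lies.
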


The existence of embeddings can be described in cohomological terms:
\begin{thm}[Gerbes, \cite{Vas12}]
\label{thm.gerbe}
Any section $\varrho$ of a DR-presheaf defines the cochain
\[
u_\varrho : \Sigma_1(\Delta) \to NG_\varrho
\ \ : \ \
du_\varrho(c) := 
u_\varrho(\partial_0c) u_\varrho(\partial_2c) u_\varrho(\partial_1c)^* \in G_\varrho
\ , \ 
\forall c \in \Sigma_2(\Delta)
\ .
\]
(Eventual) lifts of $\chi_\rho$ are in one-to-one correspondence with pairs $(v,g)$,
$v : \Sigma_0(\Delta) \to NG_\varrho$,
$g : \Sigma_1(\Delta) \to  G_\varrho$,
such that 
\begin{equation}
\label{eq.gerbe}
z(b) := v(\partial_0b) g(b) u_\varrho(b) v(\partial_1b)^* \ \in NG_\varrho
\ \ , \ \
\forall b \in \Sigma_1(\Delta)
\ ,
\end{equation}
is a cocycle, i.e. $dz(c) = 1$, $\forall c \in \Sigma_2(\Delta)$.
\end{thm}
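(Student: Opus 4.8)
The plan is to extract the cochain $u_\varrho$ from the holonomy data of Theorem~\ref{lem_CCS00}, and then to reinterpret the existence of a lift $\wa\chi_\varrho$ as a "filling" of the defect $du_\varrho$ by a pair $(v,g)$. First I would recall that, by the proof of Theorem~\ref{lem_CCS00}, the presheaf bundle ${}_\beta\wa\varrho \in {\bf pbun}^\uparrow(\Delta,\wa\pi_{G_\varrho;\otimes,\theta})$ is obtained by trivialising the restriction functors $r^\varrho_{aa'}$ over the standard fibre $\wa\pi_{G_\varrho;\otimes,\theta}$; concretely, a choice of standard fibre yields, for each $b \in \Sigma_1(\Delta)$, an automorphism of $\wa\pi_{G_\varrho;\otimes,\theta}$, hence (via the map $QG_\varrho \to {\bf aut}\wa\pi_{G_\varrho;\otimes,\theta}$ of Ex.~\ref{ex.back1} and the identification of these automorphisms with $NG_\varrho/G_\varrho$) a well-defined class in $QG_\varrho$. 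Lifting this class arbitrarily to $NG_\varrho$ for each $b$ defines $u_\varrho(b) \in NG_\varrho$. The composition law $\chi_{p'*p} = \chi_{p'}\chi_p$ in $QG_\varrho$, applied to the boundary 1-simplices of a 2-simplex $c$, forces $u_\varrho(\partial_0 c)\,u_\varrho(\partial_2 c)\,u_\varrho(\partial_1 c)^* \in \ker(NG_\varrho \to QG_\varrho) = G_\varrho$; this is precisely the asserted formula for $du_\varrho(c)$, and it shows $u_\varrho$ descends to a $QG_\varrho$-valued 1-cocycle, whose class is the holonomy $\chi_\varrho$ under the correspondence of \S\ref{sec.back.presheaves}.

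Next I would establish the correspondence between lifts $\wa\chi_\varrho$ and pairs $(v,g)$ making $z(b)$ a cocycle. The key observation is that a lift $\wa\chi_\varrho : \pi_1(\Delta) \to NG_\varrho$ amounts, after choosing a path frame $P_\omega$ as in~(\ref{def.pf}), to a choice of $NG_\varrho$-valued 1-cochain $z$ on $\Sigma_1(\Delta)$ that (i) is a genuine cocycle, $dz(c)=1$ for all $c\in\Sigma_2(\Delta)$, so that $\chi_p := z(b_n)\cdots z(b_1)$ factors through homotopy, and (ii) reduces modulo $G_\varrho$ to the already-fixed $QG_\varrho$-cocycle $[u_\varrho]$, so that its class is $\chi_\varrho$. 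Since both $z$ and $u_\varrho$ reduce to the same class in $Z^1(\Sigma_\bullet(\Delta),QG_\varrho)$, standard simplicial cochain manipulation shows $z$ and $u_\varrho$ differ by a coboundary of a $0$-cochain $v : \Sigma_0(\Delta) \to NG_\varrho$ together with a $G_\varrho$-valued correction along edges $g : \Sigma_1(\Delta) \to G_\varrho$; writing this out gives exactly
\[
z(b) = v(\partial_0 b)\, g(b)\, u_\varrho(b)\, v(\partial_1 b)^* \ ,
\]
and conversely any such pair $(v,g)$ with $z$ a cocycle produces a lift. One direction (lift $\Rightarrow$ pair) is read off by comparing the trivialisation underlying $\wa\chi_\varrho$ with the one underlying $u_\varrho$; the other (pair $\Rightarrow$ lift) uses that the cocycle condition on $z$ is exactly what is needed for homotopy-invariance, together with Theorem~\ref{lem_CCS00}'s assertion that an embedding forces a lift to exist, to identify $z$ itself as a concrete representative.

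The main obstacle I expect is bookkeeping: making precise the passage from "automorphism of $\wa\pi_{G_\varrho;\otimes,\theta}$" to "element of $QG_\varrho$" in a way that is natural under composition of $1$-simplices, and then tracking how a change of standard-fibre choice alters $u_\varrho$ by an inner-type coboundary — this is what guarantees the pair $(v,g)$ is well-defined and that the statement is independent of the auxiliary choices. A secondary subtlety is the splitting of the difference cochain $z u_\varrho^{-1}$, which takes values in $NG_\varrho$ but is constrained edge-by-edge to lie in $G_\varrho$ up to the $v$-conjugation; here one must be careful that $v$ is evaluated at the two endpoints $\partial_0 b,\partial_1 b$ rather than globally, so the argument is not a plain group-cohomology computation but genuinely uses the simplicial structure~(\ref{eq.dd}) of $\Sigma_\bullet(\Delta)$. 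Once these compatibilities are in place, the one-to-one correspondence is a direct translation, and I would close by noting that triviality of $\chi_\varrho$ (the topologically trivial case) corresponds to being able to take $g \equiv 1$ and $z = dv$, recovering consistency with the vanishing of the classes in~(\ref{eq.ccs}).
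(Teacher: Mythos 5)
The first thing to note is that the paper contains no proof of Theorem~\ref{thm.gerbe}: it is imported verbatim from \cite{Vas12} (as are Theorems~\ref{thm_DUAL} and~\ref{lem_CCS00}), so there is no in-paper argument to compare yours against. Judged on its own terms, your reconstruction follows the route one would expect: read off a $QG_\varrho$-valued $1$-cocycle from the transition automorphisms of the presheaf bundle ${}_\beta \wa \varrho$ via the identification of symmetric tensor automorphisms of $\wa \pi_{G_\varrho;\otimes,\theta}$ with $NG_\varrho/G_\varrho$, lift it set-theoretically to $NG_\varrho$ to obtain $u_\varrho$ with $du_\varrho$ valued in $\ker(NG_\varrho \to QG_\varrho)=G_\varrho$, and then translate lifts of $\chi_\varrho$ into $NG_\varrho$-valued cocycles whose reduction mod $G_\varrho$ is cohomologous to the class of $u_\varrho$, which is exactly the normal form (\ref{eq.gerbe}). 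That is the right skeleton, and the first half of the statement is essentially complete once you grant (from \cite{Vas09,Vas12}) that the relevant automorphisms of $\wa\pi_{G_\varrho;\otimes,\theta}$ are exhausted by $QG_\varrho$.

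Two points in the second half need repair. First, in the direction (pair $\Rightarrow$ lift) you appeal to ``Theorem~\ref{lem_CCS00}'s assertion that an embedding forces a lift to exist''; this is both circular and unnecessary, since no embedding is in play. Given $(v,g)$ with $dz=1$, the lift is produced directly: the cocycle identity makes $p \mapsto z(p)$ homotopy-invariant (the same mechanism as in Lemma~\ref{lem.qft1}, via \cite[Lemma 2.6-7]{Ruz05}), hence one gets a morphism $\pi_1(\Delta) \to NG_\varrho$ without any further input. Second, the claimed bijection needs more care than ``standard simplicial cochain manipulation'': the reduction of $z$ mod $G_\varrho$ is only cohomologous to $[u_\varrho]$ via $d[v]$, so the induced morphism into $QG_\varrho$ is a priori $\chi_\varrho$ conjugated by $[v(a)]$ at the base region rather than $\chi_\varrho$ itself; and distinct pairs produce the same $z$ (replace $v$ pointwise by $v h$ with $h:\Sigma_0(\Delta)\to G_\varrho$ and adjust $g(b)$ to $h(\partial_0b)^* g(b)\, u_\varrho(b) h(\partial_1b) u_\varrho(b)^{-1} \in G_\varrho$), hence the same lift. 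Some normalization of $v$ at the pole and of its $G_\varrho$-ambiguity must be fixed for the correspondence to be literally one-to-one; you correctly flag this bookkeeping as the main obstacle, but you do not carry it out, and as written the ``one-to-one'' claim does not yet follow.
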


\section{Presheaves in quantum field theory.}
\label{sec.qft}

In the present section we give the main result of the present paper, 
showing that the superselection structures over globally hyperbolic spacetimes introduced by Brunetti and Ruzzi
(\cite{BR08}) can be interpreted as section categories of simple, symmetric tensor presheaves.

Actually, the fact that a superselection structure defines a {\em net} $\efT$ 
of tensor \sC categories was recognized in \cite[\S 3]{GLRV01}, \cite[\S 27]{Rob} (see Ex.\ref{ex.Rob}).
Nevertheless $\efT$ does not suffice to characterize the superselection structure itself
(see \cite[\S 3.3]{GLRV01}, \cite[\S 5.1]{BR08}),
so, rather than *-endomorphisms as in \cite{GLRV01,Rob}, 
we make use of cocycles, a point of view introduced by J.E. Roberts (\cite{Rob0}).

\paragraph{Sectors with nontrivial holonomy.}
Let $M$ be a globally hyperbolic spacetime. Then we have a {\em causal disjointness relation},
that is, a binary, symmetric relation $\perp$ defined on subsets of $M$, see \cite[\S 2.1]{GLRV01}.
The physical meaning of $\perp$ is that it encodes Einstein causality, which states that events localized in
causally disjoint regions of $M$ do not interfere one each other. 

In the following $\Delta$ shall denote the base of diamonds generating the topology of $M$
(see \cite[\S 2.1]{GLRV01} or, more shortly, \cite[\S 3.1]{BR08}), thus the abstract results of the previous section apply.
We note that any $a \in \Delta$ has proper, compact closure $\ovl{a}$. 
%
%
To be concise we write $e < a$ whenever the closure of $e \in \Delta$ is a proper subset 
of $a \in \Delta$, and set $\Delta^a := \{ e \in \Delta : e < a \}$. By \cite[Lemma B.5]{BR08},
for any $e \in \Delta^a$ there is $o \in \Delta^a$ with $o \perp e$.

Given the Hilbert space $H$, a {\em Haag-Kastler net} over $M$ is given by a net of von Neumann algebras 
$\efR = (R,j)_\Delta \subset BH$, 
fulfilling the following properties:
\begin{enumerate}
\item Einstein causality: $R_o \subseteq R_e'$ for any $e \perp o$;
\item Irreducibility: $\cap_a R_a' = \bC 1$;
\item Outer regularity: $R_a = \cap_{a < o} R_o$ for any $a \in \Delta$;
\item Borchers property: for any $a \in \Delta$, $e < a$ and projection $E \in R_e$
      there is an isometry $V_a \in R_a$ such that $E = V_aV_a^*$.
\item Punctured Haag duality: if $x \in M$, then we have
      $R_a = \cap \{ R_o' : o \in \Delta , o \perp a , \ovl o \perp \{ x \} \}$
      for any $a \in \Delta$ with $\ovl a \perp \{ x \}$.
\end{enumerate}
The crucial property from the viewpoint of physical significance is Einstein causality,
which states the principle that spacelike separated quantum observables must commute
and hence there is not uncertainty between them. 
The standard way to construct Haag-Kastler nets on curved spacetimes 
is the use of free fields (\cite{Dim80,Dim82,BFM09}).
About (punctured) Haag duality, see \cite{Ver97,Ruz05a} and related references.

\smallskip

We now define the $C^*$-category
$Z^1(\efR)$
with objects unitary cocycles
\[
z := \{ z(b) \in UR_{|b|} \}_{b \in \Sigma_1(\Delta)}
\ \ : \ \
dz(c) := z(\partial_0c) z(\partial_2c) z(\partial_1c)^* = 1
\ , \
\forall c \in \Sigma_2(\Delta)
\ ,
\]
and arrows
\[
(z,z')
\ := \
\{ \ t = \{ t_a \in R_a \}_{a \in {\Delta}} 
     \ : \
     t_{\partial_0b} z(b)  =  z'(b) t_{\partial_1b}
     \ , \
     \forall b \in \Sigma_1(\Delta) \ \} \ .
\]
For any path $p : a \to o$, we set
\[
z(p) := z(b_n) \cdots z(b_1)
\ \ , \ \
p = b_n * \ldots * b_1 \ .
\]
In particular, we have the identity cocycle $\iota(b) := 1$, $\forall b \in \Sigma_1(\Delta)$.
Note that if $t \in (\iota,\iota)$ then $t_a \equiv t_0$, $\forall a \in \Delta$,
with $t_0 \in \cap_a R_a$; we have 
\begin{equation}
\label{eq.Ap1}
\cap_{a \ni x} R_a = \bC 1 
\ \ , \ \
\forall x \in M
\end{equation}
(see \cite[\S 3.2]{BR08},\cite[Eq.23]{Ruz05}) and this implies $(\iota,\iota) = \bC 1$. 

\

As argued in \cite[\S 4]{BR08}, $Z^1(\efR)$ encodes a superselection structure 
that gives rise to quantum charges affected by the topology of $M$: 
for any $z \in \obj Z^1(\efR)$ we define $H^z_a := H$, $a \in \Delta$, and unitaries 
\[
U^z_{a'a} := z(a,a';a') : H^z_a \to H^z_{a'}
\ \ \Rightarrow \ \ 
U^z_{a''a'} U^z_{a'a} = U^z_{a''a}
\ , \
\forall a \subseteq a' \subseteq a''
\ .
\]
This defines the Hilbert net bundle $\efH^z := (H^z,U^z)_\Delta$,
which, on turns, yields the \sC net bundle $\efB \efH^z = ( BH^z , \ad U^z )_\Delta$
and the representation 
\begin{equation}
\label{eq.C00}
\pi^z : \efR \to \efB \efH^z
\ \ , \ \
\pi^z_a(t) := z(p) t z(p)^*
\ \ , \ \qquad
\forall a \in \Delta
\ , \
t \in R_a \ ,
\ {\mathrm{with}} \
p : a \to o
\ , \
o \perp a
\end{equation}
(see \cite[Appendix A]{BR08}). When $\efH^z$ is trivial
we have sectors in the classical sense (topologically trivial sectors),
that define representations of Fredenhagen's universal algebra $\vec{R}$ (\cite{Fre}).
When $M$ is the Minkowski spacetime $Z^1(\efR)$ is the well-known DHR-superselection structure (see \cite{Rob0}).

\smallskip

The conjugation structure on $Z^1(\efR)$,
which encodes the particle-antiparticle correspondence in the setting of superselection sectors,
is defined in the following standard way (\cite{DR89,DR90,GLRV01,Ruz05,BR08}):
a cocycle $\bar z \in \obj Z^1(\efR)$ is said to be a \emph{conjugate} of $z \in \obj Z^1(\efR)$ whenever there are
$r \in (\iota,\bar z \otimes z)$ and $\bar{r} \in (\iota,z \otimes \bar z)$
fulfilling the conjugate equation
\[
\bar{r}^* \otimes 1_z \cdot 1_z \otimes r \, = \, 1_z
\ \ , \ \
r^* \otimes 1_{\bar z} \cdot 1_{\bar z} \otimes \bar{r} \, = \, 1_{\bar z}
\]
(about the meaning of $\otimes$ we refer the reader to \cite[\S 5]{BR08}, anyway it should be clear by reading the next subsection).
In accord with the discussion of \cite[\S 5.3]{BR08},
the full subcategory $Z^{1,\bullet}(\efR)$ of $Z^1(\efR)$ with objects cocycles admitting
a conjugate is a DR-category and yields the set of sectors with finite statistics.

\smallskip

The rest of the section is devoted to the proof that $Z^1(\efR)$ is
the section category of a simple, symmetric tensor presheaf. 
This automatically gives to $Z^1(\efR)$ 
the structure of a symmetric tensor \sC category with simple unit, 
that, as we shall see, coincides with the one defined in \cite[\S 5]{BR08}.
For reader's convenience, and due to the fact that we shall work with restricted nets that
may not fulfill Haag duality, often we shall write explicitly standard computations.

\subsection{Charge structure.}
To introduce the charge structure we follow the standard way and, given $a \in \Delta$, define the set
${\bf loc}R_a \subseteq {\bf end}R_a$
whose elements are *-endomorphims
\begin{equation}
\label{eq.C0}
\rho \in {\bf end}R_a  \ \ | \ \  \exists e < a 
\ \ {\mathrm{fulfilling}} \ \
\left\{
\begin{array}{ll}
\rho(R_o) \subseteq R_o \ , \ \forall o \supseteq e ,  o < a ,
\\
\rho = id_{R_o} \ , \ \forall o \perp e ,  o < a \ .
\end{array}
\right.
\end{equation}
In the sequel we shall keep in evidence the \emph{localization region} $e \in \Delta^a$ by writing $\rho_e \equiv \rho$.
Note that
$\tau_e := \rho_e \circ \sigma_e \in {\bf loc}R_a$,
$\forall \rho_e,\sigma_e \in {\bf loc}R_a$.

We now consider the category $Z^1_c(\efR)_{< a}$ with objects pairs $(z,\rho) \equiv z_\rho$, where
\[
z = \{ z(b) \in UR_{|b|} \}_{b \in \Sigma_1(\Delta^a)} 
\ \ , \ \
\rho = \{ \rho_e \in {\bf loc}R_a \}_{e \in \Delta^a}
\ ,
\]
fulfill
\begin{equation}
\label{eq.C1}
\left\{
\begin{array}{ll}
dz(c)=1 \ \ , \ \ \forall c \in \Sigma_2(\Delta^a) \ ,
\\
z(b) \cdot \rho_{\partial_1b}(t) = \rho_{\partial_0b}(t) \cdot z(b)
\ \ , \ \
\forall b \in \Sigma_1(\Delta^a) \ , \ t \in R_a \ .
\end{array}
\right.
\end{equation}
The first equation says that $z$ is a cocycle defined on the "local" poset $\Delta^a$.
The second one yields an interpretation of $z$ as the flat connection
transporting from $\bl$ to $\bo$ the quantum charge represented by $\rho$ 
(see \cite[\S 25-28]{Rob} or the friendly \cite{Dop11}) 
and implies that $z$ uniquely determines $\rho_e$ on $R_e$ for any $e < a$:
for, if $o \in \Delta^a$ with $o \perp e$ and $p : e \to o$ is a path in $\Delta^a$ then
\begin{equation}
\label{eq.int0}
\rho_e(T') \ \stackrel{ (\ref{eq.C1}) }{=} \ z(p)^* \rho_o(T') z(p) \ , \ \forall T' \in R_a
\ \ \Rightarrow \ \
\rho_e(T) \ \stackrel{ (\ref{eq.C0}) }{=} \ z(p)^* T z(p) \ \ , \ \ \forall T \in R_e \ .
\end{equation}
The arrows of $Z^1_c(\efR)_{< a}$ are given by
\begin{equation}
\label{eq.int00}
(z_\rho,w_\sigma)
\ := \
\{ \ t = \{ t_e \in R_e \}_{e \in {\Delta^a}} 
     \ : \
     t_{\partial_0b} z(b)  =  w(b) t_{\partial_1b}
     \ , \
     \forall b \in \Sigma_1(\Delta^a) \ \}
\ .
\end{equation}
Note that $\rho,\sigma$ play no role in the definition of $(z_\rho,w_\sigma)$.
Applying repeatedly (\ref{eq.C1}) to a path
$p : e \to o$, $p = b_n * \ldots * b_1$, with $e \perp o < a$,
we have that if $t \in (z_\rho,w_\sigma)$ then
\[
t_o z_\rho(p) \ = \ w_\sigma(p) t_e
\ \ \Rightarrow \ \
T w_\sigma(p) t_e = T t_o z_\rho(p) \stackrel{Einstein \ causality}{=} t_o T z_\rho(p)
\ \ , \ \
\forall T \in R_e
\ ,
\]
so, applying (\ref{eq.int0}),
\begin{equation}
\label{eq.int}
\sigma_e(T) \, t_e \ = \ t_e \, \rho_e(T)
\ \ , \ \qquad
\forall t \in (z_\rho,w_\sigma)
\ , \
e < a
\ , \
T \in R_e
\ .
\end{equation}
We define the tensor product on $Z^1_c(\efR)_{< a}$. To this end, given 
$z_\rho , w_\sigma$, $z'_{\rho'} , w'_{\sigma'} \in$ ${\bf obj} Z^1_c(\efR)_{< a}$, 
$t \in (z_\rho,z'_{\rho'})$, 
$s \in (w_\sigma,w'_{\sigma'})$,
we set
\begin{equation}
\label{eq.tens.z1}
\left\{
\begin{array}{ll}
\{ z \otimes w \}_{\rho \circ \sigma}(b) := z_\rho(b) \rho_{\partial_1b}(w_\sigma(b))
\ \ , \ \
\forall  e < a
\ , \
b \in \Sigma_1(\Delta^a) \ ,
\\
(t \otimes s)_e := t_e \rho_e(s_e)
\ , \ 
\forall  e < a \ .
\\
\end{array}
\right.
\end{equation}
The fact that the above-defined relations yield a tensor product on $Z^1_c(\efR)_{< a}$
follows by elementary *-algebraic computations (analogous to those in \cite[\S 27]{Rob},
\cite[\S 4.2.1]{Ruz05}), anyway for convenience of non-specialists we recall some of them.
A first point is that since any $\rho_e$, $e < a$, is a unital *-endomorphism, we have that
$z_\rho(b) \rho_{\partial_1b}(w_\sigma(b))$ is unitary. Moreover, (\ref{eq.C0}) implies that
$z_\rho(b) \rho_{\partial_1b}(w_\sigma(b)) \in R_{|b|}$ for any $b \in \Sigma_1(\Delta^a)$. 
The cocycle relations follow by computing, for any $c \in \Sigma_2(\Delta^a)$,
\[
\begin{array}{ll}
z_\rho(\partial_0c) \rho_{\partial_{10}c}(w_\sigma(\partial_0c)) \cdot
z_\rho(\partial_2c) \rho_{\partial_{12}c}(w_\sigma(\partial_2c)) & =
z_\rho(\partial_0c) z_\rho(\partial_2c) 
\rho_{\partial_{12}c}(w_\sigma(\partial_0c)) \rho_{\partial_{12}c}(w_\sigma(\partial_2c)) = \\ & =
z_\rho(\partial_1c) \rho_{\partial_{12}c}(w_\sigma(\partial_1c)) = \\ & =
z_\rho(\partial_1c) \rho_{\partial_{11}c}(w_\sigma(\partial_1c)) 
\ ,
\end{array}
\]
having used repeatedly (\ref{eq.C1}),
the composition rule of face maps $\partial_{12} = \partial_{11}$ 
(see (\ref{eq.dd}) or \cite[Eq.1]{RR06}), and the cocycle relations of $z,w$. 
Passing to the "charges", for any $t \in R_a$ we have
\[
\begin{array}{ll}
z_\rho(b) \rho_{\partial_1b}(w_\sigma(b)) \cdot \{ \rho_{\partial_1b} \circ \sigma_{\partial_1b} \}(t)
& = \
z_\rho(b) \cdot \rho_{\partial_1b} ( w_\sigma(b) \sigma_{\partial_1b}(t) )
\ = \
\rho_{\partial_0b} ( \sigma_{\partial_0b}(t) w_\sigma(b) ) \cdot z_\rho(b)
\ = \\ & =
\{ \rho_{\partial_0b} \circ \sigma_{\partial_0b} \}(t) \cdot z_\rho(b) \rho_{\partial_1b}(w_\sigma(b)) \ ,
\end{array}
\]
so (\ref{eq.C1}(2)) is fulfilled for $z \otimes w$ and 
$(z \otimes w)_{\rho \circ \sigma} \in \obj Z^1_c(\efR)_{<a}$.
To verify that $t \otimes s$ is an arrow in
$( \ (z \otimes w)_{\rho \circ \sigma} \ , \ ( z' \otimes w')_{\rho' \circ \sigma'} \ )$
we compute
\[
\begin{array}{lcl}
t_\bo \rho_\bo(s_\bo) \cdot z_\rho(b) \rho_{\bl}(w_\sigma(b)) & = &
t_\bo z_\rho(b) \cdot \rho_\bl(s_\bo w_\sigma(b)) = \\ & = &
z'_{\rho'}(b) t_\bl \rho_\bl(w'_{\sigma'}(b)) \rho_\bl(s_\bl) = \\ & \stackrel{ (\ref{eq.int}) }{=} &
z'_{\rho'}(b) \rho'_\bl(w'_{\sigma'}(b)) \cdot t_\bl \rho_{\bl}(s_\bl)
\ ,
\end{array}
\]
obtaining the desired relations.

\begin{rem}
Let $\iota^a \in \obj Z^1_c(\efR)_{< a}$ be the pair defined by the identity cocycle of 
$Z^1(\efR)_{< a}$ and the constant family with element the identity $id_{R_a} \in {\bf aut}R_a$. 
Clearly $\iota^a$ is the identity object of the tensor product of $Z^1(\efR)_{< a}$, and, 
given $x \in a$, we have
\[
\bC 1                      \ \subseteq \
(\iota^a,\iota^a)          \ = \ 
\cap_{e < a} R_e           \ \subseteq \ 
\cap_{e < a : x \in e} R_e \ \stackrel{(*)}{=} \ 
\cap_{a \ni x} R_a         \ \stackrel{(\ref{eq.Ap1})}{=} \ 
\bC 1 \ ,
\]
where for the equality (*) we used the fact that $\{ a \in \Delta : x \in a \}$ is downward directed.
So $(\iota^a,\iota^a) = \bC 1$ and $Z^1_c(\efR)_{< a}$ has a simple unit.
\end{rem}

\subsection{Symmetry.}
\label{sec.sym}
As a final step we consider the symmetry structure. 
This is an essential ingredient at the  level of duality theory (\cite[\S 1]{DR89}) and, at the same time, 
encodes the Fermi-Bose statistics (see \cite{Dop11} for a friendly introduction). 
We introduce the family of unitary operators
\begin{equation}
\label{eq.sym.z1}
\eps (z_\rho,w_\sigma)_e := w_\sigma(p_{oe})^* \sigma_o(z_\rho(p_{o'e}))^* \cdot
                            z_\rho(p_{o'e}) \rho_e(w_\sigma(p_{oe}))
\ \ , \ \
\forall  e < a
\ ,
\end{equation}
where 
$p_{oe} : e \to o$, $p_{o'e} : e \to o'$ are paths in $\Delta^a$ such that $o \perp o'$.
The fact the for any $a \in \Delta$ there are causally disjoint $o,o' < a$ is an immediate consequence
of \cite[Lemma B.5]{BR08}, so any $\eps (z_\rho,w_\sigma)_e$, $e < a$, is well-defined as
an element of $UH \cap R_a$.
By Lemma \ref{lem.qft2} we have that 
$\eps (z_\rho,w_\sigma)_e$ is independent of the choice of $p_{oe}$, $p_{o'e}$, $o$ and $o'$.

\smallskip

The family $\eps (z_\rho,w_\sigma)$ fulfils the symmetry relations of \cite[\S 1]{DR89}, which, 
in our setting, take the form
\begin{equation}
\label{eq.sym}
\left\{
\begin{array}{ll}
\eps(z_\rho,w_\sigma)_\bo \cdot (z \otimes w)_{\rho \circ \sigma}(b) 
\ = \
(w \otimes z)_{\sigma \circ \rho}(b) \cdot \eps(z_\rho,w_\sigma)_\bl \ ,
\\
(s \otimes t)_e \cdot \eps(z_\rho,w_\sigma)_e 
\ = \
\eps(z'_{\rho'},w'_{\sigma'})_e \cdot (t \otimes s)_e \ ,
\\
\eps(z_\rho,w_\sigma)_e \cdot \eps(w_\sigma,z_\rho)_e \ = \ 1
\ \ \ , \ \ \
\eps(\iota^a,z_\rho)_e \ = \ \eps(z_\rho,\iota^a)_e = 1 \ ,
\\
\eps(  (z \otimes w)_{\rho \circ \sigma} , v_\tau  )_e
\ = \
\eps(z_\rho,v_\tau)_e \cdot \rho_e (\eps(w_\sigma,v_\tau)_e)
\ ,
\end{array}
\right.
\end{equation}
$\forall e < a$, 
$z_\rho,z'_{\rho'}, w_\sigma,w'_{\sigma'} ,v_\tau \in \obj Z^1_c(\efR)_{< a}$, 
$t \in (z_\rho,z'_{\rho'})$, $s \in (w_\sigma,w'_{\sigma'})$,
$b \in \Sigma_1(\Delta^a)$.

\smallskip

The explicit computations, that are on the line of \cite{Rob,Ruz05,BR08},
are postponed in \S \ref{app.sym}. It is at this point that we make use 
of the fact that we must have $s \geq 2$ spatial dimensions, as otherwise we may get a braiding instead of a symmetry. 
Since, as mentioned in the Introduction, where are interested in ordinary Bose-Fermi statistics,
in the rest of the section we shall assume that $M$ has spacetime dimension $\geq 3$.

\smallskip

Now, even if (\ref{eq.sym}) are fulfilled, in general 
$\eps (z_\rho,w_\sigma)$
\emph{does not} define a symmetry, because we should have
\[
\eps (z_\rho,w_\sigma)_e \in R_e \ \ , \ \ \forall e < a \ .
\]
The standard way to prove the above property would be the following.
In general, if $t \in (\rho_e,\sigma_e)$ then 
\[
t t' = t \rho_e(t') = \sigma_e(t') t = t't \ , \  \forall o \perp e , t' \in R_o
\ \Rightarrow \ 
t \in R_o' \ , \ \forall o \perp e \ ,
\]
and this, by (punctured) Haag duality, suffices to conclude that $t \in R_e$.
In particular, in our case we can verify that 
$\eps(z_\rho,w_\sigma)_e \in ( \sigma_e \rho_e , \rho_e \sigma_e )$,
nevertheless we are working with families of operators in the restricted net 
\[
\efR |_{\Delta^a} \ = \ (  \{ R_e \}_{e<a} \ , \jmath )_{\Delta^a} \ ,
\]
that could violate punctured Haag duality because, given $x \in M$, we expect that the inclusion
\[
R_e   \ =  \ \cap_{o \perp e , \ovl o \perp \{ x \} } R_o' 
\ \ \subseteq \ \
R^a_e \ := \ \cap_{o \perp e , o < a} R_o'
\]
is proper for any $e < a$. Thus we can conclude only that
$\eps(z_\rho,w_\sigma)_e \in R^a_e$, $e < a$.
In the next subsection we show how to avoid this difficulty using Rem.\ref{rem.back1}.

\subsection{The main result.}
We now define a presheaf structure on the family 
$Z^1_c(\efR)_< := \{ Z^1_c(\efR)_{<a} \}_{a \in \Delta}$,
by means of the restriction functors
\[
r_{aa'} : Z^1_c(\efR)_{<a'} \to Z^1_c(\efR)_{<a}
\ \ , \ \
( \ r_{aa'}z := z|_{\Sigma_1(\Delta^a)}
\ , \
r_{aa'}\rho := \{ \rho_e \}_{e \in \Delta^a} \ )
\ \ , \ \
r_{aa'}t := \{ t_e \}_{e \in \Delta^a}
\ ,
\]
which obviously fulfill the presheaf relations
$r_{aa''} = r_{aa'} \circ r_{a'a''}$, $\forall a \subseteq a' \leq a''$.

\begin{lem}
\label{lem.br1}
The pair $\efZ_\otimes := (Z^1_c(\efR)_<,r)^\Delta$ defines a simple tensor presheaf.
Moreover we have 
\begin{equation}
\label{eq.lem.br1}
\eps (z_\rho,w_\sigma)_e \ = \ \eps ( \ r_{aa'}z_\rho , r_{aa'}w_\sigma \ )_e
\ \ , \ \
\forall e < a \subseteq a'
\ ,
\end{equation}
where $z_\rho , w_\sigma \in \obj Z^1_c(\efR)_{<a'}$.
\end{lem}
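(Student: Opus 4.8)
The plan is to verify the two claims of the lemma separately, drawing on the already-established facts about the fibres $Z^1_c(\efR)_{<a}$. For the first claim, that $\efZ_\otimes = (Z^1_c(\efR)_<,r)^\Delta$ is a simple tensor presheaf, the presheaf identities $r_{aa''} = r_{aa'} \circ r_{a'a''}$ hold by construction (restriction of a cocycle $z$ to $\Sigma_1(\Delta^a)$, of a family $\rho = \{\rho_e\}$ to $\{e < a\}$, and of an arrow $t = \{t_e\}$ to $\{e < a\}$ is manifestly transitive), so the main content is that each $r_{aa'}$ is a well-defined tensor embedding of $\Ccat$-categories. I would check: (i) $r_{aa'}$ sends objects to objects, i.e. if $z_\rho \in \obj Z^1_c(\efR)_{<a'}$ then $z|_{\Sigma_1(\Delta^a)}$ is still a cocycle and $\{\rho_e\}_{e<a}$ still satisfies (\ref{eq.C1}(2)) — both are immediate since $\Delta^a \subseteq \Delta^{a'}$ and $\Sigma_2(\Delta^a) \subseteq \Sigma_2(\Delta^{a'})$; (ii) $r_{aa'}$ is compatible with composition and with the $*$-involution (trivial, since both are computed fibrewise in $R_e$); (iii) $r_{aa'}$ is \emph{faithful} on arrows, which is where one uses that an arrow $t \in (z_\rho,w_\sigma)_{<a'}$ is a family $\{t_e\}_{e<a'}$ and a diamond base has the property that every $e' < a'$ contains (in the poset sense of $\Delta^{a'}$) elements $e < a$; more precisely, by the cocycle intertwining relation $t_{\partial_0 b} z(b) = w(b) t_{\partial_1 b}$ the family $\{t_e\}$ is determined by its values on any cofinal/connected subfamily, and since $\Delta^a$ is connected and nonempty inside $\Delta^{a'}$, knowing $t|_{\Delta^a}$ determines $t$ on all of $\Delta^{a'}$ — hence $r_{aa'}$ is injective on arrows, i.e. an embedding. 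Finally $r_{aa'}$ preserves the tensor product (\ref{eq.tens.z1}) and the identity object $\iota^{a'} \mapsto \iota^a$ on the nose, again because these are defined fibrewise; and each fibre has simple unit $(\iota^a,\iota^a) = \bC 1$ by the Remark, so $\efZ_\otimes$ is simple.

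For the second claim, the identity (\ref{eq.lem.br1}) for the symmetry operators, I would argue that $\eps(z_\rho,w_\sigma)_e$ as defined in (\ref{eq.sym.z1}) depends only on data localized in $\Delta^a$ (indeed only on paths $p_{oe}, p_{o'e}$ inside $\Delta^a$ and on the values of $z,w,\rho,\sigma$ there), so that computing it for $z_\rho \in \obj Z^1_c(\efR)_{<a'}$ versus for $r_{aa'}z_\rho \in \obj Z^1_c(\efR)_{<a}$ uses literally the same formula evaluated on the same inputs. The only subtlety is that the formula (\ref{eq.sym.z1}) requires a choice of causally disjoint $o, o' < a$ together with paths $e \to o$ and $e \to o'$ in $\Delta^a$; I would invoke the already-stated independence from these choices — which the text attributes to \lemref{lem.qft2} — to see that whatever $o, o', p_{oe}, p_{o'e}$ one picks inside $\Delta^a$ (and such data exists by \cite[Lemma B.5]{BR08}), one may use that \emph{same} data to evaluate $\eps$ at the level of $Z^1_c(\efR)_{<a'}$, because any $o < a$ also satisfies $o < a'$ and any path in $\Delta^a$ is a path in $\Delta^{a'}$. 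Hence the two expressions in (\ref{eq.lem.br1}) are syntactically the same element of $UH \cap R_a$.

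The step I expect to be the main obstacle is point (iii) above — the faithfulness of $r_{aa'}$ on arrows — since it is the one place where one genuinely needs a geometric input about diamonds, namely that $\Delta^a$ is nonempty, connected and "large enough" inside $\Delta^{a'}$ that the intertwining relation propagates an arrow from $\Delta^a$ to all of $\Delta^{a'}$; the rest is bookkeeping that the tensor, $*$, and cocycle structures are all defined componentwise and therefore survive restriction unchanged. If the propagation argument is delicate, the fallback is to note that connectedness of $\Delta^{a'}$ lets one write any $e' < a'$ as the endpoint of a path starting in $\Delta^a$ and use the cocycle relation (\ref{eq.C1}) together with unitarity of $z(b), w(b)$ to solve for $t_{e'}$ uniquely in terms of $t|_{\Delta^a}$, which gives injectivity directly.
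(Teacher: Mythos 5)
Your proposal is correct and follows essentially the same route as the paper: the paper establishes that $r_{aa'}$ is an embedding by showing it is isometric on arrows, using the relation $t_{e} = w(p)\, t_{e'}\, z(p)^*$ along a path $p : e' \to e$ with $e < a$, $e' < a'$ together with unitarity of the cocycles and path-connectedness of $\Delta^{a'}$ — which is exactly your "propagation" argument for faithfulness, phrased as an isometry instead of unique determination. The remaining points (transitivity of restriction, preservation of the tensor product being evident from (\ref{eq.tens.z1}), and the identity (\ref{eq.lem.br1}) following from the inclusion $\Sigma_1(\Delta^a) \subseteq \Sigma_1(\Delta^{a'})$ plus the path-independence of Lemma~\ref{lem.qft2}) are handled identically in the paper.
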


\begin{proof}
We assume $a \subset a'$, otherwise there is nothing to prove.
Let $z_\rho,w_\sigma \in \obj Z^1_c(\efR)_{<a'}$ and $t \in (z_\rho,w_\sigma)$.
Then for any path $p : e' \to e$ such that $ e < a$, ${e'} < a'$, $|p| \subseteq \Delta^{a'}$, 
we have
\[
\| t_e \|^2 = 
\| z(p) t_{e'} w(p)^* \|^2 = 
\| w(p) t_{e'}^* t_{e'} w(p)^* \| = 
\| t_{e'}^* t_{e'} \| =
\| t_{e'} \|^2
\ ,
\]
so $r_{aa'}$ is isometric and $\efZ$ is a presheaf. 
In the following lines we prove that any $r_{aa'}$, $a \subseteq a'$, preserves tensor product and 
the families $\eps (\cdot,\cdot)$.
The fact that the tensor product is preserved is evident by (\ref{eq.tens.z1}).
Passing to the "symmetry", we consider $a \subseteq a'$, $z_\rho , w_\sigma \in Z^1_c(\efR)_{<a'}$
and note that the operators
\[
\eps ( \ r_{aa'}z_\rho , r_{aa'}w_\sigma \ )_e
\ \ , \ \
e \in \Delta^a
\ ,
\]
should be defined, according to (\ref{eq.sym.z1}), using paths 
$p_{oe}$, $p_{o'e}$ in $\Delta^a$ with $o \perp o'$;
but since $\Delta^a \subseteq \Delta^{a'}$ and $\Sigma_1(\Delta^a) \subseteq \Sigma_1(\Delta^{a'})$
we have that $p_{oe}$, $p_{o'e}$ are paths also in $\Delta^{a'}$, so, since any
$\eps (z_\rho,w_\sigma)_e$, $e \in \Delta^{a'}$,
is independent on the choice of $p_{oe}$, $p_{o'e}$, we obtain (\ref{eq.lem.br1}) as desired.
\end{proof}

\begin{lem}
\label{lem.br2}
$Z^1(\efR)$ is the category of sections of $\efZ_\otimes$.
\end{lem}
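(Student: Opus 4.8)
The plan is to prove the statement in two moves: a \emph{fibrewise} comparison, and a \emph{gluing} argument. First I would show that, for each fixed $a\in\Delta$, the category $Z^1_c(\efR)_{<a}$ is equivalent to the category of unitary $1$--cocycles over the sub-poset $\Delta^a$, the arrows being the intertwiner spaces of the form (\ref{eq.int00}); the comparison functor forgets the endomorphism datum $\rho$. It is essentially surjective by the standard reconstruction of a localized, transportable endomorphism family $\rho\subseteq{\bf loc}R_a$ from its charge transporters (as in \cite[\S 25-28]{Rob}, \cite[\S 4.2]{Ruz05}): this applies since $\Delta^a$ is path-connected and simply connected — it is a good base for the simply connected region $a$, so (\ref{eq.back1}) gives $\pi_1(\Delta^a)\simeq\pi_1(a)=0$ (cf. \cite[\S 3.2]{Ruz05}) — and since, by \cite[Lemma B.5]{BR08}, it contains a region causally disjoint from any prescribed $e<a$. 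It is full and faithful because, by (\ref{eq.int00}), the arrow space $(z_\rho,w_\sigma)$ does not involve $\rho,\sigma$, and because any two objects sharing a cocycle are isomorphic through the identity family $\{1_{R_e}\}_e$, which lies in that arrow space by (\ref{eq.int00}). Finally, these fibrewise equivalences intertwine the restriction functors $r_{aa'}$ — on both sides $r_{aa'}$ merely discards the simplices and regions outside $\Delta^a$, and the reconstruction is local in $\Delta^a$ — so they assemble into an equivalence of presheaves over $\Delta$, hence induce an equivalence of section categories.

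It then remains to identify the section category of the presheaf $a\mapsto\{\,1\text{--cocycles over }\Delta^a\,\}$ with $Z^1(\efR)$. An object of the former is a family $\{z^a\}_{a\in\Delta}$ with $z^{a'}|_{\Sigma_1(\Delta^a)}=z^a$ whenever $a\subseteq a'$. Since $\Sigma_1(\Delta)=\bigcup_a\Sigma_1(\Delta^a)$ — every $b\in\Sigma_1(\Delta)$ has $|b|<a$ for some $a$, as every diamond is properly contained in a larger one — the family $\{z^a\}$ glues to a single $z$ on $\Sigma_1(\Delta)$, the value $z(b)$ being independent of the chosen $a$ because any two admissible choices admit a common admissible refinement, a standard property of the base of diamonds (\cite[\S 2.1]{GLRV01}); and $z$ satisfies $dz=1$ because $\Sigma_2(\Delta)=\bigcup_a\Sigma_2(\Delta^a)$ and $dz^a=1$ on each $\Sigma_2(\Delta^a)$. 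The inverse assignment is $z\mapsto\{z|_{\Sigma_1(\Delta^a)}\}_a$, and the same gluing, applied to the $\Delta$--indexed families $\{t_a\in R_a\}$ which are the arrows of $Z^1(\efR)$, matches arrows on the two sides, using that the local intertwining relations over the $\Sigma_1(\Delta^a)$ assemble into the one over $\Sigma_1(\Delta)$. Composing with the fibrewise equivalence gives $Z^1(\efR)\simeq\wt\efZ_\otimes$, whence $Z^1(\efR)$ inherits the structure of a simple tensor \sC category, with tensor product computed fibrewise by (\ref{eq.tens.z1}).

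One further remark, not needed for the statement itself: by (\ref{eq.lem.br1}) the family (\ref{eq.sym.z1}) is invariant under the restriction functors and so descends to $\wt\efZ_\otimes$; the only subtlety is that $\eps(z_\rho,w_\sigma)_e$ lies a priori in $R^a_e$ rather than in $R_e$, so it becomes literally an arrow of the section category only after replacing $\efZ_\otimes$, via Rem.~\ref{rem.back1}, by the sub-presheaf whose fibres have as objects exactly the components of sections — a replacement that leaves $\wt\efZ_\otimes$ unchanged. This is what the next subsection makes precise.

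The step I expect to be the main obstacle is the \emph{essential surjectivity of the fibrewise comparison functor}: that every unitary cocycle over $\Delta^a$ is realized by a localized transportable endomorphism family $\rho\subseteq{\bf loc}R_a$ in the sense of (\ref{eq.C1}). This is the cocycle-to-endomorphism correspondence of \cite{Rob,Ruz05}, and the delicate point is that the restricted net $\efR|_{\Delta^a}$ need not satisfy punctured Haag duality; one has to check that simple-connectedness of $\Delta^a$ together with the Borchers property is enough to carry out the charge transport and to localize the resulting endomorphisms in $\Delta^a$, any residual ambiguity being inessential since objects of $Z^1_c(\efR)_{<a}$ with the same cocycle are canonically isomorphic. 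The remaining ingredients are the elementary gluing lemmas for the base of diamonds, which are standard.
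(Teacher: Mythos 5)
Your second move (gluing compatible families of cocycles over the $\Delta^a$ into a single cocycle over $\Delta$, and back, including the treatment of arrows) is essentially the converse half of the paper's proof and is fine. The genuine gap is in the first move, precisely at the point you flag: essential surjectivity of the forgetful functor from $Z^1_c(\efR)_{<a}$ to cocycles over $\Delta^a$. An object of $Z^1_c(\efR)_{<a}$ is a \emph{pair} $(z,\rho)$ with each $\rho_e\in{\bf loc}R_a$ a unital endomorphism of the ambient von Neumann algebra $R_a$, and (\ref{eq.int0}) determines $\rho_e$ only on the subalgebras $R_e$. To manufacture $\rho_e$ as an endomorphism of all of $R_a$ from charge transporters one must, as in \cite{Rob,Ruz05} and \cite[Lemma 5.1]{BR08}, conjugate by $z(p)$ along a path $p$ whose initial point is causally disjoint from the \emph{whole} of $a$ (so that the transported charge acts trivially on $R_a$); such a path necessarily leaves $\Delta^a$, so a cocycle given only on $\Sigma_1(\Delta^a)$ provides no such datum. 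Simple connectedness of $\Delta^a$ and the Borchers property do not substitute for this: they do not even yield a candidate value of $\rho_e$ on a generic element of $R_a$. So the fibrewise equivalence you propose is not available, and I see no reason it should hold.

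The paper avoids this entirely by proving surjectivity only at the level of \emph{sections}: a section of $\efZ$ determines, by your move 2, a global cocycle $z\in\obj Z^1(\efR)$, and for a \emph{global} $z$ one sets $\varrho^a_e:=\ad z(p)|_{R_a}$ with $p\in\Delta(\perp a,e)$, i.e. $\partial_1p\perp a$ --- exactly the construction of \cite[Lemma 5.1]{BR08} --- and checks that $\{(z^a,\varrho^a)\}_a$ is a section of $\efZ$ restricting to $\{z^a\}_a$. In other words, the lift you need exists for compatible families $\{z^a\}_a$, not for individual fibres, and its construction uses the cocycle outside $\Delta^a$. If you reorganize the argument so that full faithfulness is checked fibrewise (immediate from (\ref{eq.int00}), as you say) while essential surjectivity is established only for sections via this global charge construction, you recover the paper's proof; note also that the direct route gives the identification $Z^1(\efR)=\wt\efZ$ on the nose rather than merely an equivalence of categories.
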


\begin{proof}
If $z \in \obj Z^1(\efR)$ then we define the family
$z^a := z|_{\Sigma_1(\Delta^a)}$, $\forall a \in \Delta$;
to add a family of charges, for any $e<a$ we set
\[
\left\{
\begin{array}{ll}
\Delta(\perp a,e) \ := \ \{ p \ : \ \partial_1p \perp a , \partial_0p = e \} \ ,
\\
\varrho^a_e(t) \ := \ \{ \ad z(p) \}(t) \ , \  \forall t \in R_a \ ,
\ \ {\mathrm{where}} \ \ p \in \Delta(\perp a,e) \ .
\end{array}
\right.
\]
By \cite[Eq.4.5]{BR08}, $\varrho^a_e$ does not depend on the choice of $p \in \Delta(\perp a,e)$,
and \cite[Lemma 5.1(i,iv)]{BR08} 
{\footnote{
In this reference an $x \in M$ is chosen with $a \perp \{ x \}$,
and we have the notation $\varrho^a_e \equiv y^z_x(e) |_{R_a}$, $\forall e \in \Delta^a$.
Moreover, it is proved that $\varrho^a_e$ is independent of $x$.
}}
imply that $\varrho^a_e \in {\bf loc}R_a$. Using \cite[Lemma 5.1(ii)]{BR08} we find
\begin{equation}
\label{eq.C1.0}
z(b) \cdot \varrho^a_{\partial_1b}(t) = \varrho^a_{\partial_0b}(t) \cdot z(b)
\ \ , \ \
\forall a \in \Delta \ , \ t \in R_a \ , \ b \in \Sigma_1(\Delta^a)
\ ,
\end{equation}
so
\[
z_\varrho^a \ := \ ( \ z^a \ , \ \varrho^a := \{ \varrho^a_e \}_{e<a} \ ) \ \in \ \obj Z^1_c(\efR)_{<a} \ .
\]
Let now $a \subseteq a'$; then $\Delta(\perp a',e) \subseteq \Delta(\perp a,e)$ so, 
by path-invariance of $\varrho^{a'}_e$,
\[
\varrho^{a'}_e |_{R_a} 
\ := \ 
\ad z(p')|_{R_a}
\ = \
\varrho^a_e
\ \ \ , \ \ \qquad
\forall p \in \Delta(\perp a',e) \subseteq \Delta(\perp a,e)
\ ,
\]
and we conclude that $\{ z_\varrho^a \}_a$ is a section of $\efZ$.
If $t \in (z,w)$ then defining 
$t^a_e := t_e$, $\forall e<a$,
yields an arrow $t^a$ of $Z^1_c(\efR)_{<a}$ and a section $\wt t := \{ t^a \}$, 
so $Z^1(\efR)$ is a subcategory of $\wt \efZ$.
On the other side, let us consider a section 
\[
(\wt z,\wt \varrho) \in \obj \wt \efZ
\ \ , \ \
(\wt z,\wt \varrho) \ = \ \{ (\wt z^a,\wt \varrho^a) \in \obj Z^1_c(\efR)_{<a} \}_{a \in \Delta} \ .
\]
Since any $a \in \Delta$ has proper closure in $M$, for any $b \in \Sigma_1(\Delta)$ there is $a \in \Delta$ such that 
$|b| \in \Sigma_1(\Delta^a)$ and we define
$z(b) := \wt z^a(b)$.
Since $\wt z$ is a section, the previous definition does not depend on the choice of $a > |b|$;
moreover given $c \in \Sigma_2(\Delta)$ we have $dz(c) = d \wt z^a(c) = 1$ for any $a > |c|$
so $z \in \obj Z^1(\efR)$. 
If $\wt t := \{ \wt t^a \}$ is an arrow of $\wt \efZ$ then for any $e \in \Delta$
there must be some $a \in \Delta$ with $e<a$ and we set 
$t_e := \wt t^a_e$,
so reasoning as above we have that $t := \{ t_e \}$ is an arrow of $Z^1(\efR)$.
We conclude that $Z^1(\efR) = \wt \efZ$ as desired.
\end{proof}

\begin{lem}
\label{lem.br3}
The subpresheaf $\efS := \efZ^\downarrow$, defined as in Rem.\ref{rem.back1}, 
is a simple, symmetric tensor presheaf and $\wt \efS = Z^1(\efR)$.
\end{lem}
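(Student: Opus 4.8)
The plan is to combine the three preceding lemmas with Remark~\ref{rem.back1} in a fairly direct way, the only real content being to upgrade the ``symmetry'' $\eps$ on $\efZ_\otimes$ to a genuine symmetry on $\efS = \efZ^\downarrow$. First I would recall from Lemma~\ref{lem.br1} that $\efZ_\otimes = (Z^1_c(\efR)_<,r)^\Delta$ is a simple tensor presheaf and that the operators $\eps(z_\rho,w_\sigma)_e$ are compatible with the restriction functors, i.e.\ (\ref{eq.lem.br1}) holds. By definition (Remark~\ref{rem.back1}), $\efS_a := \efZ^\downarrow_a$ is the full subcategory of $Z^1_c(\efR)_{<a}$ whose objects are the components $z_\rho|_a$ of sections $z_\rho \in \obj\wt\efZ_\otimes = \obj Z^1(\efR)$ (using Lemma~\ref{lem.br2}), and $\wt\efS = \wt\efZ^\downarrow = \wt\efZ = Z^1(\efR)$ by Remark~\ref{rem.back1} together with Lemma~\ref{lem.br2}. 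So the equality $\wt\efS = Z^1(\efR)$ and simplicity of $\efS$ (inherited from $\efZ_\otimes$, since a full subcategory containing $\iota$ of a category with simple unit still has $(\iota,\iota)=\bC1$) come for free; what must be checked is that the tensor and symmetry structure restrict to $\efS$.

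The second step is to verify that $\efS$ is closed under the tensor product. Given sections $z_\rho, w_\sigma \in \obj Z^1(\efR)$, the tensor product cocycle $z\otimes w$ computed fibrewise via (\ref{eq.tens.z1}) is again (the restriction of) a section: one checks that $(z\otimes w)^a = r_{aa''}(z\otimes w)^{a''}$ for $a\subseteq a''$, which is immediate from the formula $\{z\otimes w\}_{\rho\circ\sigma}(b) = z_\rho(b)\rho_{\partial_1b}(w_\sigma(b))$ and the fact that $z_\rho, w_\sigma$ are themselves sections; the same holds for arrows. Hence $\otimes$ of two objects/arrows of $\efS_a$ lies in $\efS_a$, and $\efS_\otimes$ is a tensor subpresheaf. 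The identity object $\iota^a$ is the component of the section $\iota$, so it belongs to $\efS_a$.

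The main obstacle, as the paper flags at the end of \S\ref{sec.sym}, is the symmetry: the operators $\eps(z_\rho,w_\sigma)_e$ defined in (\ref{eq.sym.z1}) satisfy all the algebraic relations (\ref{eq.sym}), but to be arrows of the category one needs $\eps(z_\rho,w_\sigma)_e \in (\,\cdot\,)_e = R_e \cap (\sigma_e\rho_e,\rho_e\sigma_e)$, whereas working in the restricted net $\efR|_{\Delta^a}$ only gives $\eps(z_\rho,w_\sigma)_e \in R^a_e = \cap_{o\perp e, o<a} R_o'$, possibly strictly larger than $R_e$. The plan to circumvent this is exactly the point of Remark~\ref{rem.back1}: when $z_\rho, w_\sigma$ are \emph{components of sections}, the operator $\eps(z_\rho,w_\sigma)_e$ can be computed in the \emph{full} poset $\Delta$ rather than in $\Delta^a$. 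Concretely, by (\ref{eq.lem.br1}) the operator is independent of which $a > e$ we use; letting $a$ grow (using that every element of $\Delta$ has proper compact closure, so $\Delta^a$ exhausts $\Delta$) one may choose causally disjoint $o\perp o'$ with $\ovl o, \ovl o'$ avoiding a given point $x$, and then the standard argument --- $\eps(z_\rho,w_\sigma)_e$ commutes or anticommutes appropriately with every $R_o$, $o\perp e$ with $\ovl o\perp\{x\}$ --- together with punctured Haag duality of the \emph{unrestricted} net $\efR$ yields $\eps(z_\rho,w_\sigma)_e \in R_e$. This is where the Haag-Kastler axioms (punctured Haag duality, Einstein causality) for $\efR$ are genuinely used, and it is the only step requiring care. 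Once this is established, the relations (\ref{eq.sym}) --- which are already proved in \S\ref{app.sym} --- show that $\eps$ is a symmetry on each $\efS_a$, and (\ref{eq.lem.br1}) shows it is preserved by the $r^\downarrow_{aa'}$, so $\efS_{\otimes,\vareps}$ is a simple symmetric tensor presheaf. Finally, by Remark~\ref{rem.back1} and Lemma~\ref{lem.br2}, $\wt\efS = \wt\efZ = Z^1(\efR)$, completing the proof.
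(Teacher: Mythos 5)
Your overall route is the paper's: the only substantive point is membership $\eps(z_\varrho^a,w_\varsigma^a)_e \in R_e$, and you correctly identify that this must come from the globality of the sections $z,w$ together with punctured Haag duality of the \emph{unrestricted} net $\efR$; everything else ($\wt\efS = Z^1(\efR)$ via Remark~\ref{rem.back1} and Lemma~\ref{lem.br2}, simplicity, closure under $\otimes$) is bookkeeping, as in the paper. One step of your argument, however, is imprecise as stated: you propose to get commutation of $\eps(z_\varrho^a,w_\varsigma^a)_e$ with every $R_\omega$, $\omega\perp e$, $\ovl\omega\perp\{x\}$, by ``letting $a$ grow'' so that $\Delta^a$ exhausts $\Delta$. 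Since $\Delta$ is not directed when $\pi_1(M)\neq 0$ (which is exactly the case of interest), a region $\omega\perp e$ with $\ovl\omega\perp\{x\}$ need not lie in $\Delta^a$ for any diamond $a>e$, so this enlargement does not reach all the algebras required by punctured Haag duality. The paper's actual mechanism is different in this detail: because $z,w$ are globally defined cocycles, the localized endomorphisms $\varrho^a_e,\varsigma^a_e$ extend to endomorphisms $\varrho^x_e,\varsigma^x_e$ of $R^\perp_x:=\cup_{\ovl\omega\perp\{x\}}R_\omega$ (this is \cite[Lemma 5.1]{BR08}), the intertwining relation (\ref{eq.int}) then holds on all of $R^\perp_x$, and the standard commutation argument applies to every $R_\omega\subseteq R^\perp_x$ with $\omega\perp e$, after which punctured Haag duality gives $\eps(z_\varrho^a,w_\varsigma^a)_e\in R_e$. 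So your proof is correct in strategy but needs this extension-of-charges step in place of the exhaustion argument.
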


\begin{proof}
By definition we have $\efS = (S,r)^\Delta$, where each $S_a$, $a \in \Delta$, is the full subcategory
of $Z^1_c(\efR)_{<a}$ with objects
\[
\obj S_a \ := \ \{  z_\varrho^a \ , \ z \in \obj Z^1(\efR) \} \ .
\]
By Rem.\ref{rem.back1} we already know that $\wt \efS = Z^1(\efR)$.
Moreover $\efS$ is clearly simple and tensorial, so we only have to check the symmetry property.
To this end we note that we already know from \S \ref{sec.sym} that, given $o,o' < a$, $o \perp o'$,
\[
\eps (z_\varrho^a , w_\varsigma^a)_e 
\ := \
w_\varsigma(p_{oe})^* \varsigma^a_o(z_\varrho(p_{o'e}))^* \cdot
z_\varrho(p_{o'e}) \varrho^a_e(w_\varsigma(p_{oe}))
\ \ , \ \
e < a
\ ,
\]
fulfils the symmetry relations, so we only have to prove that
$\eps (z_\varrho^a , w_\varsigma^a)_e \in R_e$ for any $e \in \Delta^a$.
Now, the point is that $\eps (z_\varrho^a , w_\varsigma^a)$ is defined
exactly as in \cite[Eq.5.10]{BR08} and this time, due to the fact that $z,w$ \emph{are globally defined},
the charges $\varrho^a_e , \varsigma^a_e$ extend to endomorphisms
$\varrho^x_e , \varsigma^x_e \in {\bf end}R^\perp_x$,
where $R^\perp_x := \cup_{ \ovl \omega \perp \{ x \} } R_\omega$ for any 
$x \in M$ such that $\{ x \} \perp \ovl e$ (see \cite[\S 5.1]{BR08}).
The property (\ref{eq.int}) now holds in the form
\[
\eps (z_\varrho^a , w_\varsigma^a)_e \cdot \{ \varrho^x_e \circ \varsigma^x_e \}(t)
\ = \
\{ \varsigma^x_e \circ \varrho^x_e \}(t) \cdot \eps (z_\varrho^a , w_\varsigma^a)_e
\ \ , \ \
\forall t \in R^\perp_x 
\ ,
\]
at varying of $x \in M$ with $\{ x \} \perp \ovl e$ (see \cite[Lemma 5.1(iii)]{BR08}).
So, as remarked in \S \ref{sec.sym}, 
\[
\eps (z_\varrho^a , w_\varsigma^a)_e \cdot t' \ = \ 
t' \cdot \eps (z_\varrho^a , w_\varsigma^a)_e 
\ \ , \ \
\forall t' \in R_\omega \subseteq R^\perp_x 
\ , \ 
\ovl \omega \perp \{ x \} \ , \ \omega \perp e
\ ,
\]
and we conclude,
by punctured Haag duality, that $\eps (z_\varrho^a , w_\varsigma^a)_e \in R_e$ as desired. 
\end{proof}

Before to state our main result we select the sectors having a conjugate.
For any $a \in \Delta$, we denote the full subcategory of $S_a$ with objects $z_\varrho^a$,
$z \in Z^{1,\bullet}(\efR)$, by $S^\bullet_a$, and define
\[
\efS^\bullet \ := \ (S^\bullet,r)^\Delta \ .
\]
By the previous Lemmata, and since by \cite[\S 5.3]{BR08} we have that $Z^{1,\bullet}(\efR)$ is a DR-category, we obtain:
\begin{thm}
\label{thm.br}
Let $\efR = (R,j)_\Delta \subseteq BH$ be a Haag-Kastler net where $\Delta$ is the base of diamonds 
of a globally hyperbolic spacetime $M$ with spacetime dimension $\geq 3$.
Then $\efS := (S,r)^\Delta$ is a simple, symmetric tensor presheaf such that 
$Z^1(\efR)$ is the category of sections of $\efS$.
In particular, $\efS^\bullet$ is a DR-presheaf with category of sections $Z^{1,\bullet}(\efR)$.
\end{thm}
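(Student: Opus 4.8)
The plan is to read Theorem~\ref{thm.br} as the assembly of Lemmas~\ref{lem.br1}--\ref{lem.br3} with the conjugation analysis of \cite[\S 5.3]{BR08}. The first assertion is then immediate: Lemma~\ref{lem.br3} already gives that $\efS = (S,r)^\Delta$ is a simple, symmetric tensor presheaf, and the identification of $Z^1(\efR)$ with $\wt\efS$ is Lemma~\ref{lem.br3} combined with Lemma~\ref{lem.br2}. So all that is left is to verify the statement about $\efS^\bullet$.

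First I would check that $\efS^\bullet = (S^\bullet,r)^\Delta$ is a well-defined symmetric tensor subpresheaf of $\efS$ with simple unit. From the proof of Lemma~\ref{lem.br2} one has $r_{aa'}(z_\varrho^{a'}) = z_\varrho^a$, i.e. the restriction functors preserve the globally defined cocycle that labels an object; hence each $r_{aa'}$ maps $\obj S^\bullet_{a'}$ into $\obj S^\bullet_a$. The fibrewise tensor product (\ref{eq.tens.z1}) of objects labelled by $z,w \in Z^{1,\bullet}(\efR)$ is labelled by $z \otimes w$, which again admits a conjugate (namely $\bar w \otimes \bar z$, with the evident $r$ and $\bar r$), so $\obj S^\bullet_a$ is closed under $\otimes$; the unit $\iota^a$ is its own conjugate, so $\iota^a \in \obj S^\bullet_a$ and the simple-unit property descends from $\efS$. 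By Lemma~\ref{lem.br1}, equation (\ref{eq.lem.br1}), the symmetry operators $\eps(z_\varrho^a,w_\varsigma^a)_e$ are compatible with restriction, so $\efS^\bullet$ is a symmetric tensor presheaf in its own right, full in $\efS$ by construction.

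Next I would identify $\wt{\efS^\bullet}$ with $Z^{1,\bullet}(\efR)$. By Lemma~\ref{lem.br2} the sections of $\efS$ correspond bijectively to $\obj Z^1(\efR)$ via $z \mapsto \{z_\varrho^a\}_{a \in \Delta}$; such a section lies in $\wt{\efS^\bullet}$ exactly when every component $z_\varrho^a$ lies in $\obj S^\bullet_a$, and since the labelling cocycle is the single global $z$ this happens precisely for $z \in \obj Z^{1,\bullet}(\efR)$. Here one must check that a conjugable section of $\wt\efS$ has a \emph{conjugate section}, equivalently that the abstract conjugation of the symmetric tensor \sC category $\wt\efS$ agrees with the conjugation on $Z^1(\efR)$ taken in the sense of \cite{BR08}; this uses the explicit matching of the tensor product (\ref{eq.tens.z1}) and symmetry (\ref{eq.sym.z1}) with the formulas of \cite[\S 5]{BR08} set up in \S\ref{sec.sym}. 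Then $\efS^\bullet$ is a DR-presheaf by definition, since $\wt{\efS^\bullet} = Z^{1,\bullet}(\efR)$ is a DR-category by \cite[\S 5.3]{BR08}.

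The genuinely delicate step --- and the reason Lemma~\ref{lem.br3} argues by hand rather than simply quoting \cite{BR08} --- is already dispatched inside that lemma: since $\efR$ is restricted to the local nets $\efR|_{\Delta^a}$, which need not satisfy punctured Haag duality, one cannot directly conclude $\eps(z_\varrho^a,w_\varsigma^a)_e \in R_e$; the way around it is Rem.~\ref{rem.back1} (the passage $\efS = \efZ^\downarrow$), which restricts attention to objects coming from globally defined cocycles, for which the transported charges $\varrho^a_e$ extend to endomorphisms of $R^\perp_x$ so that punctured Haag duality for the \emph{full} net $\efR$ applies. Thus the proof of the theorem itself is short; its content lives in the lemmas it cites.
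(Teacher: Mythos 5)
Your proposal is correct and follows essentially the same route as the paper: Theorem~\ref{thm.br} is proved there in one line by assembling Lemmas~\ref{lem.br1}--\ref{lem.br3}, defining $S^\bullet_a$ as the full subcategory with objects $z^a_\varrho$ for $z \in \obj Z^{1,\bullet}(\efR)$, and invoking \cite[\S 5.3]{BR08} for the DR-category property of $Z^{1,\bullet}(\efR)$. Your additional checks (closure of $\obj S^\bullet_a$ under $\otimes$ and restriction, and the matching of the conjugation structures) are exactly the routine verifications the paper leaves implicit.
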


In the following points we list some consequences of the previous theorem.

\paragraph{Holonomy of sectors.}
Let $z \in \obj Z^1(\efR)$ and ${}_\beta \wa z$ the presheaf bundle of tensor powers of $z$ defined in (\ref{eq.beta}).
By definition, the fibre ${}_\beta \wa z_a$, $a \in \Delta$, has objects the tensor powers 
$z^n_a := z \otimes \ldots \otimes z$, $n \in \bN$
(note that $a \in \Delta$ plays no role at the level of objects), and arrows
\[
{}_\beta(z^n_a , z^m_a) 
\ = \ 
\{  t = ( t_e \in R_e \, , \, e \in \cup_{a' \supseteq a} \Delta^{a'} ) 
    \, : \, 
    z^m(b) t_\bl = t_\bo z^n(b) \, , \, \forall b \in \Sigma_1(\cup_{a' \supseteq a} \Delta^{a'}) \}
\ .
\]
A simpler way to describe ${}_\beta(z^n_a , z^m_a)$ is the following: since any $a$ has compact closure, we can find $a' \in \Delta$
such that $a \in \Delta^{a'}$; thus for any $t \in {}_\beta(z^n_a , z^m_a)$ we can pick $t_a \in R_a$, and this suffices to determine $t$ 
because of the relation
\[
t_e \ = \ z^m(p) \, t_a \, z^n(p)^* \ \ , \ \ \forall e \in \cup_{a' \supseteq a} \Delta^{a'} \ ,
\]
where $p : a \to e$ is a path made of 1-simplices in $\Sigma_1(\cup_{a' \supseteq a} \Delta^{a'})$.
In this way, on the arrows, the restriction functors take the form
$r^z_{aa'}(t_{a'}) =  t_a$, $a \subseteq a'$,
and have inverse 
\[
r^z_{a'a}(t_a) \ = \ t_{a'} \ = \  z^m(b_{a'a}) \, t_a \, z^n(b_{a'a})^*
\ \ , \ \
b_{a'a} := (a',a;a') \in \Sigma_1(\Delta) \ .
\]
Thus, reasoning as in (\ref{eq.back3a}), we compute the holonomy of ${}_\beta \wa z$,
\begin{equation}
\label{eq.holz}
\chi'_z : \pi_1(\Delta) \to {\bf aut}{}_\beta \wa z_a
\ \ , \ \
\{ \chi'_z(p) \}(t_a) \ = \ z^m(p) \, t_a \, z^n(p)^*
\ \ , \ \qquad
p : a \to a 
\ , \
t_a \in {}_\beta(z^n_a , z^m_a) \ .
\end{equation}
Let now in particular $z \in \obj Z^{1,\bullet}(\efR)$.
By Theorem \ref{lem_CCS00}, the sector $z$ defines
a compact Lie group $G_z \subseteq \ud$ and a holonomy representation
$\chi_z : \pi_1(\Delta) \to QG_z := NG_z / G_z$,
which is a complete invariant of the presheaf bundle ${}_\beta \wa z$. 
In accord with the proof of \cite[Theorem 5.3]{Vas12}, the group $G_z$ is such that there is a symmetric tensor isomorphism
$F : {}_\beta \wa z_a \to \wa \pi_{G_z}$.
Regarding $QG_z$ as an automorphism group of $\wa \pi_{G_z}$ (see Ex.\ref{ex.back1}), by the argument used to prove \cite[Eq.5.5]{Vas12} we have 
$\chi_z \circ F = F \circ \chi'_z$.
It can be easily verified that $d$ is the statistical dimension of $z$ (see \cite[\S 5.3]{BR08}), 
but we prefer to not discuss here the details.

\paragraph{Dual objects and topologically trivial sectors.} 
By Theorem \ref{thm_DUAL}, any embedding 
$I : \efS^\bullet \to \efC$
yields a full embedding of symmetric tensor \sC categories 
$Z^{1,\bullet}(\efR) \hra {\bf Hilb}_\efG(\Delta)$,
where $\efG$ is a compact group net bundle, and induces lifts 
$\wa \chi_z : \pi_1(\Delta) \to NG_z$, $\chi_z = \wa \chi_z \, {\mathrm{mod}} G_z$,
of the holonomies defined by sectors $z \in \obj Z^{1,\bullet}(\efR)$.
Let now $Z^1_t(\efR)$ denote the subcategory of topologically trivial sectors, that is, 
of those $z \in \obj Z^1(\efR)$ such that there is a family $\psi = \{ \psi_a \in UH \}$ with
$z(b) = \psi_{\partial_0b}^* \psi_{\partial_1b}$, $b \in \Sigma_1(\Delta)$.
Then
$z^n(p) = 1$ for any $p : a \to a$ and $n \in \bN$,
and by (\ref{eq.holz}) we have that $\chi'_z(p)$ is the identity functor for any $z \in \obj Z^1_t(\efR)$.
Thus when $z \in \obj Z^{1,\bullet}_t(\efR)$ we have that $\chi'_z$ and $\chi_z$ are trivial, 
and there is at least one lift $\wa \chi_z$, the trivial one. 
We have a full embedding
$Z^{1,\bullet}_t(\efR) \hra {\bf Hilb}_{\efG_t}(\Delta)$,
where $\efG_t$ is the trivial net bundle with fibre the Doplicher-Roberts dual $G_t$ of $Z^{1,\bullet}_t(\efR)$. 
But, by the results in \cite{Vas12}, we may also have 
$Z^{1,\bullet}_t(\efR) \hra {\bf Hilb}_\efG(\Delta)$
for some nontrivial group net bundle $\efG$.


\paragraph{Cochains.} By Theorem \ref{thm.gerbe}, any sector $z \in \obj Z^{1,\bullet}(\efR)$ defines the 1--cochain
$u_z : \Sigma_1(\Delta) \to NG_z$
that classifies lifts $\wa \chi_z : \pi_1(\Delta) \to NG_z$, $q \circ \wa \chi_z = \chi_z$. 
The cochain $u_z$ defines a gerbe in the sense of \S \ref{sec.concl}, see (\ref{eq.CON1a}).

\paragraph{Characteristic classes.}
Let $P_z \to M$ denote the locally constant, principal $QG_z$-bundle defined by $\chi_z$. 
By (\ref{eq.ccs0}), we have characteristic classes 
$\zeta^\uparrow(z) := [\zeta^\uparrow(P_z)] \in H^{2k-1}(M,\bR / \bK)$
that vanish when $\chi_z$ is the trivial morphism, 
that is, when $z$ is a topologically trivial sector.
Thus we can measure the obstruction of $z$ being a topologically trivial sector in terms
of the odd cohomology of the given spacetime.

\paragraph{Chern classes of fields with "trivial gauge group".}
Let $\efR$ be a Haag-Kastler net such that $G_z$ is trivial for any $z \in \obj Z^{1,\bullet}(\efR)$.
Then $NG_z = QG_z = \ud$ and $\chi_z$ takes values in $\ud$. By (\ref{eq.ccs}) we have Chern classes
\begin{equation}
\label{eq.ccsz}
\left\{
\begin{array}{ll}
c_k(z) := c_k^\uparrow(\chi_z) \in H^{2k-1}(M,{\mathbb{R/Z}})
\ , \
\forall k=1,\ldots \ , d \ ,
\\
ccs(z) := d + \sum_{k=1}^d \frac{ (-1)^{k-1}  }{ (k-1)! } \ c_k(z)_{ {\mathrm{mod}}\bQ }
\ \in \bZ \oplus H^{odd}(M,{\mathbb{R/Q}}) \ .
\end{array}
\right.
\end{equation}
This is the situation that we expect to hold for quantum fields with trivial superselection structure 
in the Minkowski spacetime, as, for example, the free electromagnetic field. 
The classes (\ref{eq.ccsz}) are an interesting clue supporting the idea that Brunetti-Ruzzi sectors describe 
Aharonov-Bohm-type situations (\cite[\S 7]{BR08}): in fact, in the same way 
the monodromy phase of the quantum electromagnetic potential is described by a class in 
$H^1(M,{\mathbb{R/Z}})$, 
the holonomy of $z \in \obj Z^{1,\bullet}(\efR)$ is measured by the first Chern class $c_1(z) \in H^1(M,{\mathbb{R/Z}})$.
In this regard, we conjecture that the free electromagnetic field yields a net such that that $G_z$ is trivial for any 
$z \in \obj Z^{1,\bullet}(\efR)$.

\section{Group bundles as global gauge symmetries.}
\label{sec.gauge}

In the previous section we proved that the superselection structure $Z^{1,\bullet}(\efR)$ can be
described by a presheaf and, as a consequence of \cite{Vas12}, 
as the dual of a group (net) bundle $\efG$ when an embedding is given.
To illustrate the way in which $\efG$ should behave like a gauge group,
we give a class of examples in which an action of the type (\ref{eq.GA00}) is implemented and 
representations of the type defined in \S \ref{sec.back.nets} appear.
The procedure that we use is based on geometric arguments and can be applied in a very general setting
(the input is a field algebra carrying an action by a gauge group), thus it yields a new approach to 
construct superselection sectors with nontrivial holonomy, different from the one used in \cite{BFM09}
in the case of massive boson fields on the Einstein spacetime ($S^1 \times \bR$).

\subsection{Twisted field nets.}
Let $\efF = (F,\jmath)_\Delta$ be a net of $C^*$-algebras. We say that $\efF$ is $\bZ_2$-\emph{graded}
whenever there is a period $2$ automorphism
\[
\beta_a : F_a \to F_a
\ \ : \ \
\beta_{a'} \circ \jmath_{a'a} = \jmath_{a'a} \circ \beta_a
\ \ , \ \
\beta_a^2 = id_{F_a}
\ \ , \ \
\forall a \subseteq a' \in \Delta \ .
\]
In this case we set
\[
F^\pm_a \ := \ \{ t \in F_a : \beta_a(t) = \pm t  \}
\ \ , \ \
\forall a \in \Delta
\ ,
\]
so we have the usual projections
\[
F_a \to F^\pm_a
\ , \
t \mapsto t^\pm := 1/2 (t \pm \beta_a(t))
\ \ , \ \
\forall a \in \Delta \ .
\]
The following definition of normal commutation relations is a variant of \cite[Eq.3.6]{BR08}
taking account, differently from the above-cited reference, of the net structure of the involved representation.
At any rate, it is trivially verified that the following results remain valid with the older definition.
We say that a representation $(\pi,U)$ of $\efF$ is \emph{normal} whenever 
for any $a,o \in \Delta$ with $a \perp o$, $t_1 \in F_a$, $t_2 \in F_o$,
it turns out
\begin{equation}
\label{def.norm}
\left\{
\begin{array}{ll}
[ \pi_a(t_1^+) \, , \, \pi_o(t_2^+) ]^{net} =  
[ \pi_a(t_1^+) \, , \, \pi_o(t_2^-) ]^{net} = 
[ \pi_a(t_1^-) \, , \, \pi_o(t_2^+) ]^{net} = 0 \ ,
\\
\left[ \pi_a(t_1^-) \, , \, \pi_o(t_2^-) \right]^{net}_+ = 0 \ ,
\end{array}
\right.
\end{equation}
where the (anti)commutators are in the sense of (\ref{eq.nets1}) 
(with the convention $[ \cdot , \cdot ]^{net}_- \equiv [ \cdot , \cdot ]^{net}$).

Let now $H_*$ denote a Hilbert space.
A \emph{$\bZ_2$-graded group on $H_*$} is given by a group $G_*$ of unitary operators on $H_*$, 
metrizable and compact under the strong topology, and a central element 
$\gamma \in ZG_*$, $\gamma^2 = 1$,
where $1 \in G_*$ is the identity.
A \emph{$(G_*,\gamma)$-net of von Neumann algebras} is a net
\[
\efF_* = (F_*,\jmath_*)_\Delta \ \subset BH_*
\]
represented on the Hilbert space $H_*$ (that is, $\jmath_*$ is given by inclusion maps), 
such that
\begin{equation}
\label{eq.gauge1}
\alpha^g_a(t) := g t g^* \in F_{*,a}
\ \ , \ \
\forall g \in G_* \ , \ t \in F_{*,a} \ , \ a \in \Delta \ .
\end{equation}
We denote the fixed-point net by $\efR_* = (R_*,\jmath_*)_\Delta$, where
\[
R_{*,a} := F_{*,a} \cap G_*' \ \ , \ \ \forall a \in \Delta \ .
\]
Note that $\gamma$ induces a $\bZ_2$-grading on $\efF_*$. 
We say that $\efF_*$ is \emph{normal} whenever the defining Hilbert space representation $(id_{F_*},1)$ is normal in the sense of the previous lines. 
By (\ref{eq.nets1}), this agrees with the classical definition used, for example, in \cite{DR90}.
Finally, we define the normalizer group
\begin{equation}
\label{def.NG}
N_\gamma G_* \subset UH_* \ : \ V \in N_\gamma G_* \stackrel{\cdot}{\Leftrightarrow} 
\left\{  
\begin{array}{ll}
V F_{*,a} V^* = F_{*,a} \ , \ \forall a \in \Delta \ , \\
VgV^* \in G_* \ , \ \forall g \in G_* \ , \\
V \gamma = \gamma V \ .
\end{array}
\right.
\end{equation}
\begin{lem}[Twists of field nets]
\label{lem.gauge1}
Let $(G_*,\gamma)$ be a $\bZ_2$-graded group on the Hilbert space $H_*$ and 
$\efF_*$ a normal $(G_*,\gamma)$-net of von Neumann algebras on $H_*$.
Then for any morphism
\[
\chi : \pi_1(M) \to N_\gamma G_* \subset UH_*
\]
there are:
\begin{enumerate}
\item A group net bundle $\efG = (G,\hat \imath)_\Delta$ with fibre $G_*$;
\item A $\bZ_2$-graded net of \sC algebras $\efF = (F,\jmath)_\Delta$;
\item A gauge action $\efG \times_\Delta \efF \to \efF$ with fixed point net $\efR = (R,\jmath)_\Delta$;
\item A Hilbert net bundle $\efH = (H,U)_\Delta$ with holonomy $\chi$ and
      a normal representation $\pi : \efF \to \efB \efH$;
\item A causal
      {\footnote{That is $[ \pi_a(R_a) , \pi_o(R_o) ]^{net} = 0$ for any $o \perp a$, in the sense of (\ref{eq.nets1}).}}
      representation $(\pi|_R ,U|_R)$ of $\efR$.
\end{enumerate}
\end{lem}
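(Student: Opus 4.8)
The plan is to construct all five objects by a single "mapping torus / associated-bundle" recipe driven by the morphism $\chi$, mimicking at the level of nets the passage from a monodromy representation to a locally constant bundle recalled in \S\ref{sec.back.posets}. First I would build the group net bundle $\efG$. Since $\chi$ takes values in $N_\gamma G_* \subseteq UH_*$ and conjugation by elements of $N_\gamma G_*$ preserves $G_*$, the composite $\ad \circ \chi : \pi_1(\Delta) \to {\bf aut}G_*$ is a $\pi_1(\Delta)$-dynamical system with fibre $G_*$; by the equivalence (\ref{eq.back3}) (in its group-bundle version) this produces a group net bundle $\efG = (G,\hat\imath)_\Delta$ with fibre $G_*$. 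Its holonomy is $\ad\circ\chi$, and I record the concrete model: fixing a path frame $P_\omega$ with pole $\omega$ as in (\ref{def.pf}), put $G_a := G_*$ for all $a$ and $\hat\imath_{a'a} := \ad\big(\chi(\ell_{a'a})\big)$ where $\ell_{a'a}$ is the loop $p_{\omega a'} * (a',a;a') * p_{a\omega}$ (so that the cocycle relation holds up to the homotopy-invariance of $\chi$). The very same construction applied to the inclusions $F_{*,a} \hookrightarrow BH_*$ and to the Hilbert space $H_*$ itself gives a $\bZ_2$-graded net of \sC algebras $\efF = (F,\jmath)_\Delta$ with fibre $F_*$ (grading $\beta_a := \ad\gamma$, which is well defined on the fibre because every $\chi(p)$ commutes with $\gamma$) and a Hilbert net bundle $\efH = (H,U)_\Delta$ with $H_a := H_*$, $U_{a'a} := \chi(\ell_{a'a})$; by construction the holonomy of $\efH$ is $\chi$, proving (4)'s bundle part, and (1), (2) hold.

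Next, item (3): the gauge action. For each $a$ set $\alpha_a := \alpha|_{G_*}$, the given action $\alpha^g_a(t) = gtg^*$ from (\ref{eq.gauge1}), now read inside the fibre $F_a = F_*$ with structure group $G_a = G_*$. The compatibility (\ref{eq.GA00}) with the net structure amounts to the identity $\ad(\chi(\ell_{a'a})g) = \ad(\chi(\ell_{a'a})) \circ \ad(g) = \hat\imath_{a'a}(g) \cdot (\text{conjugation})$ on $F_*$, which is immediate since $\jmath_{a'a} = \ad\chi(\ell_{a'a})$ and $\hat\imath_{a'a} = \ad\chi(\ell_{a'a})$ as well; thus $\{\alpha_{a'}(\hat\imath_{a'a}(g))\}\circ\jmath_{a'a} = \jmath_{a'a}\circ\alpha_a(g)$ holds because both sides equal $\ad(\chi(\ell_{a'a})\,g\,\chi(\ell_{a'a})^*) \circ \jmath_{a'a}$ — here one uses that $\chi(\ell_{a'a})\,g\,\chi(\ell_{a'a})^* = \hat\imath_{a'a}(g)$ as an element of $G_*$. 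The fixed-point net is $R_a := F_a \cap G_a' = F_{*,a} \cap G_*' = R_{*,a}$, and the restriction of $\jmath$ to $\efR$ lands in $\efR$ because conjugation by $\chi(\ell_{a'a}) \in N_\gamma G_*$ normalizes $G_*$, hence preserves the commutant $G_*'$; this gives the net $\efR = (R,\jmath)_\Delta$. For item (4)'s representation, take $\pi_a : F_a = F_* \hookrightarrow BH_* = BH_a$ the identity inclusion; the intertwining relation $\ad U_{a'a}\circ\pi_a = \pi_{a'}\circ\jmath_{a'a}$ is exactly $\ad\chi(\ell_{a'a}) = \ad\chi(\ell_{a'a})$, so $(\pi,U)$ is a representation of $\efF$ on $\efH$, and restricting to $\efR$ gives the representation $(\pi|_R, U|_R)$ of item (5).

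It remains to verify the normality of $(\pi,U)$ — this is the one genuinely nontrivial point, and I expect it to be the main obstacle. The subtlety is that the net-commutators (\ref{eq.nets1}) involve conjugation by $U_p = U_{b_n}\cdots U_{b_1}$ for paths $p$ \emph{trivialized} by $\efH$, not just by a single $U_{a'a}$; one must check that for $a\perp o$ and a path $p : o \to a$ with $1$-simplices in some $\Omega$ on which $\efH$ is trivial, the graded commutation relations (\ref{def.norm}) hold between $\pi_a(t_1)$ and $U_p\,\pi_o(t_2)\,U_p^*$. The point established in the third bullet of the "Commutators on Hilbert net bundles" paragraph is decisive here: on such an $\Omega$ one has $U_p = V_a^* V_o$ for trivializing unitaries $V_e : H_e \to H'$, so $U_p\,\pi_o(t_2)\,U_p^*$ is $V_a^* V_o\,t_2\,V_o^* V_a$, and since $V_o t_2 V_o^* = V_o^{\phantom{*}}(\text{element of }F_{*,o})V_o^* $ still lies in $F_{*,o}$ up to the fixed global identification (the $V_e$ are compositions of the $\chi(\ell)$'s, which preserve $F_{*,o}$ and $\gamma$), the pair reduces to an honest pair $(t_1', t_2')$ in $F_{*,a}\times F_{*,o}$ inside $BH_*$, to which the assumed normality of $\efF_*$ — i.e. normality of $(id_{F_*},1)$ via (\ref{def.norm}) with $U\equiv 1$, which by the first bullet there is just the ordinary graded commutation in $BH_*$ — applies verbatim. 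The $\bZ_2$-gradings match because $\beta$ is implemented by $\ad\gamma$ on every fibre and $\gamma$ is central and $\chi$-invariant, so $\pi_a(t_1^\pm)$ corresponds to $(t_1')^\pm$. Causality of $(\pi|_R,U|_R)$ in the footnote's sense is the special case $t_1\in R_a$, $t_2\in R_o$ of the $++$ line of (\ref{def.norm}). Throughout, the only inputs beyond bookkeeping are: $N_\gamma G_*$ normalizes $G_*$ and fixes $\gamma$ (for well-definedness of $\efG$, $\beta$, and the preservation of $R$), and the stability of $F_{*,o}$ under $\chi(\ell)$-conjugation (for reducing net-commutators to fibre-commutators) — both immediate from (\ref{def.NG}).
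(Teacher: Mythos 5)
Your construction coincides with the paper's proof: the same path-frame twist $\imath_{a'a}=\chi(p_{\omega a'}*(a',a;a')*p_{a\omega})$ is used to define $\efG$, $\efF$, the gauge action and the Hilbert net bundle, and normality is reduced exactly as in the paper to the facts that $U_p$ lies in $N_\gamma G_*$, hence $U_p F_{*,o}U_p^* = F_{*,o}$ and $U_p\gamma=\gamma U_p$, so the ordinary graded commutation relations of the normal net $\efF_*$ apply fibrewise (and causality of $\pi|_R$ follows from $R_a\subseteq F_a^+$). The only blemish is the notational slip $F_a=F_*$, which should read $F_a=F_{*,a}$ (the local algebra, as your own reference to the inclusions $F_{*,a}\hookrightarrow BH_*$ makes clear); it does not affect the argument.
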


\begin{proof}
\emph{Point 1.} Given $\omega \in \Delta$ and a path frame $P_\omega$ as in (\ref{def.pf}) we define 
\[
\imath_{a'a} := \chi( p_{\omega a'}*(a',a;a') * p_{a \omega})
\ \ , \ \qquad
\forall a \subseteq a' \ .
\]
By homotopy invariance the family $\imath$ fulfils 
$\imath_{a''a'} \imath_{a'a} = \imath_{a''a}$, $\forall a \subseteq a' \subseteq a''$
(see \cite[\S 2.2]{Ruz05}). We define
\[
G_a \equiv G_*
\ \ , \ \
\hat \imath_{a'a}(g) := \imath_{a'a} g \, \imath_{a'a}^*
\ \ , \ \qquad
\forall g \in G_* \ , \ a \subseteq a' \in \Delta \ .
\]
\emph{Point 2.} We define $F_a := F_{*,a}$, $\forall a \in \Delta$, and
\[
\jmath_{a'a} : F_a \to F_{a'}
\ \ , \ \
\jmath_{a'a}(t) := i_{a'a} t \, i_{a'a}^*
\ \ , \ \qquad
\forall a \subseteq a'
\ , \
t \in F_a
\ .
\]
Since $\imath_{a''a'} \imath_{a'a} = \imath_{a''a}$, $\forall a \subseteq a' \subseteq a''$,
we conclude that $\efF = (F,\jmath)_\Delta$ is a net. Moreover, since 
\[
\imath_{a'a} \gamma   \ = \ \gamma \, \imath_{a'a}
\ \ , \ \ 
\forall a \subseteq a' \ ,
\]
we have a $\bZ_2$-grading on $\efF$ given by the adjoint action of $\gamma$.
\emph{Point 3.} Taking into account that $G_a \equiv G_*$ and $F_a = F_{*,a}$ for any $a \in \Delta$
we define $\alpha^g_a \in {\bf aut}F_a$ as in (\ref{eq.gauge1}). 
To check that $\alpha$ is a gauge action (\ref{eq.GA00}) must be fulfilled, so we compute, for any $a \in \Delta$, $g \in G_a$, $t \in F_a$,
\[
\alpha^{\hat i_{a'a}(g)}_{a'} \circ \jmath_{a'a}(t) \ = \
\imath_{a'a} g \, \imath_{a'a}^* \cdot \imath_{a'a} t \, \imath_{a'a}^* \cdot \imath_{a'a} g^* \, \imath_{a'a}^* \ = \
\imath_{a'a} g t g^* \imath_{a'a}^* \ = \
\jmath_{a'a} \circ \alpha^g_a(t) \ ,
\]
and this yields the desired equalities. Note that $R_a = R_{*,a}$ for any $a \in \Delta$, because $G_* \equiv G_a$.
\emph{Point 4.} We set $H_a := H_*$, $\pi_a(t) := t$, and $U_{a'a} := \imath_{a'a}$ for any 
$a \subseteq a' \in \Delta$, $t \in F_a$. 
The pair $(\pi,U)$ is clearly a representation, that is normal since, for any $o \perp a$,
$t \in F_a$, $t' \in F_o$,
\[
\pi_a(t) \cdot U_p \pi_o(t') U_p \ = \ t \cdot \imath_p t' \imath_p^*
\ \ , \ \
\forall p : o \to a
\ ,
\]
and by construction $\imath_p t' \imath_p^* \in F_{*,o}$. 
Finally, by construction $(H,U)_\Delta$ has holonomy $\chi$.
\emph{Point 5.} Since $\efF$ is graded by $\gamma \in G_*$ we have $R_a \subseteq F^+_a$, $\forall a \in \Delta$,
so $\pi|_R$ is causal by applying (\ref{def.norm}).
\end{proof}

The following notion is a natural generalization of \cite[Def.3.4]{DR90}.
Let 
$\efF = (F,\jmath)_\Delta$ 
be a $\bZ_2$-graded net with a faithful normal representation
$(\pi,U)$
on the Hilbert net bundle $\efH =(H,U)_\Delta$, $H = \{ H_a \}$.
We assume that there is a net bundle of $\bZ_2$-graded compact groups acting on 
$\efH$,
that is, we have a family
$G = \{ G_a \subset UH_a \}$,
and central, period 2 elements $\gamma_a \in G_a$, $a \in \Delta$,
such that 
$\ad U_{a'a}(G_a) = G_{a'}$, $\ad U_{a'a}(\gamma_a) = \gamma_{a'}$, 
$a \subseteq a' \in \Delta$,
in such a way that the group net bundle
$\efG = (G, \ad U)_\Delta$
is defined. We assume that $\efG$ defines, by adjoint action, a gauge action on $\pi(\efF)$.
In this case, we call the quadruple 
$(\efF,\pi,U,\efG)$
a \emph{field system}.

\begin{defn}
We say that field systems $(\efF_1,\pi_1,U_1,\efG_1)$, $(\efF_2,\pi_2,U_2,\efG_2)$
are \textbf{equivalent} whenever there is a unitary family
$\nu = \{ \nu_a : H_{1,a} \to H_{2,a} \}$ 
such that
\begin{equation}
\label{eq.gauge2}
\nu_{a'} U_{1,a'a} = U_{2,a'a} \nu_a
\ \ , \ \
\ad \nu_{a'} \circ \jmath_{1,a'a} = \jmath_{2,a'a} \circ \ad \nu_a
\ \ , \ \
\nu_a g \nu_a^* \in G_{2,a} 
\ \ , \ \
\nu_a \gamma_{1,a} = \gamma_{2,a} \nu_a
\ ,
\end{equation}
for any $a \subseteq a'$, $g \in G_a$.
\end{defn}

Note that if $\chi_k$, $k=1,2$, are the holonomies of $(H_k,U_k)_\Delta$, 
then the first of (\ref{eq.gauge2}) implies
\begin{equation}
\label{eq.gauge3}
\chi_2(p) = \nu_a \cdot \chi_1(p) \cdot \nu_a^* \ \ , \ \ \forall p : a \to a \ .
\end{equation}

\begin{cor}
\label{cor.gauge1}
If the morphism $\chi$ of the previous Lemma takes values in 
$G_* \subseteq N_\gamma G_*$
then $\efR$ is isomorphic to $\efR_*$. 
Moreover, when $\chi$ is non-trivial $\efR_*$ is the fixed-point net of inequivalent field systems, 
namely 
$(\efF_*,id_{F_*},1,G_*)$ and $(\efF,\pi,U,\efG)$.
\end{cor}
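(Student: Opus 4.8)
The proof splits along the two assertions of the statement.

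\emph{First assertion.} The plan is simply to unwind the construction in the proof of Lemma~\ref{lem.gauge1} under the extra hypothesis $\chi(\pi_1(M)) \subseteq G_*$. This makes every operator $\imath_{a'a}$ appearing there lie in $G_*$. Since by construction $R_a = R_{*,a} = F_{*,a} \cap G_*'$ for each $a$, and $\jmath_{a'a}(t) = \imath_{a'a}\, t\, \imath_{a'a}^*$, any $t \in R_a$ commutes with $\imath_{a'a} \in G_*$, so $\jmath_{a'a}(t) = t$. Hence the net structure that $\efR$ inherits from $\efF$ reduces to the family of inclusions $R_{*,a} \hookrightarrow R_{*,a'}$, i.e. $\efR$ literally coincides with $\efR_*$; in particular the constant unitary family $\nu_a := 1_{H_*}$ is an isomorphism of nets $\efR \to \efR_*$.

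\emph{Second assertion, set-up.} Next I would check that both quadruples are field systems in the sense of the definition preceding Corollary~\ref{cor.gauge1}. For $(\efF_*,id_{F_*},1,G_*)$ one uses that $\gamma$ induces a $\bZ_2$-grading on $\efF_*$, that $(id_{F_*},1)$ is a faithful normal representation on the constant Hilbert net bundle $\efH^0$ precisely by the definition of a normal $(G_*,\gamma)$-net, and that the constant group net bundle with fibre $G_*$ acts through (\ref{eq.gauge1}); the associated Hilbert net bundle $\efH^0$ has trivial holonomy. For $(\efF,\pi,U,\efG)$ one reads off from the proof of Lemma~\ref{lem.gauge1} that $U_{a'a} = \imath_{a'a}$, so that $\efG = (G,\ad U)_\Delta$; the elements $\gamma_a \equiv \gamma$ are central of order two in $G_a = G_*$ and are fixed by $\ad U_{a'a}$ because $\chi$, hence every $\imath_{a'a}$, commutes with $\gamma$ — this is where the hypothesis that $\chi$ takes values in $N_\gamma G_*$ (and not merely in $UH_*$) is used — and the Hilbert net bundle $\efH$ has holonomy $\chi$ by point~4 of the Lemma. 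By the first assertion, both field systems have $\efR_*$ as fixed-point net.

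\emph{Inequivalence.} Finally, suppose a unitary family $\nu = \{\nu_a : H_* \to H_*\}$ realised an equivalence of $(\efF_*,id_{F_*},1,G_*)$ with $(\efF,\pi,U,\efG)$. The first relation in (\ref{eq.gauge2}) then yields, via (\ref{eq.gauge3}), that $\chi(p) = \nu_a\, \chi^0(p)\, \nu_a^*$ for every loop $p : a \to a$, where $\chi^0$ is the holonomy of $\efH^0$; since $U^0 \equiv 1$ gives $\chi^0(p) = 1$ for all loops, we get $\chi(p) = 1$ for all loops, i.e. $\chi$ is trivial, contradicting the hypothesis. Hence the two field systems are inequivalent. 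I do not expect a genuine obstacle: the only delicate points are the bookkeeping required to confirm that each quadruple satisfies every clause of the field-system definition — in particular the compatibility of the central order-two elements, which forces the passage to $N_\gamma G_*$ — and the identification of the two fixed-point nets, which is exactly the content of the first part.
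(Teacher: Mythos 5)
Your proposal is correct and follows essentially the same route as the paper: the first assertion is obtained by noting that $\imath_{a'a}\in G_*$ acts trivially on $R_a=F_{*,a}\cap G_*'$, and the inequivalence follows from (\ref{eq.gauge3}) together with the fact that the holonomy of $U$ is $\chi$ while that of the untwisted system is trivial. The extra bookkeeping you include (verifying that both quadruples are field systems) is left implicit in the paper but is consistent with its construction.
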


\begin{proof}
Since $\chi$ takes values in $G_*$ we have $\imath_{a'a}(t) = t$ for any $a \subseteq a'$
and $t \in R_a$, so $\efR = \efR_*$. To prove that the above field systems are inequivalent 
it suffices to show that $U$ has a nontrivial holonomy, in fact the immersion 
$id_{F_*} : \efF_* \to BH$ has trivial holonomy and (\ref{eq.gauge3}) implies that
the holonomy of $U$ should be trivial too. But this is not the case, 
as the holonomy of $U$, by construction, coincides with $\chi$ (see the proof of Point 4 of the previous Lemma).
\end{proof}

One could argue that the only field net having a physical meaning is $\efF_*$.
But, as a matter of fact, the twisted field net $\efF$ has the same local algebras and causal properties as $\efF_*$,
and it can be constructed in such a way to support the Klein-Gordon and Dirac equations (see \S \ref{app.sym}).
Note that $\efF$ lives in a representation, $\pi$, having a nontrivial holonomy in the sense of \cite{BR08},
whose restriction to $\efR$ is equivalent to the (topologically trivial) representation of $\efR_*$ in the initial Hilbert space $H_*$;
thus the difference between $\efF$ and $\efF_*$ can be detected only in charged states which, in $\pi$,
carry topological effects given by the action of the fundamental group.

As a further remark, we observe that the technical ingredient underlying the construction of twisted field nets is that we allow 
$\jmath_{a'a}$, $a \subseteq a'$, 
to be actual *-monomorphisms and not simply inclusion maps. This is justified by the fact that the same happens for images
$\pi^z(\efR_*)$
of the observable net under the action of sectors $\pi^z$, $z \in \obj Z^1(\efR_*)$, see (\ref{eq.C00}).

\smallskip

At the conceptual level, the question in which we are interested in is whether $\efR_*$
and the superselection structure $Z^1(\efR_*)$ encode enough informations to select a unique field system.
On the grounds of \cite{Vas12} and the previous corollary we conclude that, to this end, further informations must be extracted from $Z^1(\efR_*)$.
A complete discussion on twisted field nets is beyond the purposes of the present paper, and we postpone it to a future work (see \S\ref{sec.concl}).

\subsection{Twisted nets and background potentials.}
\label{rem.qed}

We show how a class of twists arises when a quantum field interacts with a \emph{classical} background potential.
We start by assuming that $M$ is a globally hyperbolic spin manifold and consider the free Dirac field $\psi_*$
interacting with a (classical) closed 1--form $A \in Z^1_{deRham}(M)$.	

To this end we extract some informations from $A$.
A first point is that, since $A$ is closed, for any loop $\ell : [0,1] \to M$ the integral
$\int_\ell A $
is independent of the choice of $\ell$ in its homotopy class, thus we obtain the character
\begin{equation}
\label{eq,A0}
\zeta_A : \pi_1(M) \to \bT \ \ , \ \ \zeta_A([\ell]) \, := \, \exp 2\pi i \int_\ell A \ .
\end{equation}
Moreover, since any diamond $o \in \Delta$ is simply connected there is a local primitive $\phi_o \in C^\infty(o,\bR)$ 
such that $A|_o = d\phi_o$, and we have
\[
\phi_{o'} |_o - \phi_o \, = \, \Lambda_{o'o} \in \bR
\ \ \Rightarrow \ \ 
\Lambda_{o''o'} + \Lambda_{o'o} \, = \, \Lambda_{o''o}
\ , \ 
\forall o \subseteq o' \subseteq o'' \in \Delta
\ .
\]
Let now $b \in \Sigma_1(\Delta)$ and $\beta : [0,1] \to |b| \subset M$ such that $\beta(0) \in \partial_1b$, $\beta(1) \in \partial_0b$.
Then
\[
\int_\beta A \, = \, 
\int_\beta d\phi_{|b|} \, = \, 
\phi_{|b|}(\beta(1)) - \phi_{|b|}(\beta(0)) \, = \, 
\phi_{\partial_0b}(\beta(1)) - \phi_{\partial_1b}(\beta(0)) + \Lambda_b \ ,
\]
where 
$\Lambda_b \, := \, \Lambda_{|b|\partial_0b} - \Lambda_{|b|\partial_1b}$.
Note that the family $\Lambda = \{ \Lambda_b \}$ is a cocycle and it can be verified that it encodes the cohomology class of $A$ 
(see \cite[\S 3]{RRV08}). In particular the holonomy of $A$ can be reconstructed as follows.
Let $a \in \Delta$ and 
$p : a \to a$, $p = b_n * \ldots * b_1$; 
then using the isomorphism (\ref{eq.back1}) we have a homotopy class $[p] \in \pi_1(M)$ containing a loop 
$\ell : [0,1] \to M$, $\ell = \beta_n * \ldots * \beta_1$,
where each curve $\beta_k$, $k=1,\ldots,n$, fulfils the above properties with respect to $b_k \in \Sigma_1(\Delta)$ (see \cite[\S 2.5.1]{Ruz05}).
Thus, iterating the previous equality, we obtain 
\begin{equation}
\label{eq.A}
\int_\ell A \, = \, 
\sum_k \int_{\beta_k} A \, = \, 
\sum_k \Lambda_{b_k} 
\ .
\end{equation}
We now define a net by means of the family $\phi$ and the free Dirac field.
To this end we consider the spinor bundle $DM \to M$, the space $S^\infty_c(M,DM)$ of compactly supported, $C^\infty$-sections of $DM$ (spinors),
and, for any $o \in \Delta$, the space $\mS^\infty_o(M,DM)$ of spinors supported within $o \in \Delta$.
We note that we can define the product
\[
e^{2\pi i\phi_o} s \, \in S^\infty_o(M,DM)
\ \ , \ \ 
\forall o \in \Delta \ , \ s \in \mS^\infty_o(M,DM) \ .
\]
Let us consider the free Dirac field
$\psi_* : \mS^\infty_c(M,DM) \to BH_*$
in the sense of \cite{Dim82}. For any $o \in \Delta$ we define the operator-valued distribution
\[
\psi_o : \mS^\infty_o(M,DM) \to BH_*
\ \ , \ \ 
\psi_o(s) \, := \, \psi_*( e^{2\pi i\phi_o} s )
\ .
\]
Clearly, defining the field $\psi_o$ corresponds to considering the local gauge transformation 
\[
\psi_*(x) \, \to \, e^{2\pi i\phi_o(x)}\psi_*(x) \ \ , \ \ x \in o \ ,
\]
which yields in $o$ a solution for the Dirac equation with interaction given by $A$.
The net of von Neumann algebras $\efF_* = (F,\jmath_*) \subset BH_*$,
\[
F_o \, := \, \{ \psi_o(s) \, : \, s \in \mS^\infty_o(M,DM) \}'' \subset BH_* \ \ , \ \ o \in \Delta \ ,
\]
coincides with the one defined by $\psi_*$, 
thus at a first stage $\efF_*$ does not seem to preserve informations on the potential $A$;
but, in the following lines, we shall show that this is not the case.

We consider global gauge transformations
$\alpha_z(\psi_o(s))  :=  z \psi_o(s)$, $s \in \mS^\infty_o(M,DM)$, $z \in \bT$,
that are unitarily implemented in the fermionic Fock space $H_*$ by universality of the CARs,
\begin{equation}
\label{eq.gdirac}
\alpha_z(\psi_o(s)) \, = \, V_z \psi_o(s) V_z^* \ \ , \ \ \forall z \in \bT \ ,
\end{equation}
and set $G_* := \{ V_z , z \in \bT \} \subset UH_*$ with grading element $\gamma = V_{-1}$.
This yields the observable subnet 
$\efR_* = (R,\jmath_*)_\Delta$, $R_o = F_o \cap G_*'$, $o \in \Delta$.

\smallskip

Now, for any $o \subseteq o'$ we consider the map
\[
\jmath_{o'o}(\psi_o(s)) \, := \, 
\psi_{o'}(s) \, = \, 
\psi_*( e^{2\pi i (\phi_o + \Lambda_{o'o}) }s ) \, = \, 
e^{2\pi i \Lambda_{o'o}} \psi_o(s)
\ \ , \ \
s \in \mS^\infty_o(M,DM)
\ .
\]
By universality, any $\jmath_{o'o}$ extends to a *-monomorphism $\jmath_{o'o} : F_o \to F_{o'}$ 
and we obtain a net $\efF = (F,\jmath)_\Delta$.
A natural representation of $\efF$ can be constructed as follows:
we define the Hilbert net bundle $\efH = (H,U)_\Delta$, 
\[
H_o \equiv H_* \ \ , \ \ U_{o'o} \, := \, V_{ \exp 2\pi i \Lambda _{o'o} } \ \ , \ \ \forall o \subseteq o' \ ,
\]
and then note that, by (\ref{eq.gdirac}),
\[
\jmath_{o'o}(t) \, = \, 
\alpha_{ \exp 2\pi i \Lambda _{o'o} }(t) \, = \,
U_{o'o} \, t \, U_{o'o}^* 
\ \ , \ \ 
\forall o \subseteq o' \ , \ t \in F_o \ .
\]
The holonomy of $\efH$ is easily computed using (\ref{eq,A0}) and (\ref{eq.A}):
\[
U_p \, = \, \prod_k V_{\exp 2\pi i \Lambda _{b_k}} \, = \, V_{\zeta_A([\ell])} \in G_*
\ \ , \ \ 
\ell \in [p] \in \pi_1(M)
\ .
\]
Thus $\efF$ is a twisted net faithfully represented in the Hilbert net bundle $\efH = (H,U)_\Delta$. 
Performing a parallel displacement around the loop $p : a \to a$ yields a phase factor affecting the charged field operators,
\begin{equation}
\label{eq.AB2}
\ad U_p(\psi_a(s)) \, = \, \zeta_A([\ell]) \, \psi_a(s) \, = \, \exp 2\pi i \int_\ell A \, \cdot \, \psi_a(s)
\ \ , \ \
\ell \in [p] \in \pi_1(M)
\ .
\end{equation}
Thus, if $v_0 \in H_*$ is the vacuum, then the relative phase $\exp 2\pi i \int_\ell A$ appears between the states 
$\psi_a(s)v_0$, $U_p \psi_a(s) v_0 \in H_*$ (we use the notation in which $\psi_a(s)$ is defined by the creation operator).

\smallskip

To give an example we parametrize $\bR^4$ by the coordinates $x_0 , x_1 , x_2 , x_3$ and define $M = \bR^4 - S_\delta$,
where $S_\delta$ is the closed tube centered around the axis $x_3$ and having ray $\delta > 0$.
The base $\Delta$ is given by the one of double cones contained in $M$, which generates a nontrivial topology in the sense that
$\pi_1(\Delta) \simeq \pi_1(M) \simeq \bZ$
(note that $M$ is homotopic to $S^1$).

We consider a vector potential $A' \in \Omega^1(\bR^4)$ such that $F := dA' \in \Omega^2(\bR^4)$ is supported within $S_\delta$.
%
%
Note that this scenario is modeled in such a way to obtain a situation analogous to the one of the Aharonov-Bohm effect, see \cite[\S 15.5]{Fey}.
We can always find such 1--forms $A'$ by starting from $F \in \Omega^2(\bR^4)$ fulfilling the above property and defining $A'$ as a primitive of $F$.

As a final step we define $A \in \Omega^1(M)$, $A := A'|_M$. In this way, the previous hypothesis say that $A$ is closed and we can apply the 
previous construction, that gives rise to the phase factor (\ref{eq.AB2}). This last coincides with the phase affecting the wavefunctions of electrically charged particles
in the Aharonov-Bohm effect (see \cite[Eq.15.33]{Fey}).

\subsection{Current algebras on $S^1$.}
\label{sec.BMT}

We now apply Lemma \ref{lem.gauge1} to current algebras on $S^1$ defined as in the approach of Buchholz-Mach-Todorov
(\cite{BMT1,BMT2}); here, at a first stage, fields fulfilling braided commutation relations appear, anyway after a "bosonization" procedure a causal field net is constructed.
For a complete discussion of the model we refer the reader to the above-cited references;
here we limit ourselves to give the basic definitions, needed to see how Lemma \ref{lem.gauge1} applies.

\smallskip

The base $\Delta$ of $S^1$ is the one of intervals with proper closure, 
and we write $o \perp w \Leftrightarrow o \cap e = \emptyset$.
We take a fixed $n \in \bN$ (with no squares in the factor decomposition, see the above-cited references),
parametrize $S^1$ with complex numbers of modulus $1$, and for every $o \in \Delta$ consider the space
\begin{equation}
\label{eq.BMT0}
\mV^n_o \, := \, 
\{ \, q \in C^\infty(S^1,\bC) \, : \, {\mathrm{supp}}(q) \subseteq o \, , \, 
                                   zq(z) \in \bR , \forall z \in S^1 \, , \, 
                                   \oint \frac{1}{2\pi i} \, q(z) \, dz \, = \sqrt{2n} \, \}
\, .
\end{equation}
Here every $q \in \mV^n_o$ is interpreted as a charge distribution supported within $o \in \Delta$, see \cite[Eq.2.24]{BMT1}.
A field net $\efF^n_* = (F^n,\jmath)_\Delta \subset BH_*$ is defined, with local 
von Neumann algebras
\[
F^n_o \, = \, \{  \phi_q \in UH_* \, : \, q  \in \mV^n_o \}'' \ \ , \ \ o \in \Delta \ ,
\]
see \cite[Eq.4.93]{BMT1} and the following remarks.
The net $\efF^n_*$ is bosonic in the sense that $[ \phi_q , \phi_{q'} ] = 0$ whenever $\phi_q \in F_o$, $\phi_{q'} \in F_e$ with $o \perp e$,
see the discussion after \cite[Eq.4.82]{BMT1}.
There is a gauge group $G_* \simeq \bZ_{2n}$ generated by a unitary 
\[
V \, = \, e^{2\pi i (2n)^{-1/2} Q} \in UH_* \ \ : \ \ V^{2n} = 1 \ ,
\]
on which we define the grading element $\gamma = 1$; 
in the previous expression $Q$ denotes an unbounded operator on $H_*$ affiliated with the von Neumann algebra generated by $F^n$,
see \cite[Eq.4.40]{BMT1}, which is interpreted as the charge. 
We have
\begin{equation}
\label{eq.BMT1}
V \phi_q V^* \, = \, e^{i\pi / n} \phi_q
\ \ , \ \ 
\forall o \in \Delta \ , \ q \in \mV^n_o
\ ,
\end{equation}
see \cite[Eq.4.94]{BMT1}.
We can apply Lemma \ref{lem.gauge1} with morphisms of the form
\[
\chi : \bZ \simeq \pi_1(S^1) \to G_* \ ,
\]
that are determined by the choice of $\chi(1)$ that must be of the type $\chi(1) = V^\kappa$, $\kappa \in \{ 1 , \ldots , 2n-1 \}$.
As a consequence of Cor.\ref{cor.gauge1}, the twisted net $\efF$ defined by $\chi$ has observable net isomorphic to the untwisted net $\efR^n_* = (R^n,\jmath)_\Delta$
generated by operators in $\efF^n_*$ commuting with $V$.
We note that $V$ is, indeed, a central element of the von Neumann algebra $\vec{R}^n \subset BH_*$ generated by the family $R^n$, see \cite[Eq.4.90]{BMT1}. Any $\phi_q$ defines a localized automorphism of $\vec{R}^n$,
\[
\gamma_q \, := \, \ad \phi_q 
\ \ , \ \
\gamma_q(t) = t \, , \, \forall t \in R_e \, , \, e \perp {\mathrm{supp}}(q) \ .
\]
Passing to the twisted net $\efF^n$ we have a representation $\pi : \efF^n \to \efB\efH$, where $\efH = (H,U)_\Delta$ is the Hilbert net bundle
with fibres $H_o \equiv H_*$, $o \in \Delta$, defined as in Lemma \ref{lem.gauge1},
\[
\pi_o : F^n_o \to BH_* \ \ , \ \ \ad U_{o'o} \circ \pi_o \,= \, \pi_{o'}|_{F^n_o} \ \ , \ \ \forall o \subseteq o' \ ;
\]
here any $\pi_o$ is the inclusion map $F^n_o \to BH_*$, thus in the sequel it will be dropped from the notation.
It is instructive to exhibit the effect of the parallel displacement along a loop on the charged fields.
To this end, to fix the ideas we take $\kappa = 1$, that means $\chi(1) = V$.
Given $a \in \Delta$, the homotopy class of any loop $p : a \to a$ is labelled by an integer $[p] \in \bZ$, thus
\[
\ad U_p(\phi_q) \, = \, V^{[p]} \phi_q V^{[p],*} \, = \, e^{i\pi [p] / n} \phi_q 
\ \ , \ \ 
\forall \phi_q \in F^n_a
\ ,
\]
and a relative phase appears between the charged field and its parallel displacement.
We note that the interpretation of $\ad U_p(\phi_q)$ as the displacement of the field operator $\phi_q$ inducing 
the DHR automorphism $\gamma_q$ is in accord with the standard formulation of charge transporters, see \cite[\S 3.4.5]{Rob0}.

\subsection{Existence of field nets with a given gauge symmetry.}
\label{sec.free}
The following construction is on the line \cite{DP02}, where similar results are proven for the Minkowski spacetime.
We prove that for every suitable compact group $G_0$, globally hyperbolic spacetime $M$ and flat bundle $\mG \to M$ with suitable structure group, 
there is a twisted field net over which $\mG$ acts as a group of internal symmetries.
The "untwisted" net from which we start is defined by means of generalized free fields and yields a large class of toy models where the
analysis of topologically trivial superselection sectors of \cite{GLRV01,Ruz05} can be checked, as happens for \cite{DP02} in the Minkowski spacetime.

\smallskip

In the sequel we shall consider a $\bZ_2$-graded group $(G_0,\gamma)$, not necessarily realized on a Hilbert space,
where $G_0$ is compact and metrizable.
Let $\lambda$ be a symmetric and separating set
{\footnote{With \emph{symmetric} we mean that if $\varrho \in \lambda$ then the same is true for
           the conjugate representation $\ovl \varrho$, whilst the term \emph{separating} indicates
           that for any $g \neq g' \in G_0$ there is $\varrho \in \lambda$ such that $\varrho(g) \neq \varrho(g')$.}}
of irreducible subrepresentations of the regular representation of $G_0$.
We consider the corresponding Hilbert space
\[
L_\lambda := \oplus_{\varrho \in \lambda} L_\varrho \ \subset L^2(G_0) 
\]
on which $G_0$ acts by means of the faithful unitary representation
\begin{equation}
\label{def.rep}
G_0 \to UL_\lambda
\ \ , \ \
g \mapsto g_\lambda := \oplus_\varrho g_\varrho
\ ,
\end{equation}
where any $g_\varrho$ is a unitary acting on the irreducible, finite dimensional $G_0$-Hilbert space $L_\varrho$.
The grading element $\gamma$ induces the spectral decomposition
\[
L_\lambda = L_\lambda^+ \oplus L_\lambda^- \ ,
\]
which decomposes (\ref{def.rep}) into
\begin{equation}
\label{def.rep.pm}
G_0 \to UL_\lambda^\pm
\ \ , \ \
g \mapsto g_\lambda^\pm := \oplus_{\varrho(\gamma) = \pm 1} g_\varrho \ .
\end{equation}
Finally, we consider a real structure (that is, an antilinear, isometric map)
$J : L_\lambda^+ \to L_\lambda^+$, $J^2 = 1$,
commuting with (\ref{def.rep}) and stabilizing the subspaces $L_\varrho$ 
for any $\varrho \in \lambda$ with $\varrho(\gamma) = 1$, so that we have real structures
$J_\varrho : L_\varrho \to L_\varrho$.
We denote the \emph{real} subspace of $J$-invariant vectors in $L_\lambda^+$ 
by $L_\lambda^{+,J}$, which is a real Hilbert space with the scalar product induced by $L_\lambda^+$.

Now, we say that a compactly supported $f : M \to L_\lambda^{+,J}$ is $C^\infty$
whenever the function
$f_v(x) := ( f(x),v ) \in \bR$, $x \in M$,
is $C^\infty$ for any $v \in L_\lambda^{+,J}$, and denote the corresponding real vector space
by $C^\infty_c(M,L_\lambda^{+,J})$. 
Given the decomposition 
$f(x) = \{ f_\varrho(x) \in L_\varrho \}$, $x \in M$,
we define the real vector space
\[
\mD^+ \ := \ 
\{ f \in C^\infty_c(M,L_\lambda^{+,J}) \ : \
   f_\varrho \neq 0 \ {\mathrm{for \ finite \ }} \varrho \in \lambda  \} \ .
\]
The representation (\ref{def.rep}) induces an action by invertible linear maps,
\begin{equation}
\label{eq.CON0a}
G_0 \to GL \mD^+
\ \ : \ \
f_g(x) := g_\lambda^+ f(x)
\ \ , \ \qquad
\forall f \in \mD^+ \ , \ g \in G_0 \ , \ x \in M \ .
\end{equation}
Now, let us suppose that $M$ has a spin structure 
(this is guaranteed for 4-dimensional globally hyperbolic spacetimes, see \cite{Ish78} and related references).
We consider the spinor bundle $DM \to M$ (with fibre $\bC^4$), whose tensor product by $L_\lambda^-$
yields the bundle of Hilbert spaces 
$D_\lambda^- M \to M$, $D_\lambda^- M := L_\lambda^- \otimes DM$.
We have the direct sum decomposition 
\[
D_\lambda^- M \ \simeq \ \oplus_{\varrho(\gamma)=-1} D_\varrho M 
\ ,
\]
where any $D_\varrho M := L_\varrho \otimes DM \to M$ is a \emph{smooth} vector bundle.
Any section $s : M \to D_\lambda^- M$ decomposes as
$s = \{ s_\varrho : M \to D_\varrho M \}$,
and we say that $s$ is $C^\infty$ whenever every $s_\varrho$ is a $C^\infty$ section.
We denote the set of compactly supported, $C^\infty$ sections of $D_\lambda^-M$ by $\mS^\infty_c(M,D_\lambda^-M)$
and define
\[
\mD^- \ := \ 
\{ s \in \mS^\infty_c(M,D_\lambda^-M) \ : \
   s_\varrho \neq 0 \ {\mathrm{for \ finite \ }} \varrho \in \lambda \} \ ,
\]
that is a complex vector space with action
\begin{equation}
\label{eq.CON0b}
G_0 \to GL \mD^-
\ \ : \ \
s_g(x) := \{ g_\lambda^- \otimes 1 \} s(x)
\ \ , \ \
\forall s \in \mD^- \ , \ g \in G_0 \ , \ x \in M \ .
\end{equation}

\begin{lem}[Existence of field nets on hyperbolic spacetimes]
\label{lem.gauge2}
For any globally hyperbolic (spin) manifold $M$ and $\bZ_2$-graded group $(G_0,\gamma)$,
there are a Hilbert space $H_*$ with a faithful, strongly continuous representation 
$\tau : G_0 \to UH_*$,
and a normal $( \tau(G_0),\tau(\gamma) )$-net of von Neumann algebras $\efF_*$ on $H_*$.
\end{lem}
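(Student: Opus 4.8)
The plan is to realize $\efF_*$ as the (graded) tensor product of a bosonic generalized free field built out of $\mD^+$ and a free Dirac-type field built out of $\mD^-$, letting $G_0$ act by second quantization of the internal representations $g\mapsto g^\pm_\lambda$ of (\ref{def.rep.pm}). The whole point is that, since neither (\ref{eq.CON0a}) nor (\ref{eq.CON0b}) moves points of $M$, the $G_0$-action and the $\bZ_2$-grading can be routed entirely through the internal multiplicity spaces $L_\lambda^\pm$; this is precisely what the real structure $J$ makes possible on the bosonic side, where a real symplectic space is needed.

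First I would treat the bosonic sector, following Dimock \cite{Dim80}. Fix $m>0$; the Klein--Gordon operator $\Box_M+m^2$ on $M$ has advanced and retarded Green operators $E^\mp$ with causal support, and $E:=E^--E^+$ induces, after tensoring with $1_{L_\lambda^{+,J}}$, a real presymplectic form $\sigma^+(f_1,f_2):=\int_M\langle f_1(x),(Ef_2)(x)\rangle\,\rmd\mu_M(x)$ on $\mD^+$ with radical $(\Box_M+m^2)\mD^+$. Choosing any quasi-free (Hadamard) one-particle structure for the scalar theory yields a one-particle space of the form $\mK^+\otimes(L_\lambda^{+,J}\otimes_\bR\bC)$ with $G_0$ acting only through the second leg by $g_\lambda^+$ (which is orthogonal on $L_\lambda^{+,J}$ because it is unitary on $L_\lambda^+$ and commutes with $J$); in particular the vacuum is automatically $G_0$-invariant. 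Setting $H^+_*:=\mF_s(\mK^+\otimes L_\lambda^+)$, the Weyl operators $W^+(f)$, $f\in\mD^+$, generate von Neumann algebras, the automorphisms $W^+(f)\mapsto W^+(f_g)$ are implemented by $\Gamma^+(g):=\Gamma_s(1\otimes g_\lambda^+)$, and $g\mapsto\Gamma^+(g)$ is strongly continuous because $g\mapsto g_\lambda^+$ is.

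Dually, for the fermionic sector I would use Dimock \cite{Dim82}: on the spin manifold $M$ the Dirac operator has advanced and retarded propagators with causal supports, and their difference $S$, together with the Hermitian spinor form and the inner product of $L_\lambda^-$, defines a positive sesquilinear form on $\mD^-=L_\lambda^-\otimes\mS^\infty_c(M,DM)$; completion gives a one-particle Hilbert space $\mK^-=L_\lambda^-\otimes\mK^-_{\mathrm{Dir}}$ with $G_0$ acting through the first leg. On $H^-_*:=\mF_a(\mK^-)$ the CAR field $s\mapsto\Psi^-(s)$ generates von Neumann algebras, the Bogoliubov automorphisms $\Psi^-(s)\mapsto\Psi^-(s_g)$ are implemented by $\Gamma^-(g):=\Gamma_a(g_\lambda^-\otimes 1)$, strongly continuous in $g$. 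Then I would assemble $H_*:=H^+_*\otimes H^-_*$, $\tau(g):=\Gamma^+(g)\otimes\Gamma^-(g)$, and, for $o\in\Delta$,
\[
F_{*,o}:=\bigl(\{W^+(f)\otimes 1:f\in\mD^+,\ \mathrm{supp}\,f\subseteq o\}\cup\{1\otimes\Psi^-(s):s\in\mD^-,\ \mathrm{supp}\,s\subseteq o\}\bigr)''\subseteq BH_*,
\]
with $\jmath_*$ the inclusions. Since $f\mapsto f_g$ and $s\mapsto s_g$ preserve supports, $\tau(g)F_{*,o}\tau(g)^*=F_{*,o}$, which is (\ref{eq.gauge1}); hence $\efF_*=(F_*,\jmath_*)_\Delta\subseteq BH_*$ is a $(\tau(G_0),\tau(\gamma))$-net of von Neumann algebras. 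Faithfulness of $\tau$: if $\tau(g)=1$ then $1\otimes g_\lambda^+$ and $g_\lambda^-\otimes 1$ act trivially on the (non-zero) one-particle spaces, so $g_\lambda=g_\lambda^+\oplus g_\lambda^-=1$ on $L_\lambda$, and since $G_0\to UL_\lambda$ is faithful, $g=1$; strong continuity is factorwise. Finally, each $\varrho$ occurring in $L_\lambda^+$ has $\varrho(\gamma)=1$, so $\Gamma^+(\gamma)=1$, whereas $\gamma=-1$ on $L_\lambda^-$, so $\Gamma^-(\gamma)=(-1)^{N^-}$ is the fermion parity; thus $F^+_{*,o}$ is generated by the $W^+(f)$ and the even monomials in the $\Psi^-(s)$, and $F^-_{*,o}$ is its odd complement. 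For $o\perp e$ the Weyl relations with $\sigma^+(f,f')=0$ (causal support of $E$), the commutation of the two tensor legs, and the CAR relations with the causally vanishing Dirac form give that the three even products commute and the odd--odd product anticommutes; since $\jmath_*$ are inclusions, $[\cdot,\cdot]^{net}$ of (\ref{eq.nets1}) reduces to the ordinary (anti)commutator, so (\ref{def.norm}) holds for $(id_{F_*},1)$ and $\efF_*$ is normal.

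The main obstacle is not conceptual but infrastructural: the rigorous construction and causal-support analysis of the free Klein--Gordon and Dirac fields on an \emph{arbitrary} globally hyperbolic (spin) spacetime --- existence of the Green operators $E^\mp$, $S^\mp$, their supports in the causal shadow, and of a Hadamard/Fock representation. This is classical and can be imported wholesale from \cite{Dim80,Dim82} (as in \cite{DP02} for Minkowski space). With it in hand, the only genuinely new bookkeeping is that the $G_0$-action and the grading factor through $L_\lambda^\pm$ --- which is exactly why $J$ was introduced --- and the passage from separability of $\lambda$ to faithfulness of $\tau$.
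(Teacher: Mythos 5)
Your construction is correct and is essentially the paper's own: the appendix proof likewise tensors a Klein--Gordon field smeared over $\mD^+$ with a Dirac field smeared over $\mD^-$, imports the propagators and Fock representations from Dimock's papers \cite{Dim80,Dim82}, and routes the $G_0$-action and the $\bZ_2$-grading through the internal spaces $L_\lambda^{\pm}$ by second quantization. The only deviations are immaterial for the statement at hand: the paper works with Dimock's Cauchy-data Fock representation rather than an abstract Hadamard one-particle structure, and, following \cite{DP02}, it allows a representation-dependent mass function $\mu:\lambda\to(0,\infty)$ with $\inf\mu>0$ and $\mu(\varrho)=\mu(\ovl{\varrho})$ in place of your single mass $m$ --- a refinement not needed for the lemma itself but relevant to the broader program of controlling which sectors the fixed-point net actually sees.
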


The proof of the previous Lemma is postponed to \S \ref{app.sym} and is based on the construction
of a symplectic form on $\mD^+$ and a scalar product on $\mD^-$, that allow us to define, respectively, 
\begin{itemize}
\item a selfadjoint Klein-Gordon field $\phi(f)$, $f \in \mD^+$, on the Hilbert space $H_*^\phi$,
\item a Dirac field $\psi(s)$, $s \in \mD^-$, on the Hilbert space $H_*^\psi$,
\end{itemize}
and then define $H_* := H_*^\phi \otimes H_*^\psi$ and $F_{*,a}$, $a \in \Delta$, as the von Neumann algebra
generated by 
\[
\{ e^{i \phi(f)} \otimes 1_\psi \ , \ 1_\phi \otimes \psi(s) \ : \ 
   f \in \mD^+ \ , \ s \in \mD^- \ , \ {\mathrm{supp}} \, (f) \, , \, {\mathrm{supp}} \, (s) \subseteq a \} \ .
\]
We now define the (compact) normalizer group
\[
N^\lambda G_0 \ := \ \{ V = \oplus_\varrho V_\varrho \in UL_\lambda \ : \ 
                        V_\varrho g_\varrho V_\varrho^* \in \varrho(G_0) \ , \ \forall g \in G_0  \} \ .
\]
Note that $N^\lambda G_0$ acts by adjoint action on $G_0$ and, since $\gamma_\varrho = \pm 1$ for any $\varrho \in \lambda$,
we have
\[
V \gamma_\lambda V^* = \gamma_\lambda \ \ , \ \ \forall V \in N^\lambda G_0 \ .
\]
Moreover, any $V \in N^\lambda G_0$ splits as $V = V^+ \oplus V^-$, $V^\pm \in UL_\lambda^\pm$.
\begin{thm}
\label{thm.gauge1}
Let $M$ denote a globally hyperbolic (spin) manifold and $(G_0,\gamma)$ a $\bZ_2$-graded group. 
Given a symmetric, separating set $\lambda$ for $G_0$ and a morphism
\[
\chi : \pi_1(M) \to N^\lambda G_0 \ ,
\]
there is a $\bZ_2$-graded net of \sC algebras $\efF = (F,\jmath)_\Delta$ endowed with:
\begin{enumerate}
\item A Hilbert net bundle $(H,U)_\Delta$ carrying a normal representation $(\pi,U)$ of $\efF$;
\item A gauge action $\efG \times_\Delta \pi(\efF) \to \pi(\efF)$ 
      with fixed-point net $\efR$ causally represented on $(H,U)_\Delta$,
      where $\efG = (G, \hat \imath)_\Delta$ is the group net bundle with fibre $G_0$ and holonomy $\ad \chi$.
\end{enumerate}
\end{thm}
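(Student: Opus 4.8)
The plan is to deduce this from \lemref{lem.gauge1}, after realizing the abstract datum of \lemref{lem.gauge2} and transporting the twisting morphism $\chi$ from $N^\lambda G_0$ to a group of operators. First I would apply \lemref{lem.gauge2} to obtain a Hilbert space $H_*$, a faithful strongly continuous representation $\tau:G_0\to UH_*$ and a normal $(\tau(G_0),\tau(\gamma))$-net of von Neumann algebras $\efF_*\subset BH_*$; recall from the construction behind that lemma that $H_*=H_*^\phi\otimes H_*^\psi$ is assembled from the one-particle data $\mD^+$ and $\mD^-$, whose internal degrees of freedom lie in $L_\lambda^+$ and $L_\lambda^-$, and that $\tau$ is the second quantization of the actions (\ref{eq.CON0a}) and (\ref{eq.CON0b}) induced by (\ref{def.rep}).

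The central construction is a group homomorphism $\Phi:N^\lambda G_0\to N_{\tau(\gamma)}\tau(G_0)\subset UH_*$ extending $\tau$, in the sense that $\Phi(h_\lambda)=\tau(h)$ for $h\in G_0$ (here $h\mapsto h_\lambda$ is the embedding $G_0\hra N^\lambda G_0$, injective because $\lambda$ is separating). Any $V=\oplus_\varrho V_\varrho\in N^\lambda G_0$ splits as $V=V^+\oplus V^-$ with $V^\pm\in UL_\lambda^\pm$ and acts on the one-particle data ``internally'', i.e. pointwise in $x\in M$, by $f\mapsto V^+\!\circ f$ on $\mD^+$ and $s\mapsto(V^-\otimes1)\circ s$ on $\mD^-$; being unitary in the relevant internal space, these maps second-quantize to unitaries of $H_*^\phi$ and $H_*^\psi$, and one sets $\Phi(V)$ equal to their tensor product. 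Then $\Phi$ is multiplicative since second quantization is functorial on complex-linear gauge transformations; $\Phi(V)$ preserves every local algebra $F_{*,a}$ because the action of $V$ does not move spacetime supports; $\Phi(V)$ normalizes $\tau(G_0)$ because $V$ normalizes each $\varrho(G_0)$ fibrewise and $\lambda$ is separating, whence $Vh_\lambda V^*=(\ad V(h))_\lambda$ for the induced automorphism $\ad V$ of $G_0$ and so $\Phi(V)\tau(h)\Phi(V)^*=\tau(\ad V(h))$; and $\Phi(V)$ commutes with $\tau(\gamma)$ because $V$ fixes $\gamma_\lambda$ (as $\gamma_\varrho=\pm1$ on each $L_\varrho$). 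Thus $\Phi$ lands in $N_{\tau(\gamma)}\tau(G_0)$.

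Next, put $\chi':=\Phi\circ\chi:\pi_1(M)\to N_{\tau(\gamma)}\tau(G_0)\subset UH_*$ and apply \lemref{lem.gauge1} to the $\bZ_2$-graded group $(\tau(G_0),\tau(\gamma))$, the net $\efF_*$ and $\chi'$. This produces at once the $\bZ_2$-graded net $\efF=(F,\jmath)_\Delta$, the Hilbert net bundle $(H,U)_\Delta$ of holonomy $\chi'$ with its normal representation $(\pi,U)$ of $\efF$ (item (1)), and a group net bundle $\efG'$ with fibre $\tau(G_0)$ and holonomy $\ad\chi'$, acting on $\pi(\efF)$ with fixed-point net $\efR$ causally represented on $(H,U)_\Delta$ (item (2), up to the identification of the structure group). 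To finish I would transport $\efG'$ along the isomorphism $\tau:G_0\to\tau(G_0)$: the normalizer identity above gives $\ad\Phi(\chi(p))\circ\tau=\tau\circ\ad\chi(p)$ for every loop $p$, so $\efG'$ is isomorphic to the group net bundle $\efG=(G,\hat\imath)_\Delta$ with fibre $G_0$ and holonomy $\ad\chi$, and the gauge action on $\pi(\efF)$ together with its fixed-point net carry over verbatim.

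The point that must be handled with care is the bosonic half of the construction of $\Phi$: the space $\mD^+$ is the real symplectic space $C^\infty_c(M,L_\lambda^{+,J})$ cut out by the real structure $J$, and a normalizing unitary $V^+$ need not preserve $L_\lambda^{+,J}$, since $J$-compatibility is imposed by definition only on the $G_0$-action, not on $N^\lambda G_0$. By Schur's lemma one finds $J_\varrho V_\varrho J_\varrho=c_\varrho(V)V_\varrho$ with $c_\varrho$ a unimodular scalar, and one disposes of this either by restricting to the (still ample) subgroup of $N^\lambda G_0$ on which these scalars are trivial, or by a phase correction of $\Phi$ on the bosonic Fock space; making such a correction compatible with multiplicativity and with $\Phi|_{G_0}=\tau$ is the only genuine subtlety, and it is dealt with in \S\ref{app.sym} alongside the proof of \lemref{lem.gauge2}. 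On the fermionic side no such issue arises, since the Dirac field carries no reality constraint and $V^-\otimes1$ is simply a unitary gauge transformation of the one-particle space.
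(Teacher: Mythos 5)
Your proposal follows the paper's proof essentially verbatim: apply Lemma~\ref{lem.gauge2}, extend $\tau$ to a unitary representation of $N^\lambda G_0$ valued in $N_{\tau(\gamma)}\tau(G_0)$ by second-quantizing the actions $f\mapsto V^+f$ on $\mD^+$ and $s\mapsto (V^-\otimes 1)s$ on $\mD^-$, and feed $\tau\circ\chi$ into Lemma~\ref{lem.gauge1}. The only caveat is that the $J$-compatibility issue you flag on the bosonic side (a normalizing $V^+$ need not preserve the real subspace $L_\lambda^{+,J}$, so $f_V$ may fail to lie in $\mD^+$) is a genuine point but is \emph{not} in fact dealt with in \S\ref{app.sym} --- the paper simply declares these verifications trivial --- so your Schur-lemma discussion is an addition to, rather than a citation of, the paper's argument.
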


\begin{proof}
With the notation of the previous Lemma we define $G_* := \tau(G_0) \subset UH_*$.
So $(G_*,\tau(\gamma))$ is a $\bZ_2$-group on $H_*$, and by the previous Lemma we have a 
normal $(G_*,\tau(\gamma))$-net of von Neumann algebras $\efF_*$ on $H_*$.
Now, by Lemma \ref{lem.gauge1}, to prove the Theorem it suffices to show that the unitary representation
$\tau$ of the previous Lemma extends to a unitary representation $\tau$ of $N^\lambda G_0$ taking values
in $N_{\tau(\gamma)} G_*$, in fact in such a case we would have the morphism 
$\tau \circ \chi : \pi_1(M) \to N_{\tau(\gamma)} G_*$.
But the construction of the desired representation is easy, in fact it suffices to define the actions 
\[
\left\{
\begin{array}{ll}
f_V(x) := V^+ f(x)
\ , \
\forall x \in M
\ , \
f \in \mD^+
\\
s_V(x) := \{ V^- \otimes 1 \} s(x)
\ , \
\forall x \in M
\ , \
s \in \mD^-
\end{array}
\right.
\ \ , \ \
V \in N^\lambda G_0 \ ,
\]
that yield, by universality of the CCRs and CARs, the desired extension $\tau$ fulfilling
\[
\tau_V \{ e^{i \phi(f)} \otimes 1_\psi \} \tau_V^* \ = \ e^{i \phi(f_V)} \otimes 1_\psi 
\ \ , \ \
\tau_V \{ 1_\phi \otimes \psi(s) \} \tau_V^* \ = \  1_\phi \otimes \psi(s_V)
\ .
\]
The verifications that $\tau(N^\lambda G_0) \subseteq N_{\tau(\gamma)} G_*$
now are trivial (see (\ref{def.NG})) and the Theorem is proved.
\end{proof}

\section{Conclusions and the gerbe perspective.}
\label{sec.concl}

In the present paper we proved that superselection structures in curved spacetimes are section categories of 
presheaves of symmetric tensor \sC categories.
As a consequence, several new invariants are assigned to a sector:
a holonomy with values in a compact Lie group, characteristic classes, a 1--cochain defining a non-abelian cocycle.
As a further novelty, when an embedding is given a Tannaka duality describes the superselection structure in terms of 
equivariant Hilbert net bundles (or equivariant flat Hermitian bundles, see \cite{RRV08,Vas12}),
and this interpretation is supported by constructions of nets generated by quantum fields acted upon by group bundles.
These constructions (twisted field nets) are very general and can be applied to any (free or interacting) field
generating a Haag-Kastler net fulfilling normal commutation relations; 
moreover, they yield the expected phase factors on the charged fields that, in particular, can arise
as holonomies of classical, interacting potentials in accord with the scenario of the Aharonov-Bohm effect (\S \ref{rem.qed}).

Two points, that are object of a work in progress, remain to be discussed.

The first one is a further discussion of nets of the type studied in \S \ref{sec.gauge} from the viewpoint of sectors: 
apart from the verification of all the properties of Haag-Kastler net (punctured Haag duality, first of all),
we are interested to determine which representations, among those constructed in Theorem \ref{thm.gauge1}, 
fulfill the selection criterion \cite[\S 3.2]{BR08},
and then to prove that these are in one-to-one correspondence with representations of $\efG$,
on the line of \cite[Theorem 3.14-3.15]{GLRV01}.

The second point concerns the reconstruction, starting from the Haag-Kastler net $\efR$ and 
the presheaf $\efS$, of the \sC gerbe playing the role of the field net \cite{DR90}.
The program is to construct the local field algebras $F_a$, $a \in \Delta$, as the crossed products
\[
F_a \, := \, R_a \rtimes S^\bullet_a 
\]
in the sense of \cite[Theorem 4.14]{BL}, that can be applied because the DR-category (with simple units) $S^\bullet_a$ embeds in ${\bf end}R_a$
(which has a non-simple unit because the centre of $R_a$ is not trivial).
The further step is to recognize that there is a unique group gerbe encoding all the eventual precosheaf structures on the family $F$,
and this will follow by using universality of the above crossed product.
In this scenario, the inequivalent field systems of Cor.\ref{cor.gauge1} defining the same observable net
would find their justification, given by the fact that the involved group net bundles are different manifestations of the same gerbe;
but even if the group net bundles have isomorphic duals, 
only one "physical" field net $\efF$ should be selected by a proper set of invariants of $Z^{1,\bullet}(\efR)$.
The net $\efF$ is expected to contain a subnet $\efF_t$, the one associated with the subcategory $Z^{1,\bullet}_t(\efR)$ of $Z^{1,\bullet}(\efR)$.

\smallskip

On a similar research line, we would like to mention the paper \cite{Few12}, where a gauge group is defined in a functorial way. 
The scenario is the one of locally covariant quantum field theory (\cite{BFV03}), where generic functors are considered instead of nets.
In our setting, it seems natural to conjecture that Fewster's gauge group is the Doplicher-Roberts dual of $Z^{1,\bullet}(\efR)$.

\smallskip

In the following paragraph we explain what we mean by a gerbe over a poset
and give a construction of gerbes of \sC algebras by means of quantum fields.

\paragraph{Gerbes.}
We recall the definition of the sets
\[
N_1(\Delta) := \{ \, b = ( b_0 \subseteq |b| \in \Delta ) \, \}
\ \ , \ \
N_2(\Delta) := \{ \, c = ( c_0 \subseteq c_1 \subseteq |c| \in \Delta ) \, \} 
\ ;
\]
by \cite[\S 2.2]{RRV07}, there are inclusions 
$N_1(\Delta) \subseteq \Sigma_1(\Delta)$, $N_2(\Delta) \subseteq \Sigma_2(\Delta)$.
Let now $G$ be a group. A {\em $G$-gerbe} over $\Delta$ is a pair
$\check \efG = (i,\delta)_\Delta$,
\begin{equation}
\label{eq.CON1}
i : \Sigma_1(\Delta) \to {\bf aut}G
\ \ , \ \
\delta : \Sigma_2(\Delta) \to G
\ \ : \ \
\ad \delta_c \circ i_{\partial_1c} = i_{\partial_0c} \circ i_{\partial_2c}
\ , \
\forall c \in \Sigma_2(\Delta)
\ .
\end{equation}
The basic idea is that $i$ is a ${\bf aut}G$-cocycle (in the sense of \cite{RRV07}) 
only up to inner automorphisms, and the obstacle to get the cocycle relations is encoded by $\delta$. 
If $\efR$ is a Haag-Kastler net, then any sector $z \in \obj Z^{1,\bullet}(\efR)$ defines a gerbe $\check \efG_z=(i_z,\delta_z)_\Delta$, 
by means of the cochain $u_z : \Sigma_1(\Delta) \to NG_z$ defined by Theorem \ref{thm.gerbe}:
\begin{equation}
\label{eq.CON1a}
i_z(b) := \ad u_z(b) \ \ , \ \ \delta_z(c) := du_z(c)
\ \ , \ \qquad
b \in \Sigma_1(\Delta) \ , \ c \in \Sigma_2(\Delta) \ .
\end{equation}
A {\em $\check \efG$-\sC gerbe} $\check \efF$ is given by a family of \sC dynamical systems
$\alpha_a : G \to {\bf aut}F_a$, $a \in \Delta$,
on which *-monomorphisms
$\jmath_{a'a} : F_a \to F_{a'}$, $a \subseteq a'$,
are defined in such a way that
\begin{equation}
\label{eq.CON2}
\jmath_{|c| c_1} \circ \jmath_{c_1 c_0} = \alpha_{|c|}(\delta_c) \circ \jmath_{|c| c_0}
\ \ \ , \ \ \
\jmath_{|b| b_0} \circ \alpha_{b_0}(g)  = \alpha_{|b|}(i_b(g)) \circ \jmath_{|b|b_0}
\ ,
\end{equation}
for any 
$b \in N_1(\Delta)$, $c \in N_2(\Delta)$, $g \in G$.
The first equality in (\ref{eq.CON2}) generalizes (\ref{eq.00}),
whilst the second one is analogous to (\ref{eq.GA00}). 
Moreover, again by (\ref{eq.CON2}), the fixed point family 
$\efR = (R,\jmath)_\Delta$,
$R_a := F_a^G$, $a \in \Delta$,
is a {\em net} of \sC algebras, that we interpret here as the observable net.
%
%
When $\delta \equiv 1$ the gerbe $\check \efG$ defines a group net bundle, written $\efG$, 
and $\check \efF$ is a $\efG$-{\em net} of \sC algebras.

To illustrate how gerbes can arise in quantum field theory we give the following result:
\begin{thm}
\label{thm.gauge3}
Let $M$ denote a globally hyperbolic (spin) manifold and $(G_0,\gamma)$ a $\bZ_2$-graded group. 
Given a symmetric, separating set $\lambda$ for $G_0$ and a morphism
\[
\chi : \pi_1(M) \to N^\lambda G_0 / G_0 \ ,
\]
there are a $G_0$-gerbe $\check \efG = (i,\delta)_\Delta$ and a $\check \efG$-\sC gerbe 
$\check \efF = (F,\jmath)_\Delta$,
defining a fixed-point \textbf{net} $\efR = (R,\jmath)_\Delta$ causally represented on a Hilbert net bundle.
\end{thm}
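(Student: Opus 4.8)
The plan is to run the proof of \thmref{thm.gauge1} one level higher: since $\chi$ now takes values in the \emph{quotient} $N^\lambda G_0/G_0$, a set-theoretic lift of $\chi$ will fail the cocycle identity by a $G_0$-valued coboundary, and that failure is exactly the datum $\delta$ of a gerbe. First I would invoke \lemref{lem.gauge2} to get a Hilbert space $H_*$, a faithful strongly continuous $\tau\colon G_0\to UH_*$, and a normal $(\tau(G_0),\tau(\gamma))$-net of von Neumann algebras $\efF_*=(F_*,\jmath_*)_\Delta$ on $H_*$; then, exactly as in the proof of \thmref{thm.gauge1}, extend $\tau$ to a unitary representation $\tau\colon N^\lambda G_0\to UH_*$ with image in $N_{\tau(\gamma)}G_*$, where $G_*:=\tau(G_0)$ (one sets $f_V:=V^+f$ and $s_V:=(V^-\ot 1)s$ on the Klein--Gordon and Dirac generators and uses universality of the CCRs and CARs).

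Next I would build the gerbe. Fixing a pole $\omega$ and a path frame $P_\omega$ as in (\ref{def.pf}), the holonomy correspondence behind (\ref{eq.back1}) (applied, as in Point~1 of \lemref{lem.gauge1}, to arbitrary $1$-simplices rather than inclusions) turns $\chi\colon\pi_1(\Delta)\simeq\pi_1(M)\to N^\lambda G_0/G_0$ into a $1$-cocycle $z\colon\Sigma_1(\Delta)\to N^\lambda G_0/G_0$. Choosing any set-section of $N^\lambda G_0\to N^\lambda G_0/G_0$ and composing with $z$ gives a cochain $u\colon\Sigma_1(\Delta)\to N^\lambda G_0$ whose coboundary $du(c):=u(\dpa_0c)u(\dpa_2c)u(\dpa_1c)^*$ projects to $dz(c)=1$, hence lies in $G_0$ for every $c\in\Sigma_2(\Delta)$; this is precisely the cochain of \thmref{thm.gerbe}. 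Setting $i(b):=\ad u(b)$ -- a well-defined element of ${\bf aut}G_0$ since $u(b)$ normalises $G_0$ -- and $\delta(c):=du(c)\in G_0$, the relation $\ad\delta_c\circ i_{\dpa_1c}=i_{\dpa_0c}\circ i_{\dpa_2c}$ of (\ref{eq.CON1}) is immediate from $\delta_c\,u(\dpa_1c)=u(\dpa_0c)u(\dpa_2c)$. This gives the $G_0$-gerbe $\check\efG:=(i,\delta)_\Delta$ of (\ref{eq.CON1a}).

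Then I would put the gerbe structure on $F_a:=F_{*,a}$: $\alpha_a(g):=\ad\tau(g)|_{F_a}$ is a strongly continuous \sC dynamical system because $\tau(g)F_{*,a}\tau(g)^*=F_{*,a}$, and $\jmath_{a'a}(t):=\ad\tau(u(b_{a'a}))(t)$, with $b_{a'a}\in N_1(\Delta)\subseteq\Sigma_1(\Delta)$ the $1$-simplex of the inclusion $a\subseteq a'$, is a $*$-monomorphism $F_a\to F_{a'}$ since $\tau(u(b_{a'a}))\in N_{\tau(\gamma)}G_*$ stabilises $F_{*,a'}\supseteq F_{*,a}$. Both identities of (\ref{eq.CON2}) then reduce, using the simplicial face relations for $c\in N_2(\Delta)$ and the definitions of $i,\delta$, to $\tau(\delta_c)\tau(u(\dpa_1c))=\tau(u(\dpa_0c)u(\dpa_2c))$ and to $\tau(i_b(g))=\tau(u(b))\tau(g)\tau(u(b))^*$, which hold by construction; so $\check\efF=(F,\jmath)_\Delta$ is a $\check\efG$-\sC gerbe. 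Its fixed-point family $R_a:=F_a^{G_0}=F_{*,a}\cap G_*'$ is a genuine \emph{net} $\efR=(R,\jmath)_\Delta$: $\jmath_{a'a}(R_a)\subseteq R_{a'}$ because $\tau(u(b_{a'a}))$ normalises $G_*$, and the first relation of (\ref{eq.CON2}) collapses to the cocycle identity on $R$ because $\delta_c\in G_0$ and $R_a\subseteq G_*'$, so $\alpha_{|c|}(\delta_c)$ restricts to the identity on $\jmath_{|c|c_0}(R_{c_0})\subseteq R_{|c|}$.

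Finally, for the causal representation I would take $H_a:=H_*$, $\pi_a\colon R_a\hookrightarrow BH_*$ the defining inclusion, and $U_{a'a}:=\tau(u(b_{a'a}))$; then $\ad U_{a'a}\circ\pi_a=\pi_{a'}\circ\jmath_{a'a}$ on $R_a$, and causality $[\pi_a(R_a),\pi_o(R_o)]^{net}=0$ for $a\perp o$ follows verbatim from the argument for Point~5 of \lemref{lem.gauge1}, using that $\efF_*$ is normal, that $R_a\subseteq F_a^+$, and (\ref{def.norm})--(\ref{eq.nets1}). \textbf{The main obstacle} is that $U$ is, a priori, only a $1$-cochain: $U_{a''a'}U_{a'a}=\tau(\delta_{c})\,U_{a''a}$ for $c=(a\subseteq a'\subseteq a'')$, so one still has to promote $\efH=(H,U)_\Delta$ to an honest Hilbert net bundle by absorbing the $\tau(G_0)$-valued defect $\tau(\delta)$. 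The way out -- and this is where the gerbe rather than the group-bundle nature is used -- is that the defect is invisible to $\efR$: since $\delta$ is $G_0$-valued and $G_*=\tau(G_0)\subseteq R_a'$ acts trivially on every $R_a$, one is free to replace each $U_{a'a}$ within the coset $\tau(u(b_{a'a}))\,R_a'$, and I expect that this freedom (together with the size of the commutants $R_a'$) suffices to rectify $U$ into a cocycle without disturbing the compatibility with $\jmath$ or the causal estimates; pinning this down -- equivalently, identifying the resulting Hilbert net bundle and its holonomy -- is the step I expect to require the most care. A cruder alternative, enough for causality in the ordinary sense, is to represent the net $\efR$ on the trivial Hilbert net bundle, but the twisted choice is the one exhibiting the expected topological phase factors on the charged fields.
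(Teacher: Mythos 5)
Your construction of the gerbe $\check\efG=(i,\delta)_\Delta$ and of the $\check\efG$-\sC gerbe $\check\efF$ matches the paper's proof step for step: pull $\chi$ back to a quotient-valued cocycle on $\Sigma_1(\Delta)$ via a path frame, lift it to a cochain $V_b\in N^\lambda G_0$ whose coboundary $\delta_c$ lands in $G_0$, set $i_b:=\ad V_b$, $\jmath_{a'a}:=\ad\tau(V_{(a,a')})$, $\alpha_a(g):=\ad\tau(g)$, and check (\ref{eq.CON1}) and (\ref{eq.CON2}) by direct computation. Up to that point the proposal is correct and is essentially the paper's argument.

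The genuine gap is the last step, which you flag but do not close: exhibiting a Hilbert \emph{net bundle} (not merely a cochain of unitaries) on which $\efR$ is causally represented. Your suggested repair --- adjusting each $U_{a'a}=\tau(V_{(a,a')})$ within its coset modulo $UR_a'$ to absorb the defect $\tau(\delta_c)$ --- is not justified and is not obviously available: rectifying the cochain on all of $H_*$ in a way compatible with $\jmath$ amounts to trivializing the obstruction carried by $\delta$, which is exactly what the gerbe formalism is designed \emph{not} to assume (a global rectification by elements normalizing $G_*$ would essentially produce a lift of $\chi$ to $N^\lambda G_0$, which need not exist). The paper's resolution is simpler and sidesteps the issue entirely: restrict to the closed subspace $\ovl H_*\subseteq H_*$ of $G_0$-invariant vectors. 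Setting $\ovl U_{a'a}v:=\tau(V_{(a,a')})v$ for $v\in\ovl H_*$, the composition defect is $\tau(\delta'_c)$ with $\delta'_c:=V_{(a,a'')}^{-1}\delta_cV_{(a,a'')}\in G_0$, which acts as the identity on $\ovl H_*$; hence $\ovl U$ is automatically a cocycle and $(\ovl H,\ovl U)_\Delta$ is an honest Hilbert net bundle. Since each $R_a\subseteq G_*'$ preserves $\ovl H_*$, the inclusions $\ovl\pi_a(t)v:=tv$ define a representation intertwined by $\ovl U$ with the net structure $\jmath$, and causality follows as you indicate from normality of $\efF_*$ together with $\{\ad\tau(V)\}R_o\subseteq R_o$ for $V\in N^\lambda G_0$. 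Your ``cruder alternative'' (the trivial Hilbert net bundle with the identity inclusions) does not work as stated either, because $\jmath_{a'a}$ is a nontrivial automorphism on $R_a$ and the intertwining relation $\ad U_{a'a}\circ\pi_a=\pi_{a'}\circ\jmath_{a'a}$ would fail with $U_{a'a}=1$.
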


\begin{proof}
We take $\omega \in \Delta$, a path frame $P_\omega$, and define
\[
V^\da_b := \chi( p_{\omega \, \bo} * b * p_{\bl \, \omega} ) \in N^\lambda G_0 / G_0
\ \ , \ \
\forall b \in \Sigma_1(\Delta) \ .
\]
By homotopy invariance of $\chi$, the relations
\begin{equation}
\label{eq.gauge3.1}
V^\da_{\partial_0c} V^\da_{\partial_2c} = V^\da_{\partial_1c}
\ \ , \ \
c \in \Sigma_2(\Delta)
\ ,
\end{equation}
are fulfilled (see \cite[\S 2.2]{Ruz05}). For any $b \in \Sigma_1(b)$ we pick a $V_b \in N^\lambda G_0$
such that $V_b \,{\mathrm{mod}}G_0 = V^\da_b$, so (\ref{eq.gauge3.1}) implies
\begin{equation}
\label{eq.gauge3.2}
\delta_c := V_{\partial_0c} V_{\partial_2c} V_{\partial_1c}^{-1} \in G_0
\ \ , \ \
c \in \Sigma_2(\Delta)
\ .
\end{equation}
We define 
$\imath_b := \ad V_b$, $\forall b \in \Sigma_1(\Delta)$,
which yields, by (\ref{eq.gauge3.2}), the desired gerbe $\check \efG = (\imath,\delta)_\Delta$.
We now consider the normal $( \tau(G_0),\tau(\gamma) )$-net of von Neumann algebras $\efF_*$
of Lemma \ref{lem.gauge2}, extend $\tau$ to $N^\lambda G_0$ as in the proof of Theorem \ref{thm.gauge1} 
and set $G_* := \tau(G_0)$, so $\tau(NG_0)$ takes values in $N_\gamma G_*$.
Recalling the inclusion $N_1(\Delta) \subset \Sigma_1(\Delta)$ we define
\[
\jmath_{a'a} : F_a \to F_{a'}
\ \ , \ \
\jmath_{a'a}(t) := \{ \ad \tau(V_{(a,a')}) \}(t)
\ \ , \ \
\forall a \subseteq a' \in \Delta
\ , \
t \in F_a
\ .
\]
Moreover we consider the gauge actions
\[
\alpha_a : G_0 \to {\bf aut}F_a 
\ \ , \ \ 
\alpha_a(g) := \ad \tau(g)
\ \ , \ \
a \in \Delta \ , \ g \in G_0 \ .
\]
By (\ref{eq.gauge3.2}) we have
\[
\jmath_{|c| c_1} \circ \jmath_{c_1 c_0} = \alpha_{|c|}(\delta_c) \circ \jmath_{|c| c_0}
\ \ , \ \ 
\forall c \in N_2(\Delta) \ .
\]
Moreover, 
\[
\jmath_{|b| b_0} \circ \alpha_{b_0}(g) \ = \
\{ \ad \tau(V_b) \} \circ \ad \tau(g) \ = \ 
\ad \tau ( V_bg ) \ = \
\ad \tau (\imath_b(g) \, V_b) \ = \
\alpha_{|b|}(\imath_b(g)) \circ \jmath_{|b| b_0} \ ,
\]
so $\check \efF$ is a $\check \efG$-gerbe as desired, with fixed-point net $\efR = (R,\jmath)_\Delta$.
To construct a causal representation of $\efR$ we consider the Hilbert space $H_*$
of Lemma \ref{lem.gauge2} and the subspace $\ovl H_*$ of $G_0$-invariant elements.
Defining
$\ovl U_{a'a}v:= \tau(V_{(a,a')}) v$, $a \subseteq a'$, $v \in \ovl H_*$, we find, 
for any triple $c := \{ a \subseteq a' \subseteq a'' \in \Delta \}$,
\[
\ovl U_{a''a'} \ovl U_{a'a}v =
\tau(V_{(a',a'')} \, V_{(a,a')}) v =
\tau ( \delta_c \, V_{(a,a'')} ) v =
\tau (V_{(a,a'')} ) \, \tau(\delta'_c) v =
\tau (V_{(a,a'')} ) v =
\ovl U_{a''a}v \ ,
\]
where $\delta'_c := V_{(a,a'')}^{-1} \delta_c V_{(a,a'')} \in G_0$.
So 
$\ovl \efH = (\ovl H, \ovl U)_\Delta$, $\ovl H = \{ \ovl H_a \equiv \ovl H_* \}$, 
is a Hilbert net bundle.
We now define
\[
\ovl \pi_a : R_a \to B \ovl H_a
\ \ , \ \
\ovl \pi_a(t)v := tv
\ \ , \ \
\forall v \in \ovl H_a \subset H_*
\ , \
a \in \Delta 
\ .
\]
Since each $R_a$ is pointwise $G_0$-invariant we find that $tv \in \ovl H_a$, so the definition is well-posed.
Moreover
\[
\ad \ovl U_{a'a} \circ \ovl \pi_a(t) = 
\{ \ad \tau(V_{(a,a')}) \}(t) = 
\jmath_{a'a}(t) =
\ovl \pi_{a'} \circ  \jmath_{a'a}(t) \ ,
\]
for any $t \in R_a$, so the pair $(\ovl \pi,\ovl U)$ is a representation as desired.
Finally, to verify causality (in the sense of (\ref{def.norm})) 
we note that for any $o \perp a$, $t_1 \in R_a$, $t_0 \in R_o$,
\[
[ \pi_a(t_1) \, , \, \ad U_p \circ \pi_o(t_0) ]
\ = \
[ t_1 \, , \, \ad \tau(V_p)(t_0)  ]
\ \ , \ \
\forall p : o \to a 
\ ,
\]
and the last commutator is zero because 
$\{ \ad \tau(V)\}R_o \subseteq R_o$ for any $V \in N^\lambda G_0$ 
(recall (\ref{def.NG}) and that $\tau(N^\lambda G_0) \subseteq N_\gamma G_*$).
This suffices to prove causality, as $\efR$ has the same fibres of the fixed-point net of $\efF_*$,
which is causal.
\end{proof}

It is easily verified that $\chi$ has a lift to $N^\lambda G_0$ if, and only if,
there is a choice of $\{ V_b \}$ such that $V$ is a 1-cocycle (see \cite[Theorem 5.5]{Vas12}),
that is, if, and only if, $\check \efF$ can be arranged in such a way to be a net.
The point of the previous theorem is that we expect that the morphism $\chi$ is determined 
by the presheaf structure of $Z^{1,\bullet}(\efR)$, on the line of \cite[Theorem 5.3]{Vas12}.

\

\noindent {\it Acknowledgements.} The author would like to thank G. Ruzzi for fruitful discussions
and G. Morsella for precious comments on \S \ref{rem.qed}.

\appendix

\section{Symmetry and fields in globally hyperbolic manifolds.}
\label{app.sym}

\paragraph{The proof of the symmetry relations (\ref{eq.sym}).}
Before to make the necessary computations we need two Lemmata:

\begin{lem}
\label{lem.qft1}
For any $a \in \Delta$, cocycle $z_\rho \in \obj Z^1_c(\efR)_{< a}$ and paths $p,p' : e \to o$ in $\Delta^a$,
we have $z(p) = z(p')$.
\end{lem}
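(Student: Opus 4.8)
The plan is to reduce the statement to two standard facts: that the local poset $\Delta^a$ is path-connected and simply connected, and that a unitary $1$-cocycle defines a homotopy-invariant parallel transport along paths. Once both are in place, the lemma is immediate.

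First I would record that for any $a \in \Delta$ the set $\Delta^a = \{ e \in \Delta : e < a \}$, ordered by inclusion, is a good base for the topology of the open region $a \subseteq M$; since $a$ is a diamond, hence arcwise and simply connected, applying the isomorphism (\ref{eq.back1}) with $a$ in place of $M$ gives $\pi_1(\Delta^a) \simeq \pi_1(a) = 0$ (this is the kind of statement used in \cite[\S 3.2]{Ruz05} and \cite[App.~B]{BR08}). In particular any two paths $p, p' : e \to o$ with $1$-simplices in $\Sigma_1(\Delta^a)$ are homotopic within $\Delta^a$, i.e.\ through a finite sequence of elementary moves realized by $2$-simplices $c$ with $|c| < a$.

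Next I would recall the elementary fact that the assignment $p \mapsto z(p) = z(b_n) \cdots z(b_1)$ factors through homotopy, exactly as $\chi_p$ does in (\ref{eq.back3a}). The defining cocycle relation $dz(c) = z(\partial_0 c)\, z(\partial_2 c)\, z(\partial_1 c)^* = 1$, $c \in \Sigma_2(\Delta^a)$, says precisely that replacing inside a path the consecutive pair $\partial_2 c, \partial_0 c$ by the single $1$-simplex $\partial_1 c$ leaves $z(p)$ unchanged, while $z(\ovl b) = z(b)^*$ (which, for unitary cocycles, follows from the cocycle identity on the appropriate degenerate $2$-simplex) handles insertion and deletion of $b * \ovl b$. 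Since elementary homotopies are generated by exactly these moves (see \cite[\S 2.2]{Ruz05}), $z(p)$ depends only on the homotopy class of $p$ rel endpoints.

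Combining the two steps: $p$ and $p'$ are homotopic in $\Delta^a$, hence $z(p) = z(p')$. The only point requiring genuine care — and the step I expect to be the main obstacle to a fully rigorous write-up — is the first one: pinning down that $\Delta^a$, defined via the \emph{proper}-closure relation $<$ rather than mere inclusion, is indeed a good base for $a$ and so has trivial fundamental group, and checking that the homotopy realizing $p \simeq p'$ can be taken \emph{inside} $\Delta^a$ (i.e.\ using only $2$-simplices $c$ with $|c| < a$, so that the cocycle relation $dz(c)=1$ is available). Everything else is the standard flat-transport bookkeeping recalled above.
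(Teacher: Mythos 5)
Your proposal is correct and follows essentially the same route as the paper: the paper forms the loop $p * \ovl p'$, notes it is null-homotopic because $\pi_1(\Delta^a) \simeq \pi_1(a) = \mathbf{0}$ (the diamond $a$ being simply connected), and invokes homotopy invariance of $z(\cdot)$ from \cite[Lemma 2.6--7]{Ruz05} to get $z(p)z(p')^* = z(b_o) = 1$. The point you flag as delicate (that the homotopy lives inside $\Delta^a$, so only $2$-simplices with $|c| < a$ are used) is exactly the content the paper delegates to the cited lemmata of Ruzzi.
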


\begin{proof}
Let 
$\ovl p' : o \to e$
denote the opposite path.
Then $p * \ovl p' : o \to o$ is a loop in $\Delta^a$ and, since $\pi_1(\Delta) = \pi_1(a) = {\bf 0}$,
we have that $p * \ovl p'$ is homotopic to the trivial loop $b_o : o \to o$, $b_o := (o,o;o)$.
By \cite[Lemma 2.6-7]{Ruz05} we conclude that
$z(p)z(p')^* = z(p * \ovl p') = z(b_o) = 1$.
\end{proof}

The independence of $p$ allows us to write
$z_{oe} := z(p)$, $p : e \to o$;
note that $z_{eo} = z_{oe}^*$.

\begin{lem}
\label{lem.qft2}
The symmetry operator (\ref{eq.sym.z1}) is independent of the choice of 
$p_{oe}$, $p_{o'e}$, $o$ and $o'$,
for any $e < a$.
\end{lem}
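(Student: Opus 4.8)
The plan is to dispose of the four kinds of choices one at a time: first the paths $p_{oe},p_{o'e}$ with the causally disjoint pair $o,o'<a$ kept fixed, then the region $o$ with $o'$ fixed, then $o'$ with $o$ fixed, and finally to interpolate between two arbitrary admissible pairs. The first reduction is immediate: since $\Delta^a$ is simply connected, \lemref{lem.qft1} tells us that $z(p)$ and $w(p)$ depend only on the endpoints of $p$, so the right-hand side of (\ref{eq.sym.z1}) depends at most on $e$, on the pair $(o,o')$, and on $z_\rho,w_\sigma$. Accordingly I would write $z_{o'e}:=z(p_{o'e})\in R_a$ and $w_{oe}:=w(p_{oe})\in R_a$ and forget the chosen paths.

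Next I would fix $o'$ and compare two admissible choices $o$ and $o_1$ (so $o,o_1<a$ and $o\perp o'$, $o_1\perp o'$). The key geometric step is to pick a path $r:o\to o_1$ in $\Delta^a$ all of whose $1$-simplices are supported in diamonds causally disjoint from $o'$; such a path exists because, in a spacetime with at least two spatial dimensions, the causal complement of $o'$ inside $a$ is arcwise connected (cf.\ \cite[\S 2.1]{GLRV01}, \cite[App.~B]{BR08}). By \lemref{lem.qft1}, $w(p_{o_1e})=w(r)\,w_{oe}$, and iterating the intertwining relation (\ref{eq.C1}) along $r$ for the cocycle $w$ gives $\sigma_{o_1}(t)=w(r)\,\sigma_o(t)\,w(r)^*$ for all $t\in R_a$. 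Substituting these into (\ref{eq.sym.z1}) written for the pair $(o_1,o')$, all occurrences of $w(r)$ cancel except for the residual factor $w(r)^*\,z_{o'e}\,\rho_e(w(r))$. Iterating (\ref{eq.C1}) along $p_{o'e}$ yields $z_{o'e}\,\rho_e(w(r))=\rho_{o'}(w(r))\,z_{o'e}$, and since $w(r)$ is a product of operators localized in diamonds $\perp o'$ while $\rho_{o'}$ acts as the identity on such algebras by (\ref{eq.C0}), we get $\rho_{o'}(w(r))=w(r)$; hence the residual factor is just $z_{o'e}$ and $\eps(z_\rho,w_\sigma)_e$ is unchanged.

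The dual step, fixing $o$ and comparing $o'$ with $o_1'$ (both $\perp o$), is entirely analogous but simpler: choose $r':o'\to o_1'$ in $\Delta^a$ supported in diamonds $\perp o$, so that $z_{o_1'e}=z(r')\,z_{o'e}$ by \lemref{lem.qft1} and $\sigma_o(z(r'))=z(r')$ since $\sigma_o$ is the identity on algebras localized $\perp o$; then the factors $z(r')$ cancel directly in (\ref{eq.sym.z1}). To finish, given two arbitrary admissible pairs $(o_1,o_1')$ and $(o_2,o_2')$, I would invoke the geometric fact that $a$ contains a diamond $o^*<a$ causally disjoint from both $o_1'$ and $o_2'$ (their closures are compact proper subsets of $a$, so this follows by arguing as in \cite[Lemma B.5]{BR08}), and then chain the two previous steps along $(o_1,o_1')$, $(o^*,o_1')$, $(o^*,o_2')$, $(o_2,o_2')$.

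The main obstacle is the geometric input used twice above: producing the connecting path $r$ (resp.\ $r'$) whose $1$-simplices all lie in the causal complement of $o'$ (resp.\ $o$), and likewise producing the diamond $o^*$ spacelike to two given diamonds. This is precisely the place where the hypothesis of at least two spatial dimensions is indispensable — in one spatial dimension the causal complement of a diamond breaks into two components, so the connecting path need not exist and one is forced into a braiding rather than a symmetry — so some care is needed to quote the correct connectedness statement for the base of diamonds rather than the crude complement-of-a-compact-set argument.
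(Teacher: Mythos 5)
Your proof is correct, and it uses the same circle of ideas as the paper (path-independence from \lemref{lem.qft1}, the intertwining relations (\ref{eq.C1}), the localization property (\ref{eq.C0}), and connectedness of causal complements inside $\Delta^a$), but the decomposition is genuinely different. The paper interpolates from $(o,o')$ to $(\omega,\omega')$ by introducing a single auxiliary region $\bar\omega'\perp\omega,o$, performing one computation that changes \emph{both} slots at once (factoring $w_{oe}$ through $\omega$ and $z_{o'e}$ through $\bar\omega'$) to get $\eps^{oo'}_e=\eps^{\omega\bar\omega'}_e$, and then applying the same identity a second time with $\omega'$ as the new auxiliary region. You instead change one slot at a time via an explicit connecting path $r$ (resp.\ $r'$) lying in the causal complement of the fixed slot, using the conjugation formula $\sigma_{o_1}=\ad w(r)\circ\sigma_o$ and the cancellation $\rho_{o'}(w(r))=w(r)$, and then chain three such moves through an auxiliary $o^*$. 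Your version is slightly longer but has the merit of isolating exactly where the geometric hypothesis enters: the paper's steps labelled ``$o',\bar\omega'\perp o$'' silently use the same connectedness of the causal complement that you flag explicitly (to justify $\sigma_o(z_{o'\bar\omega'})=z_{o'\bar\omega'}$ one must choose the path from $\bar\omega'$ to $o'$ inside the causal complement of $o$, which is possible only because there are at least two spatial dimensions), and both proofs equally rely on the unproved-but-standard existence of a diamond in $\Delta^a$ causally disjoint from two given ones, which the paper also simply asserts when it picks $\bar\omega'$.
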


\begin{proof}
The independence of $p_{oe}$, $p_{o'e}$ follows by the previous Lemma,
so it remains to verify the independence of $o,o'$. 
To this end, we have to verify that defining $\eps(z,w)_e$ with
$\omega , \omega' < a$, $\omega \perp \omega'$,
in place of $o,o'$ we obtain the same symmetry operator.
As an intermediate step, we consider a region $\bar{\omega}' < a$ such that 
$\bar{\omega}' \perp \omega , o$.
Then
\[
\begin{array}{lcl}
\eps(z,w)_e  & \stackrel{ Lemma \, \ref{lem.qft1} }{=} &
w_{\omega e}^* w_{o \omega}^* \sigma_o(z_{o'\bar{\omega}'}z_{\bar{\omega}' e})^* \cdot 
z_{o'\bar{\omega}'} z_{\bar{\omega}'e} \rho_e(w_{o \omega} w_{\omega e})
\\ & \stackrel{ o',\bar{\omega}' \perp o }{=} &
w_{\omega e}^* w_{o \omega}^* \sigma_o(z_{\bar{\omega}' e})^* \cdot 
z_{o'\bar{\omega}'}^* z_{o'\bar{\omega}'} z_{\bar{\omega}'e} \rho_e(w_{o \omega}) \rho_e(w_{\omega e})
\\ & \stackrel{ (\ref{eq.C1}) }{=} &
w_{\omega e}^*  \sigma_\omega(z_{\bar{\omega}' e})^* w_{o \omega}^* \cdot 
z_{o'\bar{\omega}'}^* z_{o'\bar{\omega}'} z_{\bar{\omega}'e} \rho_e(w_{o \omega}) \rho_e(w_{\omega e})
\\ & \stackrel{ (\ref{eq.C0}) }{=} &
w_{\omega e}^*  \sigma_\omega(z_{\bar{\omega}' e})^* \cdot 
\rho_{\bar{\omega}'}(w_{o \omega})^* z_{\bar{\omega}'e} \rho_e(w_{o \omega}) \rho_e(w_{\omega e})
\\ & \stackrel{ (\ref{eq.C1}) }{=} &
w_{\omega e}^*  \sigma_\omega(z_{\bar{\omega}' e})^* \cdot 
z_{\bar{\omega}'e} \rho_e( w_{o \omega}^* w_{o \omega}) \rho_e(w_{\omega e})
\\ & = &
w_{\omega e}^*  \sigma_\omega(z_{\bar{\omega}' e})^* \cdot
z_{\bar{\omega}'e} \rho_e(w_{\omega e}) \ .
\end{array}
\]
Let us write $\eps^{oo'}_e$ to denote the symmetry obtained by by using $o , o' < a$.
Then the previous computations say that 
\begin{equation}
\label{correct}
\eps^{oo'}_e \ = \ \eps^{\omega \bar{\omega}'}_e 
\ \ , \ \
\forall \bar{\omega}' < a \ : \ \bar{\omega}' \perp \omega,o \ .
\end{equation}
Let us now return to our $\omega' < a$ such that $\omega' \perp \omega$.
Then the argument for proving (\ref{correct}) applies, and we have
\[
\eps^{\omega \bar{\omega}'}_e \ = \ \eps^{\omega \omega'}_e \ ;
\]
but by (\ref{correct}) we also find $\eps^{oo'}_e = \eps^{\omega \omega'}_e$,
as desired.
\end{proof}

%
%
%
%
%
%
%
%

In the sequel, to be concise, we shall write 
$z \equiv z_\rho  ,  w_\sigma \equiv w  ,  v \equiv v_\tau  \in  Z^1_c(\efR)_{< a}$.
To prove (\ref{eq.sym}.1) we compute
\[
\begin{array}{lcl}
\eps(z,w)_\bo \cdot z(b) \rho_{\bl}(w(b)) & \stackrel{ (\ref{eq.C1}) }{=} &
w_{o \bo}^* \sigma_o(z_{o' \bo})^* \cdot z_{o' \bo} z(b) \rho_\bl(w_{o \bo} w(b))
\\ & \stackrel{ Lemma \, \ref{lem.qft2} }{=} &
w(b) \sigma_\bl(z(b)) \cdot 
\sigma_\bl(z(b))^* w(b)^* w_{o \bo}^* \sigma_o(z_{o' \bo})^* \cdot 
z_{o' \bl} \rho_\bl(w_{o \bl}) \\ & \stackrel{ Lemma \, \ref{lem.qft2} }{=} &
w(b) \sigma_\bl(z(b)) \cdot 
\sigma_\bl(z(b))^* w_{o \bl}^* \sigma_o(z_{o' \bo})^* \cdot 
z_{o' \bl} \rho_\bl(w_{o \bl}) \\ & \stackrel{ (\ref{eq.int0}) }{=} &
w(b) \sigma_\bl(z(b)) \cdot 
w_{o \bl}^* \sigma_o(z(b))^* \sigma_o(z_{o' \bo})^* \cdot 
z_{o' \bl} \rho_\bl(w_{o \bl}) \\ & = &
w(b) \sigma_\bl(z(b)) \cdot 
w_{o \bl}^* \sigma_o(z_{o' \bl})^* \cdot z_{o' \bl} \rho_\bl(w_{o \bl})
\\ & \stackrel{ Lemma \, \ref{lem.qft2} }{=} &
w(b) \sigma_\bl(z(b)) \cdot \eps(z,w)_\bl \ .
\end{array}
\]
To prove (\ref{eq.sym}.2) we compute
\[
\begin{array}{lcl}
s_e \sigma_e(t_e) \cdot \eps(z,w)_e & = &
s_e \sigma_e(t_e) \cdot w_{oe}^* \sigma_o(z_{o'e})^* \cdot z_{o'e} \rho_e(w_{oe}) 
\\ & \stackrel{ (\ref{eq.int}) }{=} &
\sigma'_e(t_e) s_e \cdot w_{oe}^* \sigma_o(z_{o'e})^* \cdot z_{o'e} \rho_e(w_{oe})  
\\ & \stackrel{ (\ref{eq.int00}) }{=} &
\sigma'_e(t_e) {w'}_{oe}^* \cdot s_o \sigma_o(z_{o'e})^* \cdot z_{o'e} \rho_e(w_{oe}) 
\\ & \stackrel{ (\ref{eq.int0}),(\ref{eq.int}) }{=} &
{w'}_{oe}^* \sigma'_o(t_e)  \cdot \sigma'_o(z_{o'e})^* s_o \cdot z_{o'e} \rho_e(w_{oe}) 
\\ & \stackrel{ o \perp o' , (\ref{eq.C0}) }{=} &
{w'}_{oe}^* \sigma'_o(t_e z_{o'e}^*) \cdot \rho_{o'}(s_o)  z_{o'e} \rho_e(w_{oe}) 
\\ & \stackrel{ (\ref{eq.int00}) , (\ref{eq.int0}) }{=} &
{w'}_{oe}^* \cdot \sigma'_o({z'}_{o'e})^*  \sigma'_o(t_{o'}) \cdot z_{o'e} \rho_e(s_o w_{oe}) 
\\ & \stackrel{ o \perp o' , (\ref{eq.C0}) }{=} &
{w'}_{oe}^* \cdot \sigma'_o({z'}_{o'e})^*  t_{o'} \cdot z_{o'e} \rho_e(s_o w_{oe}) 
\\ & \stackrel{ (\ref{eq.int00}) }{=} &
{w'}_{oe}^* \sigma'_o({z'}_{o'e})^*  \cdot t_{o'} z_{o'e} \rho_e(w'_{oe}) \cdot \rho_e(s_e) 
\\ & \stackrel{ (\ref{eq.int00}) }{=} &
{w'}_{oe}^* \sigma'_o({z'}_{o'e})^*  \cdot z'_{o'e} t_e \rho_e(w'_{oe}) \cdot \rho_e(s_e) 
\\ & \stackrel{ (\ref{eq.int}) }{=} &
{w'}_{oe}^* \sigma'_o({z'}_{o'e})^*  z'_{o'e} \rho'_e(w'_{oe}) \cdot t_e \rho_e(s_e) 
\\ & = &
\eps(z',w')_e \cdot t_e \rho_e(s_e) \ .
\end{array}
\]
To prove (\ref{eq.sym}.3) we compute
\[
\begin{array}{lcl}
\eps(z,w)_e \eps(w,z)  & = &
w_{oe}^* \sigma_o(z_{o'e})^* \cdot z_{o'e} \rho_e(w_{oe}) \cdot
z_{oe}^* \rho_o(w_{o'e})^* \cdot w_{o'e} \sigma_e(z_{oe})
\\ & \stackrel{ Lemma \ref{lem.qft2} }{=} &
w_{oe}^* \sigma_o(z_{o'e})^* \cdot z_{o'e} \cdot 
\rho_e(w_{oe}) z_{o'e}^* \rho_{o'}(w_{oe})^* \cdot w_{oe} \sigma_e(z_{o'e})
\\ & \stackrel{ (\ref{eq.C1}) }{=} &
w_{oe}^* \sigma_o(z_{o'e})^* \cdot z_{o'e} z_{o'e}^* \cdot 
\rho_{o'}(w_{oe}) \rho_{o'}(w_{oe})^* \cdot w_{oe} \sigma_e(z_{o'e})
\\ & \stackrel{ (\ref{eq.C1}) }{=} &
w_{oe}^* \sigma_o(z_{o'e})^* \cdot w_{oe} \sigma_e(z_{o'e})
\\ & \stackrel{ (\ref{eq.C1}) }{=} &
w_{oe}^* w_{oe} \sigma_e(z_{o'e})^* \sigma_e(z_{o'e})
\\ & = &
1 \ .
\end{array}
\]
The second of (\ref{eq.sym}.3) is trivial to verify.
Finally we prove (\ref{eq.sym}.4). To this end, we note that by \cite[Lemma B.5]{BR08}, 
given $e < a$ we can pick $o \perp o'$, $o,o' < a$, such that:
\begin{itemize}
\item $\ovl o \perp e$;
\item there is $\omega \supseteq e,o'$ such that $\omega \perp o$.
\end{itemize}
The second point implies, using Lemma \ref{lem.qft1}, that
$w_{o'e} = w(b_0) \in R_\omega$,
where $b_0 = (o',e;\omega) \in \Sigma_1(\Delta^a)$ can be regarded as a path from $e$ to $o'$.
So (\ref{eq.C0}) implies
\begin{equation}
\label{eq.app1}
\rho_{o'}(w_{o'e}) \in R_\omega
\ , \
\tau_o(w_{o'e}) = w_{o'e}
\ \ \ \Rightarrow \ \ \
\tau_o \circ \rho_{o'}(w_{o'e}) \ = \
\rho_{o'}(w_{o'e})              \ = \ 
\rho_{o'} \circ \tau_o(w_{o'e}) \ ,
\end{equation}
and we can compute
\[
\begin{array}{lcl}
\eps( (z \otimes w) ,v)_e  & = &
v_{oe}^* \tau_o( z_{o'e} \rho_e(w_{o'e}) )^* \cdot z_{o'e} \rho_e(w_{o'e}) \cdot
\{ \rho_e \circ \sigma_e \}(v_{oe})
\\ & \stackrel{ (\ref{eq.C1}) }{=} &
v_{oe}^* \tau_o( \rho_{o'}(w_{o'e}) z_{o'e} )^* \cdot z_{o'e} \rho_e(w_{o'e}) \cdot
\{ \rho_e \circ \sigma_e \}(v_{oe})
\\ & \stackrel{ (\ref{eq.app1}) }{=} &
v_{oe}^* \tau_o( z_{o'e})^* \cdot \{ \rho_{o'} \circ \tau_o \} (w_{o'e}) )^*  z_{o'e} \cdot
\rho_e ( w_{o'e} \sigma_e(v_{oe}) )
\\ & \stackrel{ (\ref{eq.C1}) }{=} &
v_{oe}^* \tau_o( z_{o'e})^* \cdot
z_{o'e} \rho_e (\tau_o(w_{o'e}))^*  \cdot  
\rho_e ( w_{o'e} \sigma_e(v_{oe}) )
\\ & = &
v_{oe}^* \tau_o( z_{o'e})^* z_{o'e} \rho_e(v_{oe}) \cdot
\rho_e ( v_{oe}^* \tau_o(w_{o'e})^* w_{o'e} \sigma_e(v_{oe}) )
\\ & = &
\eps(z,v)_e \cdot \rho_e( \eps(w,v)_e )
\ .
\end{array}
\]

\paragraph{Field nets on globally hyperbolic manifolds: the proof of Lemma \ref{lem.gauge2}.}
We apply the construction of \cite{DP02} to the quantum fields defined in \cite{Dim80,Dim82}
(in the following, we maintain the notation of \S \ref{sec.free} for 
$\lambda$, $L_\lambda^\pm$, $L_\lambda^{+,J}$, $DM$, $\mD^+$ and $\mD^-$).
To this end, following \cite{DP02} we introduce a mass function
\[
\mu : \lambda \to (0,\infty) \ \ : \ \ \inf \mu > 0 \ \ , \ \ \mu(\varrho) = \mu(\ovl{\varrho}) \ .
\]
Given a smooth Cauchy hypersurface $\Sigma \subset M$, we consider
the restriction operator and the forward normal derivative respectively,
\[
\rho_0 \, , \, \rho_1  \ : \ C^\infty(M) \to C^\infty(\Sigma) \ ,
\]
and, given $m>0$, the linear operator
\[
E_m : C^\infty_c(M) \to C^\infty(M)
\]
defined by the fundamental solution of the Klein-Gordon equation with mass $m$ 
and initial data in $\Sigma$ (see \cite[\S 2]{Dim80}).

Let $f \in \mD^+$ and $v \in L_\lambda^{+,J}$. We consider the decompositions
$f(x) = \{ f_\varrho(x) \in L_\varrho \}$, $v = \{ v_\varrho \in L_\varrho \}$,
and define the $C^\infty$, compactly supported functions
$f_{\varrho,v}(x) := ( f_\varrho(x) , v_\varrho )_\varrho \in \bR$, $x \in M$.
Note that $f_{\varrho,v}$ is linear in $v$, and
\[
( f(x) , v ) \ = \ \sum_\varrho f_{\varrho,v}(x) \ , \ \forall x \in M \ ,
\]
where \emph{the sum is finite} because $f \in \mD^+$.
Let now 
$C^\infty_{\mathrm{fin}}(M,L_\lambda^{+,J})$ 
denote the linear space of compactly supported,
$L_\lambda^{+,J}$-valued functions having a finite set of non-zero components on $\varrho \in \lambda$.
By Riesz duality in $L_\lambda^{+,J}$ and linearity of $\rho_0 , \rho_1 , E_m$ we can define the extensions
\[
\left\{
\begin{array}{ll}
E_\mu : \mD^+ \to C^\infty_{\mathrm{fin}}(M,L_\lambda^{+,J})
\ \ : \ \
( \{ E_\mu f \}(x) , v ) := \sum_{\varrho \in \lambda} \{ E_{\mu(\varrho)} f_{\varrho,v} \} (x)
\\ \\
\rho_\bullet^\lambda : C^\infty_{\mathrm{fin}}(M,L_\lambda^{+,J}) \to C^\infty_{\mathrm{fin}}(\Sigma,L_\lambda^{+,J})
\ \ : \ \
( \{ \rho_\bullet^\lambda f \}(x) , v ) := \sum_{\varrho \in \lambda} \{ \rho_\bullet f_{\varrho,v} \} (x)
\ \ , \ \
\bullet = 0,1
\ ,
\end{array}
\right.
\]
where $v$, $x$ vary in $L_\lambda^{+,J}$, $M$ respectively.

\smallskip

We now consider the Hilbert space $L^2(\Sigma,L_\lambda^+)$ and the associated bosonic Fock space, 
written $H_\Sigma^+$, carrying the representation of the canonical commutation relations
\[
[ a(h) , a^*(h') ] \ = \ \int_\Sigma (h,h')
\ \ , \ \
h , \tilde h \in L^2(\Sigma,L_\lambda^+)
\ .
\]
The argument of \cite[Theorem 2]{Dim80} shows that a representation of the canonical commutation relations is realized
on the Hilbert space $H_*^\phi := H_\Sigma^+$,
\[
[ \phi(f) , \phi(f') ] \ = \ -i \int_M (f , E_\mu f')
\ \ , \ \
\forall f,f' \in \mD^+
\ ,
\]
where
\[
\phi(f) := \theta(\rho_1^\lambda E_\mu f) - \pi(\rho_0^\lambda E_\mu f)
\ \ \ , \ \ \
\theta := 2^{-1/2}(a^*+a)
\ , \
\pi := i2^{-1/2}(a^*-a)
\ ,
\]
is well-defined because 
$\rho_1^\lambda E_\mu f , \rho_0^\lambda E_\mu f$ have compact support in $\Sigma$ 
(see the proof of \cite[Cor.1.3]{Dim80}).

\

We now consider the Dirac field, following \cite{Dim82}.
The strategy is the same of the scalar field, with the technical complications due to the spinor structures.
As a first step we consider the dual bundle $DM^* \to M$ of $DM$ and denote the set of $C^\infty$, 
compactly supported sections (cospinors) by $\mS^\infty_c(M,DM^*)$, so we have the pairing
$\left \langle \omega , s \right \rangle \in C^\infty_c(M,\bC)$,
$\omega \in \mS^\infty_c(M,DM^*)$, $s \in \mS^\infty_c(M,DM)$.
Again, we consider the restriction operator to the Cauchy hypersurface $\Sigma \subset M$ 
(endowed with the spin structure induced by $M$),
\begin{equation}
\label{eq.Dir1}
\rho : \mS^\infty(M,DM) \to \mS^\infty(\Sigma,D \Sigma) \ ,
\end{equation}
and the fundamental solution of the Dirac equation with mass $m > 0$,

\begin{equation}
\label{eq.Dir2}
S_m : \mS^\infty_c(M,DM) \to \mS^\infty(M,DM)
\end{equation}
(see \cite[\S II.A]{Dim82}). 
Moreover we consider the representation of vector fields as operators on the space of spinors,
defined, \emph{in local frames}, 
by contraction with the Clifford-Dirac $\gamma$-matrices (\cite[\S I.A]{Dim82}),
\begin{equation}
\label{eq.Dir3}
\mS^\infty(M,TM) \times \mS^\infty_c(M,DM) \to \mS^\infty_c(M,DM)
\ \ , \ \
\textbf{v} , s \mapsto \bar{\textbf{v}} s := ( \textbf{v}^h \gamma^i_{hk} s^k )_i \ ,
\end{equation}
and the Dirac conjugation, defined (in local frames) by applying to spinors the matrix $\beta \in \bSU(4)$ 
intertwining the $\gamma$-matrices with their adjoints (\cite[\S I.B]{Dim82}),
\begin{equation}
\label{eq.Dir4}
\mS^\infty_c(M,DM) \to \mS^\infty_c(M,DM^*) \ \ , \ \ s \mapsto s^+ := (s^h \beta_{hk})_k \ .
\end{equation}
%
%
If $\textbf{n}$ is the vector field forward normal to $\Sigma$, then we consider the integral
\begin{equation}
\label{eq.Dir5}
\int_\Sigma \left \langle s^+ , \bar{\textbf{n}} s' \right \rangle 
\ \ , \ \ 
s,s' \in \mS^\infty_c(\Sigma,D\Sigma)
\ ,
\end{equation}
that is a scalar product on $\mS^\infty_c(\Sigma,D\Sigma)$ (\cite[\S III.A]{Dim82}).

\smallskip

We now extend the coefficients considering the Hilbert bundles
$D_\lambda^-M \to M$ (\emph{generalized spinors})
and 
$D_\lambda^{-,*}M \to M$,
where $D^{-,*}_\lambda M  := L^{-,*}_\lambda \otimes DM^*$ is the tensor product
of the conjugate space $L^{-,*}_\lambda$ by the cospinor bundle (\emph{generalized cospinors}).
The restrictions to $\Sigma$ shall be denoted with analogous notations.
We have pairings 
\[
\mS^\infty_{c , {\mathrm{fin}} }(X,D_\lambda^{-,*}X) \times \mS^\infty_{c , {\mathrm{fin}} }(X,D_\lambda^-X)
\to
C^\infty_c(X,\bC)
\ \ , \ \
\omega , s \mapsto \left \langle \omega , s \right \rangle_\lambda^-
\ \ , \ \
X = M , \Sigma
\ ,
\]
where the symbol "fin" indicates that we are considering (co)spinors with finitely many components on 
$\varrho \in \lambda$.
We also have extensions of (\ref{eq.Dir1}) and (\ref{eq.Dir2}), respectively,
\[
\left\{
\begin{array}{ll}
\rho^\lambda : \mS^\infty(M,D_\lambda^-M) \to \mS^\infty(\Sigma,D_\lambda^-\Sigma)
\ \ , \ \
\rho^\lambda := \rho \otimes 1_- \ ,
\\ \\
S_\mu : \mD^- \to \mS^\infty(M,D_\lambda^-M)
\ \ , \ \
S_\mu s := \sum_\varrho^\oplus \{ S_{\mu(\varrho)} \otimes 1_- \} s_\varrho 
\end{array}
\right.
\]
(The sum of the previous expression is finite for any $s \in \mD^-$, so no convergence questions arise).

Moreover, (\ref{eq.Dir5}) yields a scalar product on $\mS^\infty_c(\Sigma,D_\lambda^-\Sigma)$ defined, 
on elementary tensors of the type
$s = s_1 \otimes s_2 \in \mS^\infty_c(\Sigma,D_\lambda^-\Sigma)$,
$s_1 \in \mS^\infty_c(\Sigma,D\Sigma)$,
$s_2 \in C^\infty_c(\Sigma,L^-_\lambda)$,
as
\[
\int_\Sigma \left\{ \left \langle s_1^+ , \bar{\textbf{n}} s'_1 \right \rangle  \,  (s_2 , s'_2) \right\}
\ \ , \ \ 
s,s' \in \mS^\infty_c(\Sigma,D_\lambda^-\Sigma)
\ .
\]
We denote the corresponding completion by $L^2(\Sigma,D_\lambda^-\Sigma)$.
Any generalized cospinor 
$\omega \in \mS^\infty_c(\Sigma,D_\lambda^{-,*}\Sigma)$ 
can be regarded as a linear functional on $L^2(\Sigma,D_\lambda^-\Sigma)$,
by extending the map
\[
s    \ \mapsto \    \int_\Sigma \left \langle \omega , s \right \rangle_\lambda^-
\ \ , \ \
s \in \mS^\infty_c(\Sigma,D_\lambda^-\Sigma)
\ ,
\]
to generic vectors in $L^2(\Sigma,D_\lambda^-\Sigma)$. 
So $\mS^\infty_c(\Sigma,D_\lambda^{-,*}\Sigma)$ embeds in the conjugate space
$L^{2,*}(\Sigma,D_\lambda^-\Sigma)$.

\smallskip

We now denote the fermionic Fock space of  $L^2(\Sigma,D_\lambda^-\Sigma)$ by $H_\Sigma^-$.
It carries a representation of the CARs
\[
X^*(h) X(\varphi) + X(\varphi)X^*(h) 
\ = \ 
\left \langle \varphi , h \right \rangle 
\ \ , \ \
h \in L^2(\Sigma,D_\lambda^-\Sigma) \ , \ \varphi \in L^{2,*}(\Sigma,D_\lambda^-\Sigma)
\ .
\]
We set $H_*^\psi := H_\Sigma^-$ and
\[
\psi(s) \ := \ -i X^*(\rho^\lambda S_\mu s)  \ \ , \ \  s \in \mD^- \ .
\]
These 
{\footnote{
In the reference \cite{Dim82} the analogue of the field $\psi$ is denoted by $\psi^+$,
whilst the notation $\psi$ is used for the Dirac field evaluated on cospinors.
}}
are well-defined operators in $BH_*^\psi$ since $\rho^\lambda S_\mu s$ has compact support 
for any $s \in \mD^-$ (see (\cite[III.B]{Dim82})). 
Recalling the definitions of $\rho^\lambda$ and $S_\mu$, 
and applying to them \cite[Prop.2.4(d)]{Dim82} we obtain, as desired, the CARs
\[
\psi^*(s) \psi(s') + \psi(s') \psi^*(s)
\ = \ 
-i \int_M \left \langle s^+ , S_\mu s' \right \rangle_\lambda^-
\ \ \ , \ \ \
s,s' \in \mD^- \ .
\]


{\small

}


\begin{thebibliography}{99}


\bibitem{AB59}
Y. Aharonov, D. Bohm:
Significance of electromagnetic potentials in quantum theory. 
Phys. Rev. 115 (1959) 485-491.


\bibitem{AI79}
S.J. Avis, C.J. Isham:
Quantum field theory and fibre bundles in a general spacetime. 
In: Recent Developments in Gravitation.
Cargese 1978, ed. by M. Levy and S. Deser, Springer US.


\bibitem{Ara}
H. Araki: 
Mathematical theory of quantum fields.
Oxford University Press, Oxford, 2009.




\bibitem{BFM09}
R. Brunetti, L. Franceschini, V. Moretti: 
Topological features of massive bosons on two dimensional Einstein space-time. 
Ann. Henri Poincar\'e 10 (2009) 1027-1073.


\bibitem{BFV03}
R. Brunetti, K. Fredenhagen, R. Verch:
The generally covariant locality principle -- A new paradigm for local quantum physics. 
Comm. Math. Phys. 237 (2003) 31-68.


\bibitem{BMT1}
D. Buchholz, G. Mack, I. Todorov:
Localized automorphisms of the U(1)-current algebra on the circle. An instructive example.
In book: The algebraic theory of superselection sectors. Introduction and recent results, 
Publisher: World Scientific 1990, Editors: D. Kastler, pp.356-378. \\


\bibitem{BMT2}
D. Buchholz, G. Mack, I. Todorov:
The current algebra on the circle as a germ of local field theories.
Nuclear Physics B (Proc.Suppl.) 5B (1988) 20-56.



\bibitem{BL}
H. Baumgaertel, F. Lled\'o: 
Duality of compact groups and Hilbert $C^*$-systems for $C^*$-algebras with a nontrivial center.
Int. J. Math. 15 (2004) 759-812.



\bibitem{BR08}
R. Brunetti, G. Ruzzi: 
Quantum charges and spacetime topology: The emergence of new superselection sectors.
Comm. Math. Phys. 87 (2009) 523-563.


\bibitem{CS85}
J. Cheeger, J. Simons: 
Differential characters and geometric invariants.
In: Geometry and Topology, Lecture Notes in Mathematics 1167, Springer, Berlin (1985) 50-80.


\bibitem{Dim80}
J.Dimock: 
Algebras of local observables on a manifold.
Comm. Math. Phys. 77 (1980) 219-228.


\bibitem{Dim82}
J. Dimock: 
Dirac quantum fields on a manifold.
Trans. Am. Math. Soc. 269 (1982) 133-147.


\bibitem{Dop11}
S. Doplicher: Spin and Statistics and First Principles. 
Found. Phys. 40 (2010) 719-732.


\bibitem{DP02}
S. Doplicher, G. Piacitelli: 
Any compact group is a gauge group.
Rev. Math. Phys. 14 (2002) 873-885.


\bibitem{DR87}
S. Doplicher, J.E. Roberts: 
Duals of Compact Lie Groups Realized in the Cuntz Algebras and Their Actions on {\it C*}-algebras.
J. Funct. Anal. 74 (1987) 96-120.


\bibitem{DR89}
S. Doplicher, J.E. Roberts: 
A New Duality Theory for Compact Groups.
Inv. Math. 98 (1989) 157-218.




\bibitem{DR90}
S. Doplicher and J.E. Roberts:
Why there is a field algebra with a compact gauge group describing the superselection structure in particle physics.
Commun. Math. Phys. 131 (1990) 51-107.


\bibitem{ES49}
W. Ehrenberg, R. E. Siday:
The refractive index in electron optics and the principles of dynamics. 
Proc. Phys. Soc. B62 (1949) 8-21.


\bibitem{Fey}
R.P.Feynman, R.B.Leighton, M.Sands: 
The Feynman lectures on physics (vol.2), Addison-Wesley Publishing Company, 1964.


\bibitem{Few12}
C.J. Fewster:
Endomorphisms and automorphisms of locally covariant quantum field theories.
Rev. Math. Phys. 25 (2013) 1-47.



\bibitem{Fre}
K. Fredenhagen: 
Generalizations of the theory of superselection sectors. 
In: The algebraic theory of superselection sectors. Introduction and recent results, ed. D. Kastler 
World Sci. Publ., Singapore, New Jersey, London, Hong Kong (1990) 379-387.


\bibitem{GLRV01}
D. Guido, R. Longo, J.E. Roberts, R. Verch: 
Charged sectors, spin and statistics in quantum field theory on curved spacetimes.
Rev. Math. Phys. 13 (2001) 125-198.


\bibitem{Haa}
R. Haag. Local Quantum Physics. Springer Texts and Monographs in Physics, 1996, 2nd edition.


\bibitem{Kob}
S. Kobayashi:
Differential Geometry of complex vector bundles, Princeton University Press, 1987.


\bibitem{Ish78}
C.J. Isham:
Spinor Fields in Four Dimensional Space-Time.
Proc. Roy. Soc. Lond. A 364 (1978) 591-599.


\bibitem{Ish}
C.J. Isham: 
Twisted Quantum Fields in a Curved Space-Time.
Proc. R. Soc. London A. 362 (1710) 383-404. \\




\bibitem{FL}
B. Lang:
Twisted Quantum Fields in Curved Spacetimes \'a la C.J. Isham from the point of view of Algebraic Quantum Field Theory.
PhD Thesis, in preparation, Adv.: C.J. Fewster.



\bibitem{ML}
S. Mac Lane: Categories for the working mathematician, Springer Verlag, Berlin - Heidelberg - New York, 1998.



\bibitem{Qui}
D. Quillen:
Higher algebraic K-theory. I, 
in Algebraic K-Theory, I: Higher K-Theories, 
Lecture Notes in Mathematics 341 Springer, Berlin, 1973.



\bibitem{Rob0}
J.E. Roberts: Lectures on Algebraic Quantum Field Theory. 
In: The algebraic theory of superselection sectors. Introduction and recent results, ed. D. Kastler 
World Sci. Publ., Singapore, New Jersey, London, Hong Kong (1990) 1-112.


\bibitem{Rob}
J.E. Roberts: 
More Lectures on Algebraic Quantum Field Theory.
In: Noncommutative geometry: lectures given at the C.I.M.E. summer school held in Martina Franca, Italy, 
Lecture notes in mathematics 1831, Springer-Verlag (2000) 263-339.


\bibitem{RR06}
J.E. Roberts, G. Ruzzi: 
A cohomological description of connections and curvature over posets.
Theory and Applications of Categories 16 (2006) 855-895.


\bibitem{RRV07}
J.E. Roberts, G. Ruzzi, E. Vasselli:
A theory of bundles over posets.
Adv. Math. 220 (2009) 125-153.


\bibitem{RRV08}
J.E. Roberts,  G. Ruzzi, E. Vasselli:
Net bundles over posets and K-theory.
Int. J. Math. 24 (2013) 1-34.


\bibitem{Ruz05}
G. Ruzzi: 
Homotopy of posets, net-cohomology and superselection sectors in globally hyperbolic spacetimes.
Rev. Math. Phys. 17 (2005) 1021-1070.


\bibitem{Ruz05a}
G. Ruzzi: 
Punctured Haag duality in locally covariant quantum field theories.
Comm. Math. Phys. 256 (2005) 621-634.


\bibitem{RV11}
G. Ruzzi, E. Vasselli: 
A new light on nets of {\it C*}-algebras and their representations.
Comm. Math. Phys. 312 (2012) 655-694.

\noindent
Representations of nets of {\it C*}-algebras over S1.
Comm. Math. Phys. 316 (2012) 127-149.


\bibitem{Vas09}
E. Vasselli: 
Bundles of {\it C*}-categories, II: {\it C*}-dynamical systems and Dixmier-Douady invariants.
J. Funct. Anal. 257 (2009) 357-387. 


\bibitem{Vas12}
E. Vasselli: 
Presheaves of symmetric tensor categories and nets of {\it C*}-algebras.
To appear in Journal of Noncommutative Geometry, arXiv:1210.4302.


\bibitem{VasX}
E. Vasselli: 
A theory of gerbes over posets.
In preparation. 


\bibitem{Ver97}
R. Verch:
Continuity of symplectically adjoint maps and the algebraic structure of Hadamard vacuum representations for quantum fields in curved spacetime.
Rev. Math. Phys. 9 (1997) 635-677.

\end{thebibliography}
\end{document}